\theoremstyle{definition}
\newtheorem{definition}{Definition}[section]
\newtheorem{remark}[definition]{Remark}
\theoremstyle{plain}
\newtheorem{lemma}[definition]{Lemma}
\newtheorem{example}[definition]{Example}
\newtheorem{corollary}[definition]{Corollary}
\newtheorem{theorem}[definition]{Theorem}
\newtheorem{proposition}[definition]{Proposition}
\newtheorem{fact}[definition]{Fact}
\Crefname{fact}{Fact}{Facts}
\theoremstyle{remark}
\newtheorem{claim}{Claim}
\Crefname{claim}{Claim}{Claims}
\newenvironment{claimproof}[1][Proof of Claim]{\begin{proof}[#1] }{ \end{proof}}
\setlist[enumerate, 1]{font=\upshape, noitemsep, nolistsep}
\setlist[enumerate, 2]{font=\upshape, noitemsep, nolistsep}
\setlist[itemize, 1]{noitemsep, nolistsep,font=\upshape}
\setlist[itemize, 2]{noitemsep, nolistsep,font=\upshape}
\newcommand{\abs}[1]{\left| #1 \right|}
\renewcommand\phi\varphi
\renewcommand\epsilon\varepsilon
\DeclareMathOperator{\soe}{soe}
\DeclareMathOperator{\atp}{atp}
\DeclareMathOperator{\gca}{gca}
\title{Weisfeiler--Leman and Graph Spectra\footnote{A conference version \cite{rattan_weisfeiler_2023} of this article appeared at SODA 2023.}}
\newcommand{\orcid}[1]{\href{https://orcid.org/#1}{\includegraphics[height=1.8ex]{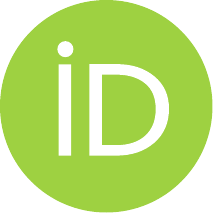}}
}
\author{Gaurav Rattan \orcid{0000-0002-5095-860X}\\
	\smaller RWTH Aachen University \\
	\smaller \texttt{rattan@cs.rwth-aachen.de}
	\and Tim Seppelt \orcid{0000-0002-6447-0568}\\
	\smaller RWTH Aachen University \\
	\smaller \texttt{seppelt@cs.rwth-aachen.de}}
\DeclareMathOperator{\id}{id}
\newcommand\multiset[1]{\left\{ \!\! \left\{ #1 \right\} \!\! \right\}}
\newcommand{\CC}{\mathbb{C}}
\newcommand{\WL}{\textup{\textsf{WL}}\xspace}
\newcommand{\pin}{\textup{in}}
\newcommand{\pout}{\textup{out}}
\newcommand{\Graph}{\mathsf{Graph}}
\newcommand{\C}{\mathfrak{C}}
\newcommand{\ot}{\leftarrow}
\DeclareMathOperator{\Spec}{Spec}
\DeclareMathOperator{\Adj}{Adj}
\DeclareMathOperator{\dist}{dist}
\DeclareMathOperator{\tr}{tr}
\DeclareMathOperator{\obj}{obj}
\DeclareMathOperator{\mor}{mor}
\DeclareMathOperator{\EM}{EM}
\DeclareMathOperator{\Lab}{Labelled}
\renewcommand\vec\boldsymbol
\newcommand\matvec\mathsf
\renewcommand\phi\varphi
\setlist[enumerate, 1]{font=\upshape}  \setlist[enumerate, 2]{font=\upshape}
\setlist[itemize, 1]{font=\upshape}
\setlist[itemize, 2]{font=\upshape}
\begin{document}

	\maketitle

	\begin{abstract}

	Two simple undirected graphs are \emph{cospectral} if their respective adjacency matrices have the same multiset of eigenvalues. Cospectrality yields an equivalence relation on the family of graphs which is provably weaker than isomorphism. In this paper, we study cospectrality in relation to another well-studied relaxation of isomorphism, 
	namely $k$-dimensional Weisfeiler--Leman ($k$-\WL) indistinguishability. 
	
	Cospectrality with respect to standard graph matrices such as the adjacency or the Laplacian matrix yields a strictly finer equivalence relation than $2$-\WL indistinguishability. We show that individualising one vertex plus running $1$-\WL already subsumes cospectrality with respect to all such graph matrices. Building on this result, we resolve an open problem of F{\"u}rer~(2010) about spectral invariants.

	Looking beyond $2$-\WL, we devise a hierarchy of graph matrices generalising the adjacency matrix such that $k$-\WL indistinguishability after a fixed number of iterations can be captured as a spectral condition on these matrices. Precisely, we provide a spectral characterisation of $k$-\WL indistinguishability after $d$ iterations, for $k,d \in \mathbb{N}$. 
	Our results can be viewed as characterisations of homomorphism indistinguishability over certain graph classes in terms of matrix equations. The study of homomorphism indistinguishability is an emerging field, to which we contribute by extending the algebraic framework of Man{\v c}inska and Roberson (2020) and Grohe et al.\@ (2022).
\end{abstract}

\section{Introduction}
\label{sec:intro}

The algebraic properties of the adjacency matrix of a graph can reveal a great deal of information about the structure of the graph. In particular, the spectra of certain matrices associated with a graph, such as the adjacency matrix or the Laplacian matrix, can already determine several important graph properties such as connectedness and bipartiteness~\cite{brouwer_spectra_2012}. Moreover, they can be used to approximate several important graph parameters such as graph expansion~\cite{CHGR} and chromatic number~\cite{SGTcol}. A well-known result of this kind is Kirchhoff's Matrix Tree Theorem which asserts that the number of spanning trees of a connected graph is equal to the product of its non-zero Laplacian eigenvalues, cf.\@~\cite{godsil_algebraic_2004}. Apart from the adjacency matrix and the Laplacian matrix, several other graph matrices such as the signless or normalised Laplacian, the Seidel matrix, etc.\@ have been utilised to obtain further insights into several graph properties and parameters~\cite{von_luxburg_tutorial_2007, cvetkovic_signless_2007, van_dam_which_2003}. In algebraic graph theory, the spectra of such graph matrices are collectively referred to as spectral graph invariants. 

Two simple undirected graphs are said to be \emph{cospectral} if their respective adjacency matrices have the same multiset of eigenvalues. Whereas isomorphic graphs are necessarily cospectral, the converse does not hold in general. Nevertheless, cospectrality is an important equivalence relation on graphs which continues to be extensively studied in algebraic graph theory. The notion of cospectrality can be easily extended to graph matrices other than the adjacency matrix: each such graph matrix induces a corresponding equivalence relation on graphs~\cite{van_dam_which_2003}. The main objective of this paper is to analyse cospectrality in relation to the well-studied relaxations of isomorphism arising from algebra, combinatorics, and logic, see e.g.~\cite{immerman_canon_90, AtseriasM13, DellGR18}. In particular, we focus on the equivalence relations arising from the Weisfeiler--Leman framework~\cite{weisfeiler_construction_76}. Formally, the $k$-dimensional Weisfeiler--Leman ($k$-\WL) is a combinatorial procedure which iteratively partitions the set of $k$-tuples of vertices into finer equivalence classes based on their local neighbourhoods, cf.\@~\cite{grohe_logic_2021}. In the parlance of $k$-\WL, this yields a so-called stable colouring function which associates a colour to every $k$-tuple of vertices. Two graphs are \emph{$k$-\WL indistinguishable} if the $k$-\WL procedure generates the same multiset of stable vertex-tuple colours for the two graphs. For each $k \in \mathbb{N}$, the $k$-\WL procedure thus yields an equivalence relation on the domain of graphs, namely \emph{$k$-\WL indistinguishability}, and moreover, letting $k \in \mathbb{N}$ vary, an infinite hierarchy of strictly finer equivalence relations~\cite{cai_optimal_1992}. Hence, there is no fixed dimension $k$ such that $k$-\WL indistinguishability coincides with isomorphism. Importantly, the \WL framework has been shown to admit equivalent formalisms in descriptive complexity~\cite{immerman_canon_90}, convex optimisation~\cite{AtseriasM13}, and substructure counting~\cite{DellGR18}. The fact that $k$-WL is established as a powerful and expressive framework which captures a wide variety of equivalence relations on the domain of graphs justifies our interest in using the $k$-\WL framework as a yardstick for analysing cospectrality. It may be noted that for most natural graph classes, there exists a fixed $k \in \mathbb{N}$ such that $k$-\WL indistinguishability coincides with the isomorphism relation on the given graph class. The prominent examples include trees~\cite{AHU, immerman_canon_90}, planar graphs~\cite{MGplnr, kiefer_planar_19}, and graphs with excluded minors~\cite{grohe_excluded_12}.

Our line of investigation has a concrete practical motivation in the context of the recent work in the field of machine learning with graph data. In the emerging field of graph representation learning, the design of graph neural networks (GNNs) for achieving robust inference on graphs has received much attention~\cite{Hamilton}. There has been considerable recent work on measuring the expressive power GNNs, i.e.\@ whether a GNN can generate distinct representations for non-isomorphic graphs. It turns out that the expressive power of the most commonly used GNNs, namely \emph{message-passing} GNNs, is equal to $1$-\WL indistinguishability~\cite{XHLJ19,Morriswlgoneural}, i.e.\@ a message-passing GNN can generate distinct representations for two graphs if and only if they are $1$-\WL distinguishable. On the other hand, \emph{spectral} GNNs work in a very different manner to message-passing networks: they learn a composition of spectral transformations on the adjacency matrix of the graph. Moreover, several machine learning models directly work with the eigenvectors of the adjacency matrix, arising from transformations such as principal component analysis, matrix factorisation, etc.\@~\cite{SBGNN}. We believe that our work will serve as a contribution towards studying the expressive power of combination of spectral invariants or representations with combinatorial algorithms and message passing GNNs.

\subsection{Our Results: Graph spectra between $1$-\WL and $2$-\WL} 
\label{sec:results}
Our first set of results concerns the cospectrality relation induced by the usual graph matrices, such as the adjacency matrix and Laplacian matrix, whose distinguishing power is well-known to be subsumed by $2$-\WL.
A systematic study of spectral invariants in the regime below $2$-\WL was undertaken in \cite{furer_power_2010} where the following spectral invariant was proposed. 
\begin{definition}[Fürer's spectral invariant \cite{furer_power_2010}]\label{def:furer-inv}
Let $A(G)$ be the adjacency matrix of a graph~$G$. Let $\lambda_1 < \dots < \lambda_k$ denote the eigenvalues of $A(G)$ and write $P^{(1)}, \dots, P^{(k)}$ for the projection maps onto the respective eigenspaces. Define \emph{Fürer's spectral invariant} $\mathcal{F}(G)$ as 
\[
\mathcal{F}(G) \coloneqq
\left( \Spec A(G), 
\multiset{P_v \ \middle|\ v \in V(G)}
\right)
\]
where $P_v \coloneqq \left(p_{vv}, \multiset{p_{vw} \ \middle|\ w \in V(G)} \right)$,
$p_{vw} \coloneqq \left(P^{(1)}_{vw}, \dots, P^{(k)}_{vw}\right)$ for $v,w \in V(G)$,
and $\Spec A(G)$ is the multiset of eigenvalues of $A(G)$ with geometric multiplicities.
\end{definition}

While Fürer~\cite{furer_power_2010} showed that $\mathcal{F}$ is determined by $2$-\WL, they conjectured that the former is strictly weaker than the latter. We resolve this conjecture in the affirmative by introducing the notion of \emph{$(1,1)$-\WL indistinguishability}. 
The $(1,1)$-\WL test processes graphs $G$ and $H$ by comparing vertex-individualised copies of the graphs. For a vertex $v \in V(G)$, the \emph{vertex-individualised graph~$G_v$} is the coloured graph obtained from $G$ by colouring $v$ orange and all other vertices grey.
Note that vertex individualisation is a standard tool from the isomorphism testing literature \cite{neuen_exponential_2018}. On the practical side, augmenting GNNs with information about individualised subgraphs is a direction of current research~\cite{chendi_rattan}.

\begin{definition}\label{def:oneonewl}
Two graphs $G$ and $H$ are \emph{$(1,1)$-\WL indistinguishable} if there exists a bijection $\pi\colon V(G) \to V(H)$ such that for all $v \in V(G)$ the vertex-individualised graphs $G_v$ and $H_{\pi(v)}$ are $1$-\WL indistinguishable. Such a bijection $\pi$ is said to be \emph{colour-preserving}.
\end{definition}

It is easy to see that $(1,1)$-\WL lies between $1$-\WL and $2$-\WL in terms of its ability to distinguish non-isomorphic graphs. 
We prove that these inclusions are strict by constructing appropriate examples in \cref{ex:weaker}.
We then proceed to answer the question left open in \cite{furer_power_2010}.

\begin{theorem}\label{thm:specinv-intro}\label{cor:cospectrality-intro}
If $G$ and $H$ are $(1,1)$-\WL indistinguishable graphs then $\mathcal{F}(G) = \mathcal{F}(H)$. Hence, Fürer's spectral invariant $\mathcal{F}(\cdot)$ is strictly weaker than 
$2$-\WL.
\end{theorem}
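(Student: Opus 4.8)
The plan is to exploit that running $1$-\WL on the vertex-individualised graph $G_v$ recovers, for every vertex $u$ and every $m \ge 0$, the number $(A(G)^m)_{uv}$ of length-$m$ walks from $u$ to $v$, and then to convert this walk data into the entries of the spectral projectors $P^{(i)}$ via Lagrange interpolation. I would therefore split the proof into a combinatorial part establishing cospectrality of $G$ and $H$ together with suitable matchings of walk counts, and a linear-algebraic part turning those matchings into $\multiset{P_v \mid v \in V(G)} = \multiset{P_w \mid w \in V(H)}$.

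For the combinatorial part, fix a colour-preserving bijection $\pi \colon V(G) \to V(H)$ witnessing $(1,1)$-\WL indistinguishability and fix $v \in V(G)$. A routine induction on $m$ shows that the colour of a vertex $u$ after $m$ rounds of $1$-\WL in $G_v$ determines the number of length-$m$ walks from $u$ to the unique orange vertex, that is, $(A(G)^m)_{uv}$. Since $G_v$ and $H_{\pi(v)}$ have the same multiset of stable $1$-\WL colours, and $v$ and $\pi(v)$ are the unique orange vertices of $G_v$ and $H_{\pi(v)}$ respectively, this yields
\[
\multiset{ \big((A(G)^m)_{uv}\big)_{m \ge 0} \ \middle|\ u \in V(G) } = \multiset{ \big((A(H)^m)_{w\pi(v)}\big)_{m \ge 0} \ \middle|\ w \in V(H) }
\]
together with $\big((A(G)^m)_{vv}\big)_{m \ge 0} = \big((A(H)^m)_{\pi(v)\pi(v)}\big)_{m \ge 0}$. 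Summing the latter over all $v$ gives $\tr A(G)^m = \tr A(H)^m$ for every $m$, hence $G$ and $H$ are cospectral; in particular $\Spec A(G) = \Spec A(H)$ and the two adjacency matrices have the same list $\lambda_1 < \dots < \lambda_r$ of distinct eigenvalues.

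For the linear-algebraic part, note that $P^{(i)} = \prod_{j \ne i} (A - \lambda_j I)/(\lambda_i - \lambda_j)$ is a polynomial in $A$ of degree $r - 1$ whose coefficients depend only on $\lambda_1, \dots, \lambda_r$; as this list is shared by $G$ and $H$, there is a single linear map $L$ with $p_{uv} = L\big((A^0)_{uv}, \dots, (A^{r-1})_{uv}\big)$ valid in both graphs. Applying $L$ to the two relations above gives $p_{vv}(G) = p_{\pi(v)\pi(v)}(H)$ and $\multiset{p_{uv}(G) \mid u \in V(G)} = \multiset{p_{w\pi(v)}(H) \mid w \in V(H)}$ for every $v$. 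Since each $P^{(i)}$ is symmetric, the multiset on the left equals $\multiset{p_{vw}(G) \mid w \in V(G)}$ and the one on the right equals $\multiset{p_{\pi(v)w}(H) \mid w \in V(H)}$, so $P_v(G) = P_{\pi(v)}(H)$ for all $v$; hence $\multiset{P_v \mid v \in V(G)} = \multiset{P_w \mid w \in V(H)}$ and, combined with $\Spec A(G) = \Spec A(H)$, we get $\mathcal{F}(G) = \mathcal{F}(H)$. Strictness then follows by combining this with Fürer's theorem that $\mathcal{F}$ is determined by $2$-\WL \cite{furer_power_2010} and the pair of graphs of \cref{ex:weaker}, which are $(1,1)$-\WL indistinguishable but $2$-\WL distinguishable.

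The main obstacle is conceptual rather than computational: the matching of walk vectors obtained in the combinatorial part lives inside a single fixed column $v$ and is not induced by one global permutation of the vertex set, so one does not get an entrywise identification of the projector matrices but only the diagonal value $p_{vv}$ and the multiset $\multiset{p_{vw} \mid w}$. The argument goes through precisely because Fürer's invariant is assembled from exactly this data; the remaining point to be careful about is that the orange colour marks a single vertex, which is what makes an ``orange walk'' of length $m$ from $u$ in $G_v$ literally a walk from $u$ to $v$, so that the quantity seen by $1$-\WL is exactly $(A(G)^m)_{uv}$ and not some coarser walk statistic.
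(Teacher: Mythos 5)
Your proof is correct, but it takes a genuinely different route from the paper. You work directly with the adjacency matrix: colour refinement on the individualised graph $G_v$ determines the walk-count vectors $\bigl((A(G)^m)_{uv}\bigr)_{m\ge 0}$ (your induction is the standard one and is sound, since the orange class is a singleton), traces of powers give cospectrality, and Lagrange interpolation $P^{(i)} = \prod_{j\neq i}(A-\lambda_j I)/(\lambda_i-\lambda_j)$ turns the matched walk data into matched projector entries; symmetry of the $P^{(i)}$ correctly converts the column multisets you obtain into the row multisets appearing in Fürer's $P_v$. The paper instead proves the statement as a corollary of a more general machinery: it introduces \emph{equitable matrix maps} (\cref{def:emm}), shows the adjacency and degree maps are equitable (\cref{lem:adjacencymatrix}), that the class is closed under algebraic operations and, crucially, under passing to eigenspace projections (\cref{lem:proj}), and then shows via \cref{thm:colentry,thm:specinv} that entries and spectra of \emph{all} such maps are determined by $(1,1)$-\WL. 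What the paper's approach buys is generality: the same argument covers Laplacian, Seidel, normalised Laplacians, heat kernels, and is reused for the commute-distance result (\cref{thm:rw}); what your approach buys is a short, elementary, self-contained proof tailored to $\mathcal{F}$ that avoids the equitable-matrix-map formalism entirely. Your derivation of strictness (Fürer's result that $2$-\WL determines $\mathcal{F}$ together with \cref{ex:weaker}) is exactly the paper's logic.
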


For the proof of this theorem, we set up a general formalism of \emph{equitable matrix maps}, a generalisation of graph matrices such as the adjacency matrix or the Laplacian. An equitable matrix map~$\varphi$ associates with every graph~$G$ a matrix $\phi(G) \in \mathbb{C}^{V(G) \times V(G)}$ which preserves certain subspaces of $\mathbb{C}^{V(G)}$ induced by $1$-\WL (see \cref{sec:emm} for details). Given a graph $G$, let $\mathfrak{E}(G)$ denote the set of all matrices of the form $\varphi(G)$, where $\varphi$ is an equitable matrix map. It turns out that $\mathfrak{E}(G)$ forms an algebra and contains most of the well-known matrices associated with graphs, such as the adjacency matrix, Laplacian matrix, and the Seidel matrix. To show that $(1,1)$-\WL is sufficiently powerful to determine all of the well-known spectral invariants in this regime, we provide a spectral characterisation of $(1,1)$-\WL indistinguishable graphs in terms of equitable matrix map, cf.\@  \Cref{thm:specinv,rem:specinv}.
As an application of the above framework, we parallel a result of Godsil~\cite{godsil_equiarboreal_1981} asserting that the multiset of commute distances, a property of random walks on a graph, is determined by $2$-\WL. In fact, the hitting time, the nonsymmetric counterpart of the commute distance, is already determined by $(1,1)$-\WL.\footnote{In an earlier versions of this work including \cite{rattan_weisfeiler_2023}, we wrongly claimed that $(1,1)$-\WL indistinguishable graphs have the same multiset of commute distances. 
	Floris Geerts and Marek Cerný pointed out to us that this is not true.
	In fact, the graphs in \cref{fig:cntex} are a counterexample.}

\begin{theorem}\label{thm:rw}\label{prop:main}\label{thm:comdist}
If graphs $G$ and $H$ are $(1,1)$-\WL indistinguishable then they have the same multiset of hitting times. 
Hence, the multiset of hitting times is a strictly weaker invariant than~$2$-\WL.
\end{theorem}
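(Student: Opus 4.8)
The plan is to reduce the statement about commute distances to the spectral characterisation of $(1,1)$-\WL indistinguishability furnished by the equitable matrix map framework (\Cref{thm:specinv,rem:specinv}). Recall that the commute distance between vertices $u,v$ in a connected graph $G$ can be written in closed form via the Moore--Penrose pseudoinverse $L^+$ of the Laplacian $L = L(G)$, namely $C_G(u,v) = \vol(G)\cdot\bigl(L^+_{uu} + L^+_{vv} - 2L^+_{uv}\bigr)$, where $\vol(G) = \sum_{w} \deg(w) = 2\abs{E(G)}$. Equivalently, using the spectral decomposition $L = \sum_{i\geq 2} \mu_i Q^{(i)}$ with $0 = \mu_1 < \mu_2 \leq \dots$ and eigenprojections $Q^{(i)}$, one has $L^+ = \sum_{i\geq 2} \mu_i^{-1} Q^{(i)}$, so that $C_G(u,v)$ is a fixed linear combination of the entries $Q^{(i)}_{uu}, Q^{(i)}_{vv}, Q^{(i)}_{uv}$ weighted by $\vol(G)/\mu_i$. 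The multiset of commute distances of $G$ is therefore $\multiset{C_G(u,v) \mid \{u,v\}\in\binom{V(G)}{2}}$ (one should also fix a convention for disconnected graphs, e.g.\ working componentwise or declaring the distance $\infty$ across components; $1$-\WL already detects the component structure, so this is harmless).

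First I would observe that $L(G)$ is itself the value $\varphi(G)$ of an equitable matrix map, and hence $L^+(G)$ lies in the algebra $\mathfrak{E}(G)$: the pseudoinverse of a matrix in a finite-dimensional commutative $*$-subalgebra containing that matrix is again in the algebra (it is a polynomial in $L$ on the orthogonal complement of $\ker L$, with the kernel component projected out, and the relevant spectral projections belong to $\mathfrak{E}(G)$ by \Cref{thm:specinv}). Next, since $G$ and $H$ are $(1,1)$-\WL indistinguishable, \Cref{thm:specinv} (together with \Cref{rem:specinv}) supplies a colour-preserving bijection $\pi\colon V(G)\to V(H)$ under which, for every equitable matrix map $\varphi$, the multiset $\multiset{\bigl(\varphi(G)_{vv},\ \multiset{\varphi(G)_{vw}\mid w}\bigr) \mid v\in V(G)}$ is matched entrywise with that of $H$ via $\pi$; in particular the diagonal entries $L^+(G)_{vv}$ match $L^+(H)_{\pi(v)\pi(v)}$, and the off-diagonal multisets $\multiset{L^+(G)_{vw}\mid w}$ match $\multiset{L^+(H)_{\pi(v)w'}\mid w'}$ for each $v$. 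I would also record that $\vol(G) = \vol(H)$ since degree sequences are preserved already by $1$-\WL.

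The main obstacle is that commute distance depends on the \emph{pair} $\{u,v\}$ through the single off-diagonal entry $L^+_{uv}$, whereas the invariant handed to us by $(1,1)$-\WL controls, for each fixed $v$, only the unordered multiset $\multiset{L^+_{vw}\mid w}$ — it does not a priori tell us which $w$ realises which value. To get around this I would use vertex individualisation more carefully: fix $v\in V(G)$, individualise it, and run $1$-\WL on $G_v$; this refines the colouring so that the value $L^+(G)_{vw}$ (equivalently the quantity $L^+_{vv}+L^+_{ww}-2L^+_{vw}$, i.e.\ the commute distance from $v$) becomes a function of the \emph{$1$-\WL colour of $w$ in $G_v$}. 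Because $\pi$ is colour-preserving for the individualised graphs, the refined colour of $w$ in $G_v$ equals the refined colour of $\pi(w)$ in $H_{\pi(v)}$, and moreover the diagonal data $L^+(H)_{\pi(w)\pi(w)}$ and the off-diagonal value $L^+(H)_{\pi(v)\pi(w)}$ are determined by that common colour in the same way (applying \Cref{thm:specinv} to $H_{\pi(v)}$ and the equitable matrix map that on an individualised graph reads off the relevant pseudoinverse entry relative to the orange vertex). Hence $C_G(v,w) = C_H(\pi(v),\pi(w))$ for every edge-pair, and summing/collecting over all $\{v,w\}$ — each counted once after halving — yields equality of the two multisets of commute distances. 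Strictness then follows immediately: by \cref{ex:weaker} there are $(1,1)$-\WL indistinguishable but $2$-\WL distinguishable graphs, so the commute-distance multiset cannot separate them while $2$-\WL does.
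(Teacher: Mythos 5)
Your high-level route coincides with the paper's: express the commute distance through a matrix map in $\mathfrak{E}$ (you via the pseudoinverse $L^+$ of the Laplacian, the paper via the map $K$ of \cref{prop:induced}, obtained by making $I-D^{-1}A$ invertible with a rank-one correction), control its entries by the $(1,1)$-\WL colours, and get strictness from \cref{ex:weaker}. However, the pivotal step is not actually established. The statement you need — that for an equitable matrix map $\phi$ the entry $\phi(G,v,w)$ is determined by the colour $\mathcal{C}^v_G(w)$ \emph{uniformly across the two graphs} — is exactly \cref{thm:colentry}, and it does not follow from \cref{thm:specinv}, which is what you invoke: that theorem only yields, for each $v$, equality of the unordered multisets $\multiset{\phi(G,v,w)\mid w \in V(G)}$, i.e.\ precisely the information you yourself flagged as insufficient. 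The cross-graph entrywise statement needs axiom \ref{l2} (together with \ref{l1} and the fact that $v$ forms a singleton class of $\mathcal{C}^v_G$); your sketch never uses \ref{l2}, so the key step is asserted rather than proved. Relatedly, your bookkeeping claims that the refined colour of $w$ in $G_v$ equals that of $\pi(w)$ in $H_{\pi(v)}$ and concludes $C_G(v,w)=C_H(\pi(v),\pi(w))$; but \cref{def:oneonewl} only guarantees that $G_v$ and $H_{\pi(v)}$ are $1$-\WL indistinguishable, witnessed by some bijection $\sigma_v$ depending on $v$ which need not agree with $\pi$, so the pairwise identity along $\pi$ can fail. The multiset conclusion can be salvaged by working with the $\sigma_v$'s (as in the proof of \cref{thm:specinv}), but as written this step is incorrect.

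Two further points. The disconnected case is not ``harmless'': the formula $C_G(u,v)=2\abs{E(G)}\bigl(L^+_{uu}+L^+_{vv}-2L^+_{uv}\bigr)$ holds only for connected graphs (the factor must be the volume of the component containing $u,v$), and passing to components requires knowing that the colours $\mathcal{C}^s_G(t)$ computed in the whole graph determine the corresponding colours inside the components — this is \cref{lemma:connected}, which the paper proves by a separate induction and which your parenthetical about $1$-\WL detecting components does not replace. Finally, your justification that $L^+\in\mathfrak{E}$ via closure of a per-graph commutative $*$-algebra misses the actual content of \cref{def:emm}: membership in $\mathfrak{E}$ is a condition relating pairs of $(1,1)$-\WL indistinguishable graphs via \ref{l2}, not a per-graph algebraic property; the correct argument writes $L^+=\sum_{\lambda\neq 0}\lambda^{-1}L_\lambda$ and combines \cref{lem:proj} with the equality of Laplacian spectra from \cref{thm:specinv}, exactly as for the heat-kernel example.
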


From a practical viewpoint, our results diminish hopes to simulate $2$-\WL by augmenting $1$-\WL using spectra-based initial vertex colourings. Indeed, such a strategy seems infeasible since the supplementation of spectral information will not result in surpassing the $(1,1)$-\WL barrier. In particular, all invariants based on spectral properties of common graph matrices such as eigenvalues and entries of projections onto eigenspace, or even hitting times are already determined by $(1,1)$-\WL, which is strictly weaker than $2$-\WL. On the positive side, $(1,1)$-\WL indistinguishability constitutes an important equivalence relation on graphs in its own right. It is rich enough to capture a wide range of spectral properties and, in contrary to $2$-\WL, decidable in linear space.

More precisely, while the time complexity of testing two $n$-vertex $m$-edge graphs $G$ and $H$ for $2$-\WL indistinguishability is $O(n^3 \log n)$, it is $O((n+m)\log n)$ for $1$-\WL, cf.\@ \cite{berkholz_tight_2017},  \cite[Proposition~V.4]{grohe_logic_2021}, and \cite[Theorem~4.9.5]{immerman_canon_90}. In particular, for dense graphs, the complexity of $1$-\WL can be $O(n^2 \log n)$. By considering each pair of vertices in $V(G) \times V(H)$ separately, $(1,1)$-\WL can be implemented in time $O(n^2(n+m)\log n)$ and linear space. Crucially, in this case graphs are compared (rather than that one graph is canonised) as it is then not necessary to store the computed colours in a universal way, i.e.\@ such that they are meaningful across all possible input graphs. The time required by $(1,1)$-\WL can be reduced by running $1$-\WL on the pair of graphs obtained from the original input graphs by copying them $n$ times and individualising one vertex in each copy. This yields an $O(n(n+m)\log n)$ algorithm, which requires however quadratic space. The trade-off between space and time complexity merits further investigations.

\subsection{Our Results: Spectral characterisation of $k$-\WL indistinguishability after $d$~iterations}
The set of results described above leads to the following question: do there exist graph matrices such that their spectral properties yields a finer equivalence relation than $2$-\WL, and even $k$-\WL?

We provide an affirmative answer to this question by characterising indistinguishability after $d$~iterations of $k$-\WL in terms of spectral properties of certain matrices. These results depart from the results presented earlier in two notable ways. Firstly, it turns out that we need stronger spectral conditions than just the cospectrality of certain graph matrices. Nevertheless, our stated conditions on the devised graph matrices will be later presented as a spectral characterisation in the sense of representation theory, cf.\@ \cite{grohe_homomorphism_2021}. Secondly, our results stipulate the number of $k$-\WL iterations as an additional parameter, as opposed to examining stable colourings. Due to the fact~\cite{cai_optimal_1992} that two graphs are indistinguishable after $d$~iterations of $k$-\WL if and only if they satisfy the same formulae of the $(k+1)$-variable quantifier-depth-$d$ fragment of first order logic with counting quantifiers, the notion is of much theoretical interest~\cite{dawar_lovasz_2021,lichter_walk_2019,kiefer_iteration_2020}.
On the practical side, it was recently shown~\cite{Morriswlgoneural} that two graphs are indistinguishable by a message passing GNN (resp.\@ $k$-GNN) with $d$ layers if and only if they are indistinguishable after $d$ iterations of $1$-\WL (resp.\@ $k$-\WL). Due to resource limitations, $k$ and $d$ are usually chosen to be small constants, e.g.\@ $k \in \{1,2\}$ and $d \in \{3,4,5\}$, in application scenarios. This applies both when executing \WL and when deploying GNNs. Hence, the distinguishing power of $k$-\WL after a fixed number of iterations is of practical interest. 
Moreover, our results serve as a possible pathway for incorporating spectral tools in the design and analysis of more powerful variants of the $k$-\WL algorithm and higher-order GNNs. Devising such variants has recently received much attention \cite{morris_sparsewl,chendi_rattan}.

We return to a formal description of our results. The aforementioned generalisations of the adjacency matrix that we devise are instances of \emph{matrix maps}: they assign to every graph $G$ a matrix of certain dimension. For example, the \emph{adjacency matrix map} assigns to a graph $G$ its adjacency matrix $\boldsymbol{A}_G$. Aiming at comparability with the $k$-\WL hierarchy, we take a fine-grained approach and construct in \cref{ssec:fingen} for every $k \geq 1$ and $d \geq 0$ a family of matrix maps $\mathcal{A}_k^d$
whose spectra encode the information extracted by $k$-\WL after $d$ iterations.

Our efforts result in the following theorem. 
Note that its second assertion implies that the \emph{main part}~\cite{cvetkovic_eigenspaces_1997} of the spectrum of $\widehat{\boldsymbol{A}}_G$ and  $\widehat{\boldsymbol{A}}_H$ are the same, cf.\@ \cite[Proof of Theorem 2]{DellGR18}. A sufficient condition on the main parts of the spectra is harder to formulate. Nevertheless, \cref{thm:main-intro} is a spectral characterisation in the sense that it is essentially a statement on characters of certain representations, cf.\@ \cite{grohe_homomorphism_2021}.
Here, a matrix is \emph{pseudo-stochastic} if its row and column sums are one.
\begin{theorem}\label{thm:main-intro}
 Let $k \geq 1$ and $d \geq 0$.
 Given two graphs $G$ and $H$, the following are equivalent:
 \begin{enumerate}
 \item $G$ and $H$ are indistinguishable after $d$ iterations of $k$-\WL. 
 \item There exists a pseudo-stochastic matrix $X$ such that for every matrix map 
 $\widehat{\boldsymbol{A}} \in \mathcal{A}_k^d$, the associated matrices 
 $\widehat{\boldsymbol{A}}_G$ and $\widehat{\boldsymbol{A}}_H$ for graphs $G$ and $H$ satisfy the conditions 
 $X \widehat{\boldsymbol{A}}_G = \widehat{\boldsymbol{A}}_H X$.
 \end{enumerate}
\end{theorem}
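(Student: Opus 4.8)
The plan is to route the proof through homomorphism indistinguishability and the homomorphism-tensor framework, extending the algebraic machinery of Man\v{c}inska--Roberson and Grohe et al. The crucial preparatory fact is the combinatorial characterisation established in \cref{ssec:fingen}: two graphs $G$ and $H$ are indistinguishable after $d$ iterations of $k$-\WL if and only if $\hom(F,G) = \hom(F,H)$ for every pattern graph $F$ in a suitable class $\mathcal{F}_k^d$ of bounded-width, bounded-depth graphs (matching $k$ and $d$ respectively), and the matrix maps collected in $\mathcal{A}_k^d$ are precisely the homomorphism matrices of the gadget graphs, equipped with a distinguished \pin- and \pout-boundary, out of which every $F \in \mathcal{F}_k^d$ is assembled. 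With this in hand, the theorem reduces to the assertion that homomorphism indistinguishability over $\mathcal{F}_k^d$ is equivalent to the existence of a single pseudo-stochastic matrix $X$ intertwining $\widehat{\boldsymbol{A}}_G$ and $\widehat{\boldsymbol{A}}_H$ simultaneously for all $\widehat{\boldsymbol{A}} \in \mathcal{A}_k^d$.

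For the implication (2)$\Rightarrow$(1), I would argue by structural induction on the construction of a pattern graph $F \in \mathcal{F}_k^d$ from its gadgets. The class $\mathcal{F}_k^d$ is closed under the gluing operations that correspond on the level of homomorphism matrices to ordinary matrix multiplication and to the Schur (Hadamard) product; hence, starting from the hypothesis $X\widehat{\boldsymbol{A}}_G = \widehat{\boldsymbol{A}}_H X$ on the generators, one checks that conjugation by $X$ carries $\Hom(F,G)$ to $\Hom(F,H)$ for every $F \in \mathcal{F}_k^d$. The matrix-product step is immediate; the Schur-product step is where pseudo-stochasticity is used, since it requires $X$ to act coherently on a product basis. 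Contracting the \pin- and \pout-indices against the all-ones vectors, which $X$ fixes because $X\mathbf{1} = \mathbf{1}$ and $\mathbf{1}^\top X = \mathbf{1}^\top$, then yields $\hom(F,G) = \hom(F,H)$, and the characterisation from \cref{ssec:fingen} gives the $k$-\WL conclusion.

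The implication (1)$\Rightarrow$(2) is the substantial direction and is where the algebraic framework is really needed. Given homomorphism indistinguishability over $\mathcal{F}_k^d$, I would form the finite-dimensional spaces $\mathcal{M}_G$ and $\mathcal{M}_H$ spanned by the homomorphism matrices of all gadgets underlying $\mathcal{F}_k^d$, equipped with the bilinear form $\langle M, N \rangle = \mathbf{1}^\top (M \circ N)\mathbf{1}$ and their matrix- and Schur-product structure; homomorphism indistinguishability forces the two structures to be isomorphic in a form-preserving way. The heart of the argument is then to upgrade this abstract isomorphism to a concrete pseudo-stochastic matrix $X$ on the relevant index space, which is exactly the Man\v{c}inska--Roberson / Grohe--Rattan--Seppelt passage from a system of homomorphism equations to a pseudo-stochastic solution; relative to those works the new twists are that the generating family $\mathcal{A}_k^d$ is \emph{finite} and that the closure defining $\mathcal{F}_k^d$ has bounded depth $d$, so one must verify that intertwining the finitely many generators already propagates to all of $\mathcal{F}_k^d$ and that the closure introduces no pattern graphs outside the intended class.

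The main obstacle, accordingly, is this last step: producing the matrix $X$ itself---rather than merely an isomorphism of abstract algebras---from homomorphism indistinguishability, and doing so with a finite generating set of matrix maps. This is where a dimension-counting and positivity argument in the style of the cited algebraic framework is required, together with the bookkeeping needed to confirm that $\mathcal{A}_k^d$ is simultaneously small enough to be a genuine finite family of matrix maps and rich enough that the common intertwiner exists precisely when $d$ iterations of $k$-\WL fail to separate $G$ and $H$.
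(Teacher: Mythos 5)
Your overall route (reduce to homomorphism indistinguishability over the pebble-forest-cover class, use the finite family $\mathcal{A}_k^d$ as generators, and invoke the Man\v{c}inska--Roberson/Grohe et al.\@ machinery to pass between homomorphism equations and a pseudo-stochastic intertwiner) is the paper's route, but your proof of (2)$\Rightarrow$(1) contains a genuine gap. You build pattern graphs by structural induction using both matrix products \emph{and} Schur products, and you claim pseudo-stochasticity handles the Schur step. It does not: $X\widehat{\boldsymbol{A}}_G = \widehat{\boldsymbol{A}}_H X$ and $X\widehat{\boldsymbol{B}}_G = \widehat{\boldsymbol{B}}_H X$ for a pseudo-stochastic $X$ do not imply that $X$ intertwines entrywise products of homomorphism matrices; preserving Schur products is exactly what requires the much stronger (quantum permutation / projection-valued) matrices in the planar-graph setting. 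The whole point of the augmented labels, the designation of $p_\pin,p_\pout$, and the join graphs $\boldsymbol{J}^\ell$ in \cref{defn:gadj} is that no Schur product is ever needed: \cref{prop:fingen} factors every element of $\mathcal{WL}_k^d$ as a \emph{series composition} of generators alone, and then (2)$\Rightarrow$(1) follows by first extracting $\abs{V(G)} = \abs{V(H)}$ from the identity generator, padding an arbitrary $F \in \mathcal{PFC}_k^d$ with isolated vertices so that $F = \soe \widehat{\boldsymbol{F}}$ for some $\widehat{\boldsymbol{F}} \in \mathcal{WL}_k^d$, and using that intertwining plus $X\boldsymbol{1} = \boldsymbol{1}$, $\boldsymbol{1}^T X = \boldsymbol{1}^T$ immediately equates sums of entries along the factorisation.

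For (1)$\Rightarrow$(2) you correctly point at the right family of ideas but stop at declaring it ``the main obstacle''; the concrete tool is the sum-of-entries Specht--Wiegmann theorem (\cref{thm:soe}), and the substantive verification it requires is combinatorial, not representation-theoretic: one must check that $\mathcal{A}_k^d$ is closed under reversal and that every word in the generators and their adjoints evaluates, under the augmented homomorphism representation, either to the zero matrix (the $\bot$ case) or to $\widehat{\boldsymbol{F}}_G$ for some $\widehat{\boldsymbol{F}} \in \mathcal{WL}_k^d$ whose unlabelling lies in $\mathcal{PFC}_k^d$ (\cref{lem:wlkd-closure,prop:ops}, where closure under series composition includes the non-obvious common-ancestor condition \ref{b6}); only then does homomorphism indistinguishability over $\mathcal{PFC}_k^d$ (which is \cref{cor:camb}, not \cref{ssec:fingen}) give $\soe(w_{\matvec{A}}) = \soe(w_{\matvec{B}})$ for all words, whence \cref{thm:soe} produces the pseudo-stochastic $X$. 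No dimension-counting or positivity argument beyond \cref{thm:soe} is needed, and no abstract algebra isomorphism has to be ``upgraded'' to a matrix; but without the closure-of-words verification and without removing the Schur-product step, the proposal does not yet constitute a proof.
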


The proof of \cref{thm:main-intro} builds on a framework which has recently been developed for studying homomorphism indistinguishability. Two graphs $G$ and $H$ are \emph{homomorphism indistinguishable} over a class of graphs $\mathcal{F}$ if for every $F \in \mathcal{F}$ the number of homomorphism $F \to G$  equals the number of homomorphisms $F \to H$. For example, $G$ and $H$ are homomorphism indistinguishable over cycles iff their adjacency matrices $\boldsymbol{A}_G$ and $\boldsymbol{A}_H$ are cospectral, which in turn can be characterised by the existence of an orthogonal matrix $X$ such that $X\boldsymbol{A}_G = \boldsymbol{A}_H X$, cf.\@ e.g.\@ \cite{DellGR18,grohe_homomorphism_2021}. It has emerged recently that homomorphism indistinguishability over various graph classes characterises a diverse range of equivalence relations on graphs, of which many can be represented as matrix equations  \cite{Lovasz67,DellGR18,dvorak_recognizing_2010,grohe_counting_2020,dawar_lovasz_2021,mancinska_quantum_2019,grohe_homomorphism_2021,atserias_expressive_2021,spitzer_characterising_2022,roberson_oddomorphisms_2022}. In particular, $k$-\WL indistinguishability after $d$~iterations has been characterised as homomorphism indistinguishability over graphs which admit a certain decomposition of bounded size called \emph{$(k+1)$-pebble forest cover of depth~$d$} \cite{abramsky_pebbling_2017,dawar_lovasz_2021}.

For proving \cref{thm:main-intro}, we combine the approach of \cite{mancinska_quantum_2019,grohe_homomorphism_2021} for constructing matrix equations capturing homomorphism indistinguishability with the categorical framework of \cite{dawar_lovasz_2021,abramsky_structure_2022}. While \cref{thm:main-intro} is the concrete outcome of this strategy, we believe that it conceptually enhances the available toolkit for proving characterisations of homomorphism indistinguishability in terms of matrix equations. To sketch these contributions, we outline the strategy used by \cite{mancinska_quantum_2019,grohe_homomorphism_2021} to prove such results. 
Each step is exemplified by the arguments used to derive a characterisation of homomorphism indistinguishability over trees in \cite[Section 3]{grohe_homomorphism_2021}.
\begin{enumerate} \label{recipe}
\item Define a family of labelled graphs and their homomorphism vectors\label{step1}

Labelled graphs are tuples $\boldsymbol{F} = (F, u)$ where $F$ is a graph and $u \in V(F)$. For every graph~$G$, a labelled graph $\boldsymbol{F}$ gives rise to a \emph{homomorphism vector} $\boldsymbol{F}_G \in \mathbb{C}^{V(G)}$ where $\boldsymbol{F}_G(v)$ for $v \in V(G)$ is defined as the number of \emph{homomorphisms} $h \colon F \to G$ such that $h(u) = v$, i.e.\@ mappings $h \colon V(F) \to V(G)$ such that $h(w_1w_2) \in E(G)$ for every $w_1w_2 \in E(F)$.

For the example, let $\mathcal{T}$ denote the family of labelled trees $(T, u)$, i.e.\@ of trees $T$ with an arbitrary labelled vertex $u \in V(T)$.

\item Define suitable combinatorial and algebraic operations\label{step2}

It was observed in \cite{mancinska_quantum_2019} that there is a close correspondence between combinatorial operations on labelled graphs and algebraic operations on homomorphism vectors. An instance is the \emph{gluing product} of $\boldsymbol{F} = (F, u)$ and $\boldsymbol{F}' = (F', u')$ which results in the labelled graph $\boldsymbol{F} \odot \boldsymbol{F}'$ obtained by taking the disjoint union of $F$ and $F'$, merging the vertices $u$ and $u'$ to yield the new labelled vertex. It can be shown that the entry-wise product  $\boldsymbol{F}_G \odot \boldsymbol{F}'_G$ of  the homomorphism vectors $\boldsymbol{F}_G$ and $\boldsymbol{F}'_G$ yields the homomorphism vector $(\boldsymbol{F} \odot \boldsymbol{F}')_G$ of the gluing product $\boldsymbol{F} \odot \boldsymbol{F}'$. 

For the example, we consider gluing and the operation of attaching a fresh edge $uv$ to the labelled vertex $u$ of $(F, u)$ and moving the label to $v$.
\item Prove finite generation\label{step3}

Aiming at a matrix equation with finitely many constraints, it is typically proven that the family of labelled graphs in step~\ref{step1} is finitely generated by certain \emph{basal graphs} under the operation in step~\ref{step2}. For trees, it can be easily seen that $\mathcal{T}$ is generated under gluing and attaching edges by the single-vertex labelled graph without any edges.

\item Recover a matrix equation using algebraic and representation-theoretic techniques. 

For trees, one may use the techniques in \cite{grohe_homomorphism_2021} to discover the fractional isomorphism matrix equation.\label{step4}
\end{enumerate}

The main difficulty arising when applying this strategy to pebble forest covers is that there is no obvious way to define labelling and operations such that the resulting class of labelled graphs is closed under these. We overcome this obstacle by promoting the labels from mere distinguished vertices to objects encoding the role the selected vertices play in the associated pebble forest cover. Subsequently, the gluing operation can be restricted to pairs of labelled graphs whose labels play the same role. While this resolves the combinatorial problems in steps~\ref{step1}--\ref{step3}, alterations have to be made also on the algebraic side in order to meet the requirements of step~\ref{step4}. For this purpose, we introduce representations of labelled graphs in terms of matrices which not only encode the homomorphism counts but also the role of the labelled vertex in the decomposition.

Although this approach might appear to be tailored to the graph class relevant for \cref{thm:main-intro}, it is in fact inspired by categorical principles laid out
in \cite{abramsky_pebbling_2017,abramsky_relating_2021,dawar_lovasz_2021}. 
Their framework of comonads on the category of relational structures has given rise to a categorical language for capturing natural graph classes and decompositions leading moreover to results in homomorphism indistinguishability \cite{dawar_lovasz_2021,MontacuteS21,abramsky_discrete_2022}.
For proving \cref{thm:main-intro}, we introduce bilabelled graphs augmented by additional information accompanied by corresponding representations and operations. In \cref{app:comonadic}, we argue that this can be viewed as an instantiation of a comonadic strategy for homomorphism indistinguishability in the case of the pebbling comonad~\cite{abramsky_pebbling_2017,dawar_lovasz_2021}. 
We believe that our approach may facilitate the study of matrix equations for homomorphism indistinguishability over graph classes which are more intricate with respect to labellings, operations, and generation.

\section{Preliminaries}
\label{sec:prelims}

We assume familiarity with the standard notation and terminology from graph theory. The reader is referred to the monograph \cite{Diestel} for a detailed presentation. In what follows, all graphs are assumed to be simple undirected graphs. As usual, let $[n]$ for $n \in \mathbb{N}$ denote the set $\{1,\dots,n\}$  and let $\mathbb{C}$ denote the field of complex numbers. 

\subsection{Bilabelled Graphs and Homomorphism Matrices}
A \emph{homomorphism} $h \colon F \to G$ between graphs $F$ and $G$ is a map $V(F) \to V(G)$ such that $h(u)h(v) \in E(G)$ for all $uv \in E(F)$.
For $k, \ell \in \mathbb{N}$,
a \emph{$(k,\ell)$-bilabelled graph} is a tuple $\boldsymbol{F} = (F, \vec{u}, \vec{v})$ of a graph $F$ and tuples of \emph{in-labelled} vertices $\vec{u} \in V(F)^k$  and of \emph{out-labelled} vertices $\vec{v} \in V(F)^\ell$.
For every graph~$G$, it gives rise to a \emph{homomorphism matrix} $\boldsymbol{F}_G \in \mathbb{C}^{V(G)^k \times V(G)^\ell}$ where $\boldsymbol{F}_G(\vec{x}, \vec{y})$ is the number of homomorphisms $h \colon F \to G$ such that $h(\vec{u}) = \vec{x}$ and $h(\vec{v}) = \vec{y}$ entry-wise. 
These definitions readily extend to \emph{$k$-labelled graphs} $\boldsymbol{F} = (F, \vec{u})$ and their \emph{homomorphism vectors} $\boldsymbol{F}_G \in \mathbb{C}^{V(G)^k}$.
The \emph{reverse} of an $(\ell_1, \ell_2)$-bilabelled
graph $\boldsymbol{F} =(F,\vec{u}, \vec{v})$ is defined to be the $(\ell_2,\ell_1)$-bilabelled graph $\boldsymbol{F}^* = (F, \vec{v}, \vec{u})$ with roles of in- and out-labels interchanged. 
The \emph{series composition} of an $(\ell_1, \ell_2)$-bilabelled graph $\boldsymbol{F} = (F, \vec{u}, \vec{v})$ and an $(\ell_2, \ell_3)$-bilabelled graph $\boldsymbol{F}' = (F', \vec{u}', \vec{v}')$, denoted by $\boldsymbol{F} \cdot \boldsymbol{F}'$ is the $(\ell_1, \ell_3)$-bilabelled graph obtained by taking the disjoint union of $F$ and $F'$ and identifying for all $i \in [\ell_2]$ the vertices $\vec{v}_i$ and $\vec{u}'_i$. The in-labels of $\boldsymbol{F} \cdot \boldsymbol{F}'$ lie on $\vec{u}$ while its out-labels are positioned on $\vec{v}'$.
See \cite{mancinska_quantum_2019,grohe_homomorphism_2021} for further details.

\subsection{Pebble Forest Covers} We recall the following notions from  \cite{abramsky_relating_2021}. A \emph{chain} in a poset $\mathcal{P}=(P, \leq)$ is a subset 
$C \subseteq P$ such that for all $x,y \in C$, either $x \leq y$ or $y \leq x$. 
A \emph{forest} is a poset $\mathcal{F} = (F, \leq)$ such that, for all $x \in \mathcal{F}$, the set $\{y \in F \,\vert\, y \leq x\}$ of predecessors of $x$ is a finite chain. 
The \emph{depth} of a forest $\mathcal{F}$ is the length of a maximal chain in $\mathcal{F}$. 
A \emph{tree} is a forest $(F,\leq)$ with a \emph{root} element $x$ such that $x \leq y$ for all $y \in F$.
A \emph{leaf} of a tree $(T,\leq)$ is an element $x$ of $T$ such the set $\{y \in T \,\vert\, x \leq y\}$ of successors of $x$ is $\{x\}$. 
The \emph{depth of a leaf $x$} of a tree $(T,\leq)$ is the length of the unique maximal chain from the root element to the leaf element. 

A \emph{forest cover} of a graph $G$ is a forest $\mathcal{F}=(F,\leq)$ with $F = V(G)$
such that for every edge $uv \in E(G)$, it must hold that either $u \leq v $ or $ v \leq u$. 
A \emph{tree cover} of a graph $G$ is a tree $\mathcal{T}$ which is a forest cover of $G$. 

Let $\mathcal{F} = (F, \leq)$ be a forest cover for a graph $G$. 
A \emph{$k$-pebbling function} for the pair $(G, \mathcal{F})$ is a mapping $p \colon V(G) \to [k]$ satisfying that for all vertices $u, v \in V(G)$ such that $u\leq v$ and $uv \in E(G)$, it holds that $p(x) \neq p(u)$ for every $x \neq u$ such that $u \leq x \leq v$. 
A \emph{$k$-pebble forest cover} of $G$ is a pair $(\mathcal{F},p)$ such that $\mathcal{F}$ is a forest cover for $G$ and $p$ is a $k$-pebbling function for $(G,\mathcal{F})$. It can be shown \cite[Theorem~6.4]{abramsky_relating_2021} that a graph $G$ has a $k$-pebble forest cover iff it has treewidth less than $k$.

\begin{definition}[Pebble Forest Covers for Labelled Graphs] \label{def:pfc-labelled}
	Let $k \geq \ell \geq 1$.
	A $k$-pebble forest cover of an $\ell$-labelled graph $\boldsymbol{F}=(F,\vec{u})$ is a $k$-pebble forest cover $(\mathcal{F}, p)$ of $F$ such that
	\begin{enumerate}
		\item the labelled vertices $\vec{u}_1, \dots, \vec{u}_\ell$ form a chain, i.e.\@ are mutually comparable in $\mathcal{F}$,
		\item all other vertices lie below the labelled vertices, i.e.\@ for all $v \in V(F) \setminus \{\vec{u}_1, \dots, \vec{u}_\ell\}$ it holds that $v \geq \vec{u}_1, \dots, \vec{u}_\ell$,
		\item the pebbling function $p$ is injective on $\{\vec{u}_1, \dots, \vec{u}_\ell\}$, i.e.\@ if $p(\vec{u}_i) = p(\vec{u}_j)$ then $\vec{u}_i = \vec{u}_j$ for all $i, j \in [\ell]$.
	\end{enumerate}
	The \emph{depth} of such an $(\mathcal{F}, p)$ is the length of the longest chain in $V(F) \setminus \{\vec{u}_1, \dots, \vec{u}_\ell\}$.
	Write $\mathcal{LPFC}_k^d$ for the class of $k$-labelled graphs admitting a $(k+1)$-pebble forest cover of depth $d$.
	Write $\mathcal{PFC}_k^d$ for the unlabelled graphs underlying the labelled graphs in $\mathcal{LPFC}_k^d$.
\end{definition}

\subsection{Weisfeiler--Leman}
Let $k \geq 1$.
Given a graph $G$, the \emph{$k$-dimensional Weisfeiler--Leman algorithm} ($k$-\WL) colours $k$-tuples of its vertices in the following iterative manner.
Let $\atp$ denote a function which maps a graph $G$ and $\vec{x} \in V(G)^\ell$ to some value such that $\atp(G, \vec{x}) = \atp(H, \vec{y})$ if and only if $\vec{x}_i \mapsto \vec{y}_i$, $i \in [\ell]$ is a partial isomorphism, cf.\@ \cite[Section~IV]{grohe_logic_2021}.
Define $\WL_k^0(G, \vec{v}) \coloneqq \atp(G, \vec{v})$ for all $\vec{v} \in V(G)$ and for $d \geq 0$,
\[ \WL_k^{d+1}(G, \vec{v}) \coloneqq \left( \WL_k^d(G, \vec{v}), \multiset{ \left( \atp(G, \vec{v}w), \WL_k^d(G, \vec{v}[1/w]) , \dots, \WL_k^d(G,\vec{v}[k/w]) \right) \ \middle|\ w \in V(G)  } \right).\]
Here, $\vec{v}[i/w]$ for $i \in [k]$ denotes the $k$-tuple obtained from $\vec{v}$ by replacing its $i$-th entry by $w$.
The procedure terminates when the colouring is stable, i.e.\@ once the partition of $V(G)^k$ into colour classes prescribed by this colouring is not refined by an additional iteration of $k$-\WL. 
We write $\WL_k^\infty$ for this stable colouring.
For further details, see e.g.\@ \cite{cai_optimal_1992,grohe_logic_2021}.

Two graphs $G$ and $H$ are \emph{indistinguishable by $k$-\WL after $d$~iterations} if the multisets of the $\WL_k^d(G, \vec{v})$ for $\vec{v} \in V(G)^k$ and of the $\WL_k^d(H, \vec{w})$ for $\vec{w} \in V(H)^k$ are the same.
This equivalence relation is a well-studied notion which foremost admits a logical characterisation.
Let $\mathsf{C}_k$ denote the $k$-variable fragment of first-order logic with counting quantifiers.  Let $\mathsf{C}_k^d$ denote the quantifier rank-$d$ fragment of $\mathsf{C}_k$. 
The equivalence of \cref{camb1,camb2} in the following theorem characterising $k$-\WL indistinguishability is implicit in \cite{abramsky_pebbling_2017,dawar_lovasz_2021}. The equivalence of \cref{camb2,camb3} is due to \cite{cai_optimal_1992}. For completeness, we give a self-contained proof in \cref{app:thm:camb}.

\begin{theorem}[{\cite[Section IV.B]{dawar_lovasz_2021}, \cite[Theorem~18]{abramsky_pebbling_2017}, \cite[Theorem~5.2]{cai_optimal_1992}}]\label{thm:camb}
Let $k \geq 1$ and $d \geq 0$.
Given two graphs $G$ and $H$ with $\vec{v} \in V(G)^k$ and $\vec{w} \in V(H)^k$, the following are equivalent. 
\begin{enumerate}
	\item $\WL_k^d(G, \vec{v}) = \WL_k^d(H, \vec{w})$, \label{camb3}
	\item For all $\phi(\vec{x}) \in \mathsf{C}_{k+1}^d$ with $k$~free variables, $G \models \phi(\vec{v})$ if and only if $H \models \phi(\vec{w})$,\label{camb2}
	\item For all $\boldsymbol{F} \in \mathcal{LPFC}_k^d$, it holds that $\boldsymbol{F}_G(\vec{v}) = \boldsymbol{F}_H(\vec{w})$. \label{camb1}
\end{enumerate}
\end{theorem}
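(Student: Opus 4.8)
The plan is to prove the cycle of equivalences through two essentially independent arguments: the classical Cai--F\"urer--Immerman correspondence for \cref{camb3} $\Leftrightarrow$ \cref{camb2}, and a Dvo\v{r}\'ak-style homomorphism-counting argument for \cref{camb2} $\Leftrightarrow$ \cref{camb1}. Throughout I use that $k$-\WL is calibrated so that the relevant logic is the $(k{+}1)$-variable fragment $\mathsf{C}_{k+1}$, with one iteration of $k$-\WL matching one unit of quantifier rank.

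For \cref{camb2} $\Leftrightarrow$ \cref{camb3}, I would induct on $d$. The base case $d=0$ is immediate, since $\WL_k^0(G,\vec v) = \atp(G,\vec v)$ and the quantifier-rank-$0$ formulas of $\mathsf{C}_{k+1}$ with $k$~free variables are exactly the Boolean combinations of atomic formulas recorded by $\atp$. For the inductive step one proves simultaneously: (a) if $\WL_k^{d+1}(G,\vec v) = \WL_k^{d+1}(H,\vec w)$ then $\vec v,\vec w$ agree on every $\mathsf{C}_{k+1}^{d+1}$-formula, by structural induction on the formula, the only interesting case being a counting quantifier $\exists^{\geq m} y\,\psi$, where the equality of the multisets $\multiset{(\atp(G,\vec v\,x),\WL_k^d(G,\vec v[1/x]),\dots,\WL_k^d(G,\vec v[k/x])) \mid x}$ on the two sides lets one match witnesses $x$ while the inductive hypothesis applied to $\psi$ (of quantifier rank $d$) guarantees that a witness on one side is one on the other; and (b) conversely, if the iteration-$(d{+}1)$ colours differ, one exhibits a distinguishing $\mathsf{C}_{k+1}^{d+1}$-formula, using the standard fact that over a fixed finite structure the $\mathsf{C}_{k+1}^d$-type of a $k$-tuple is definable by a single $\mathsf{C}_{k+1}$-formula of quantifier rank $d$, and then prefixing a counting quantifier $\exists^{\geq m} x_{k+1}$ to compare multiplicities. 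The fiddly point here is variable bookkeeping: with only $k{+}1$ variables available, substituting the fresh vertex into the $i$-th coordinate of $\vec v$ forces reuse of $x_{k+1}$ there, and one must track which variable is free at each stage; this is routine but needs care, as in \cite{cai_optimal_1992}.

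For \cref{camb1} $\Leftrightarrow$ \cref{camb2}, I would show that (i) for every $\boldsymbol F=(F,\vec u)\in\mathcal{LPFC}_k^d$ the value $\boldsymbol F_G(\vec v)$ depends only on the $\mathsf{C}_{k+1}^d$-type of $\vec v$ in $G$, and (ii) conversely this type is recoverable from the family of all such counts. For (i), induct on the structure of a $(k{+}1)$-pebble forest cover of depth $d$ of $\boldsymbol F$, read from the labelled chain downwards: a branching of the forest below the labels corresponds to a gluing of the associated homomorphism vectors, hence to a conjunction of the corresponding $\mathsf{C}_{k+1}$-formulas, with the pebble on the branch vertex keeping only $\leq k{+}1$ vertices simultaneously active; descending one level along a chain corresponds to a counting-quantifier step that sums over the value of the new bottom vertex. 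Thus depth $d$ becomes quantifier rank $d$ and the $(k{+}1)$-pebble bound becomes the $(k{+}1)$-variable bound. For (ii), induct on $\phi(\vec x)\in\mathsf{C}_{k+1}^d$, showing $G \models \phi(\vec v)$ is determined by finitely many counts $\boldsymbol F_G(\vec v)$ with $\boldsymbol F\in\mathcal{LPFC}_k^d$: atomic formulas via single-vertex and single-edge labelled graphs, conjunctions via gluing, and Boolean combinations and counting quantifiers via the usual inclusion--exclusion/interpolation passage between homomorphism, injective-homomorphism, and surjective-homomorphism counts over the fixed finite graph $G$, as in \cite{dvorak_recognizing_2010,dawar_lovasz_2021}.

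I expect the main obstacle to be step (ii), i.e.\@ recovering the $\mathsf{C}_{k+1}^d$-type from the homomorphism counts. Two closure properties must be verified carefully: first, that $\mathcal{LPFC}_k^d$ is closed under gluing at the labelled vertices (gluing the two forest covers along the common labelled chain), since this is what makes products of homomorphism counts stay inside the family and lets the Möbius/inclusion--exclusion inversion go through; and second, that the counting-quantifier step inflates neither the depth nor the pebble number --- precisely, that attaching the witness vertex as a new leaf below the current labelled chain still yields a valid $(k{+}1)$-pebble forest cover of depth $d$, which is exactly where the injectivity of the pebbling function on the labels in \cref{def:pfc-labelled} is used, so that the ``free'' pebble is always available for the witness. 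Granting these closure checks, the remainder is a careful but standard adaptation of the Cai--F\"urer--Immerman and Dvo\v{r}\'ak arguments to the rooted, bounded-depth setting.
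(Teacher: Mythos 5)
Your overall architecture matches the paper's (cite the Cai--F\"urer--Immerman argument for the equivalence of items 1 and 2, and a Dvo\v{r}\'ak-style translation in both directions for items 2 and 3), and your step (i) is essentially the paper's \cref{lem:forward,cor:forward}. The genuine gap is in step (ii), the passage from formulas back to homomorphism counts. You never address equality atoms $x_i = x_j$, and they cannot be dodged: even the bookkeeping that relates $\mathsf{C}_{k+1}^d$-formulas with $k$ free variables to $k$-labelled graphs forces a conjunct $x_k = x_{k+1}$ (\cref{cor:forward,cor:backward}). Realising an equality by identifying vertices on the fly can destroy the $(k{+}1)$-pebble forest cover (edges previously incident to two comparable vertices concentrate on one, and the pebbling condition can fail), which is why the paper runs the whole induction over $\{E,I\}$-structures with $I$ a proxy for equality and \emph{regular} covers (pebble $i$ fixed on label $i$, so labelled vertices stay distinct), and only contracts the $I$-pairs at the very end via \cref{lem:dawar}, re-verifying the injectivity-on-labels condition of \cref{def:pfc-labelled}. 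Nothing in your sketch replaces this device, and it is not a routine detail.

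Second, your proposed mechanism for Boolean combinations and counting quantifiers --- ``inclusion--exclusion/interpolation between homomorphism, injective-homomorphism and surjective-homomorphism counts'' --- is the wrong tool here: converting between $\hom$ and injective counts sums over quotients of the labelled graph, and $\mathcal{LPFC}_k^d$ is not closed under such identifications (the same obstruction as above), so that inversion leaves the class. What works, and what the paper uses (\cref{lem:backward,claim:dvo2}), is the quantum-graph calculus: represent each subformula by a linear combination $t$ of labelled structures in the class with $\hom(t,\cdot)\in\{0,1\}$, handle negation as $\boldsymbol{1}-t$, handle $\exists^{\geq m}$ by unlabelling the witness vertex --- which \emph{increases} the depth by one, exactly consuming one unit of quantifier rank, so your claim that this step ``inflates neither the depth'' has the accounting wrong --- and then threshold via polynomial interpolation applied to the gluing powers $t^{\odot i}$. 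Your correct observation that the class is closed under gluing along the labelled chain is what licenses forming these powers; it does not license a M\"obius inversion between hom and injective counts.
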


\begin{corollary} \label{cor:camb}
	Let $k \geq 1$ and $d \geq 0$. Two graphs $G$ and $H$ are indistinguishable after $d$ iterations of $k$-\WL if and only if they are homomorphism indistinguishable over $\mathcal{PFC}_k^d$.
\end{corollary}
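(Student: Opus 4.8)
The plan is to derive \cref{cor:camb} from \cref{thm:camb} by the standard device of passing from homomorphism \emph{vectors} indexed by $k$-tuples to homomorphism \emph{numbers} of unlabelled graphs, in the spirit of \cite{DellGR18,mancinska_quantum_2019,grohe_homomorphism_2021}. Two elementary remarks set this up. First, for any $k$-labelled graph $\boldsymbol{F} = (F, \vec{u})$ and any graph $G$ one has $\sum_{\vec{v} \in V(G)^k} \boldsymbol{F}_G(\vec{v}) = \Hom(F, G)$, since every homomorphism $F \to G$ contributes to exactly one entry of $\boldsymbol{F}_G$. Second, by definition $\mathcal{PFC}_k^d$ is exactly the set of graphs $F$ admitting a labelling $\vec{u}$ with $(F, \vec{u}) \in \mathcal{LPFC}_k^d$. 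Combining these with the equivalence of the first and third conditions of \cref{thm:camb} --- which says $\WL_k^d(G, \vec{v}) = \WL_k^d(H, \vec{w})$ iff $\boldsymbol{F}_G(\vec{v}) = \boldsymbol{F}_H(\vec{w})$ for every $\boldsymbol{F} \in \mathcal{LPFC}_k^d$ --- reduces the corollary to the following: equality of the multisets $\multiset{(\boldsymbol{F}_G(\vec{v}))_{\boldsymbol{F} \in \mathcal{LPFC}_k^d} \ \middle|\ \vec{v} \in V(G)^k}$ and $\multiset{(\boldsymbol{F}_H(\vec{w}))_{\boldsymbol{F} \in \mathcal{LPFC}_k^d} \ \middle|\ \vec{w} \in V(H)^k}$ is equivalent to homomorphism indistinguishability of $G$ and $H$ over $\mathcal{PFC}_k^d$.

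One direction of this equivalence is immediate: summing a profile-preserving bijection over all $\vec{v}$ yields $\Hom(F, G) = \Hom(F, H)$ for every $F \in \mathcal{PFC}_k^d$. For the converse I would use the method of moments. Since the entrywise product of homomorphism vectors is the homomorphism vector of the gluing product, $\boldsymbol{F}_G \odot \boldsymbol{F}'_G = (\boldsymbol{F} \odot \boldsymbol{F}')_G$, where $\boldsymbol{F} \odot \boldsymbol{F}'$ identifies the two $k$-tuples of labels coordinatewise, every ``mixed moment'' $\sum_{\vec{v}} \prod_{i=1}^{r} (\boldsymbol{F}_i)_G(\vec{v})^{a_i}$ of finitely many $\boldsymbol{F}_1, \dots, \boldsymbol{F}_r \in \mathcal{LPFC}_k^d$ with exponents $a_i \geq 0$ equals $\Hom(F^*, G)$ for $F^*$ the unlabelled graph underlying $\boldsymbol{F}_1^{\odot a_1} \odot \cdots \odot \boldsymbol{F}_r^{\odot a_r}$ (the empty gluing being the edgeless $k$-chain, which lies in $\mathcal{LPFC}_k^0$). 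Provided $\mathcal{LPFC}_k^d$ is closed under $\odot$, this $F^*$ belongs to $\mathcal{PFC}_k^d$, so homomorphism indistinguishability over $\mathcal{PFC}_k^d$ forces all such moments of $G$ and $H$ to coincide. As the entries $(\boldsymbol{F}_i)_G(\vec{v})$ are non-negative integers bounded by $|V(F_i)|^{|V(G)|}$, and as $k$-\WL produces only finitely many colours after $d$ iterations --- so, by \cref{thm:camb}, a single finite subfamily $S \subseteq \mathcal{LPFC}_k^d$ separates all $\WL_k^d$-colour classes of $G$ and $H$, and hence the $S$-profile of a $k$-tuple is a complete invariant for its $\WL_k^d$-colour --- a Lagrange-interpolation (equivalently inclusion--exclusion) argument recovers the multiset of $S$-profiles, and thus of $\WL_k^d$-colours, from these moments. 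This closes the chain of equivalences.

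The main obstacle is the closure of $\mathcal{LPFC}_k^d$ under the gluing product. Given $\boldsymbol{F} = (F, \vec{u})$ and $\boldsymbol{F}' = (F', \vec{u}')$ with $(k+1)$-pebble forest covers of depth $d$ as in \cref{def:pfc-labelled}, one has to produce such a cover for the graph obtained by identifying $\vec{u}_i$ with $\vec{u}'_i$. At the level of the underlying posets this is straightforward: the $k$ identified labels again form a chain; one keeps this chain and attaches the two depth-$\leq d$ non-label parts above it as disjoint subtrees of the greatest label, and every edge stays comparable --- label--label edges because a chain is totally ordered, the remaining edges because nothing changed within either side. The subtle point is the pebbling function, because the pebblings of $\boldsymbol{F}$ and $\boldsymbol{F}'$ need not agree on the shared labels, nor need the two induced chain orders on $\vec{u}$ and $\vec{u}'$ coincide. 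I would resolve this by adopting the chain order and the label-pebbles of $\boldsymbol{F}$, retaining $\boldsymbol{F}$'s pebbling on its own part, and re-pebbling the non-label part of $\boldsymbol{F}'$ by postcomposing its pebbling function with the permutation of $[k+1]$ that carries $\boldsymbol{F}'$'s labelling-pebble pattern onto $\boldsymbol{F}$'s (and fixes the unique pebble not placed on a label). Because applying an injection to all pebble values preserves every condition of \cref{def:pfc-labelled}, and because the two non-label parts share no edges, the result is a valid $(k+1)$-pebble forest cover of depth $d$. Checking these compatibility conditions carefully --- tracking, for each edge of the merged graph, which vertices lie strictly between its endpoints --- is the technical heart of the proof; the rest is the routine finite combinatorics sketched above.
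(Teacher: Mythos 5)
Your route is genuinely different from the paper's: instead of translating a distinguishing \WL-colour into a counting-logic formula and then into a quantum relational structure over $\{E,I\}$ with \emph{regular} pebble forest covers (\cref{cor:backward}, \cref{lem:backward}) and contracting the equality pairs only at the very end (\cref{lem:dawar}), you work directly with homomorphism vectors, gluing products, and a method-of-moments argument. The skeleton is reasonable — the forward direction and the ``bounded non-negative integer moments determine the multiset of profiles'' step are fine — but the load-bearing step, closure of $\mathcal{LPFC}_k^d$ under the gluing product, is exactly where your argument has a genuine gap, and it is the step you treat as a routine re-pebbling.

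Your re-pebbling device (postcompose $\boldsymbol{F}'$'s pebbling with a permutation of $[k+1]$ carrying its label-pebble pattern onto $\boldsymbol{F}$'s, ``fixing the unique pebble not placed on a label'') tacitly assumes that the two label tuples have the same repetition pattern — indeed that all $k$ labelled vertices are distinct. But $\vec{u} \in V(F)^k$ may have repeated entries, and \cref{def:pfc-labelled} only makes $p$ injective on the \emph{set} of labelled vertices. If, say, $\vec{u}'_1 = \vec{u}'_2$ in $F'$ while $\vec{u}_1 \neq \vec{u}_2$ in $F$, the coordinatewise identification merges the two \emph{distinct} labelled vertices $\vec{u}_1, \vec{u}_2$ of $F$ into a single vertex; then the chains below former $\vec{u}_1$-neighbours avoided only $p(\vec{u}_1)$ and those below former $\vec{u}_2$-neighbours only $p(\vec{u}_2)$, and no permutation (no injective renaming at all) of pebble values on $F$'s cover makes them all avoid one common pebble. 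What is needed here is a genuine re-pebbling statement of the strength of \cref{lem:dawar} (contraction of comparable vertices preserves $(k+1)$-pebble forest covers, \cite{dawar_lovasz_2021_arxiv}), whose proof uses a case-dependent reassignment rule rather than a value permutation. Moreover, you cannot avoid this case: your separating family $S$ must distinguish colours that differ only in the equality type of the tuple, and graphs with pairwise distinct labels cannot do this in general (in an edgeless $G$, every all-distinct-label $\boldsymbol{F}$ takes the same value on $(v,v)$ and $(v,w)$, although the $\WL_k^d$-colours differ), so $S$ contains repeated-label graphs and your mixed moments must glue graphs with mismatched repetition patterns. The paper's detour through the relation $I$ and regular covers exists precisely to dodge this difficulty. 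To salvage your proof you would either have to prove a label-merging lemma of the above kind or restructure the moment argument so that only graphs with identical label patterns are ever glued; as written, the closure claim — and with it the converse direction — is not established.
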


\subsection{Specht--Wiegmann Theorem}
A variation of the Specht--Wiegmann Theorem~\cite{specht_zur_1940,wiegmann_necessary_1961} has proven useful for deriving matrix equations for homomorphism indistinguishability in \cite{grohe_homomorphism_2021}. It also serves as the representation-theoretic core of the proof of \cref{thm:main-intro}.
In order to state it, let $\Sigma_{2m}$ denote the finite alphabet $\{x_i, y_i \mid i \in [m]\}$ for some $m \in \mathbb{N}$. Let $\Gamma_{2m} \coloneqq \Sigma_{2m}^*$ denote the set of all finite words over $\Sigma_{2m}$.
Given a family of complex-valued matrices $\matvec{A} = (A_1, \dots, A_m)$, a word $w \in \Gamma_{2m}$ can be represented as a matrix $w_{\matvec{A}}$ by sending $x_i$ to $A_i$, $y_i$ to $A_i^*$, the conjugate-transpose of $A_i$, and interpreting concatenation as matrix product.
Let $\soe(M)$ denote the sum of all entries of a complex matrix $M$. 
Recall that a matrix is pseudo-stochastic if its row and column sums are one.

\begin{theorem}[{\cite[Theorem~20]{grohe_homomorphism_2021}}]\label{thm:soe}
Let $I$ and $J$ be finite index sets.
Let $\matvec{A} = (A_1,\dots,A_m)$ and $\matvec{B} = (B_1,\dots,B_m)$  be two sequences of matrices such that $A_i \in \CC^{I \times I}$ and $B_i \in \CC^{J \times J}$  for $i \in [m]$. Then the following are equivalent: 
\begin{enumerate}
 \item There exists a pseudo-stochastic matrix $X \in \CC^{J \times I}$ such that $X A_i  =  B_iX $ and $X A_i^*  =  B_i^* X$ for all $i \in [m]$. 
 \item For every word $w \in \Gamma_{2m}$, $\soe(w_{\matvec{A}}) = \soe(w_{\matvec{B}})$. 
\end{enumerate} 
\end{theorem}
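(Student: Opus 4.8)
The statement is a vector-state analogue of the classical Specht--Wiegmann theorem: the functional $\soe(\cdot) = \mathbf{1}^\top(\cdot)\,\mathbf{1}$ plays the role that the trace plays there, and pseudo-stochastic intertwiners replace simultaneous unitary conjugations. The plan is to prove the two implications separately, with essentially all the work in $(2) \Rightarrow (1)$.

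For $(1) \Rightarrow (2)$, suppose $X \in \CC^{J \times I}$ is pseudo-stochastic with $X A_i = B_i X$ and $X A_i^* = B_i^* X$ for all $i \in [m]$. A one-line induction on word length then gives $X w_{\matvec{A}} = w_{\matvec{B}} X$ for every $w \in \Gamma_{2m}$. Pseudo-stochasticity is exactly the pair of identities $X \mathbf{1}_I = \mathbf{1}_J$ and $X^\top \mathbf{1}_J = \mathbf{1}_I$, so
\[
\soe(w_{\matvec{B}}) = \mathbf{1}_J^\top w_{\matvec{B}} \mathbf{1}_J = \mathbf{1}_J^\top w_{\matvec{B}} X \mathbf{1}_I = \mathbf{1}_J^\top X w_{\matvec{A}} \mathbf{1}_I = \mathbf{1}_I^\top w_{\matvec{A}} \mathbf{1}_I = \soe(w_{\matvec{A}}).
\]

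For $(2) \Rightarrow (1)$ I would first reduce to the cyclic case. Let $V_A \coloneqq \spn\{ w_{\matvec{A}} \mathbf{1}_I \mid w \in \Gamma_{2m} \} \subseteq \CC^I$; this is the smallest subspace containing $\mathbf{1}_I$ that is invariant under every $A_i$ and every $A_i^*$, and we define $V_B \subseteq \CC^J$ symmetrically. Since $V_A^\perp$ is then also invariant under every $A_i$, the subspace $V_A$ reduces the family, so replacing each $A_i$ by its compression to $V_A$ alters neither the adjoints nor any value $\soe(w_{\matvec{A}})$; hence we may assume $\mathbf{1}_I$ and $\mathbf{1}_J$ are cyclic. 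The hypothesis now says precisely that the two word-orbits of the all-ones vectors have the same Gram matrix: writing $w^*$ for the reversed word with $x_i$ and $y_i$ interchanged, so that $w^*_{\matvec{A}} = (w_{\matvec{A}})^*$, one has for all $w, w' \in \Gamma_{2m}$
\[
\langle w_{\matvec{A}} \mathbf{1}_I, w'_{\matvec{A}} \mathbf{1}_I \rangle = \soe\big((w^* w')_{\matvec{A}}\big) = \soe\big((w^* w')_{\matvec{B}}\big) = \langle w_{\matvec{B}} \mathbf{1}_J, w'_{\matvec{B}} \mathbf{1}_J \rangle.
\]
Therefore the assignment $w_{\matvec{A}} \mathbf{1}_I \mapsto w_{\matvec{B}} \mathbf{1}_J$ is a well-defined isometry on the spanning set; it extends linearly to $V_A$, and since $\dim V_A$ equals the rank of this common Gram matrix, which equals $\dim V_B$, and the image contains a spanning set of $V_B$, the extension is a unitary $U \colon V_A \to V_B$. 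By construction $U A_i = B_i U$ on $V_A$ and $U \mathbf{1}_I = \mathbf{1}_J$ (take $w$ the empty word), and hence also $U A_i^* = B_i^* U$. Crucially, a unitary that sends $\mathbf{1}$ to $\mathbf{1}$ has all row and all column sums equal to $1$, i.e.\@ is pseudo-stochastic.

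It then remains to lift $U$ back to all of $\CC^I, \CC^J$: compose with the orthogonal projection $\CC^I \to V_A$ and the inclusion $V_B \hookrightarrow \CC^J$ to obtain $X \in \CC^{J \times I}$. Because $V_A$ reduces the $A_i$ and $V_B$ reduces the $B_i$, these projections commute with the respective generators, so $X A_i = B_i X$ and $X A_i^* = B_i^* X$ follow directly; and since $\mathbf{1}_I \in V_A$ and $\mathbf{1}_J \in V_B$, the identities $X \mathbf{1}_I = U \mathbf{1}_I = \mathbf{1}_J$ and $X^* \mathbf{1}_J = U^* \mathbf{1}_J = \mathbf{1}_I$ persist, so $X$ is pseudo-stochastic. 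The crux of the whole argument is the GNS-style step inside $(2) \Rightarrow (1)$: reading the hypothesis as the assertion that two cyclic $*$-representations of the free $*$-algebra on $m$ generators give rise to the same vector functional $w \mapsto \langle \mathbf{1}, w\,\mathbf{1} \rangle$, which forces the canonical correspondence between the two word-orbits to be a well-defined unitary intertwiner. Once this is in place, the reduction to cyclic subspaces and the compression back to the original index sets are routine, the only point requiring care being the verification that the compressed map still fixes the all-ones vectors and hence remains pseudo-stochastic.
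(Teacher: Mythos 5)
Your proof is correct. Note that the paper itself does not prove this statement: it is imported as a black box from Grohe, Rattan and Seppelt (Theorem~20 of the cited ICALP paper), so there is no in-paper argument to compare against; your route --- reading hypothesis (2) as equality of the Gram matrices of the word-orbits of the all-ones vectors, extracting a unitary $U\colon V_A \to V_B$ intertwining the restricted families and fixing $\mathbf{1}$, and then compressing back via the reducing projections to obtain a pseudo-stochastic intertwiner $X$ --- is essentially the standard Specht--Wiegmann/GNS-style argument underlying the cited result. One cosmetic point: the aside that ``a unitary sending $\mathbf{1}$ to $\mathbf{1}$ has all row and column sums equal to $1$'' only makes literal sense for the full-space matrix $X$, not for the subspace map $U$; but since you verify $X\mathbf{1}_I=\mathbf{1}_J$ and $X^*\mathbf{1}_J=\mathbf{1}_I$ directly for the lifted $X$, this does not affect correctness.
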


\section{Graph Spectra between 1-WL and 2-WL}
\label{sec:onetwowl}

It is well-known \cite[discussion before Theorem 2.10]{dawar_generalizations_2020} that $2$-\WL indistinguishable graphs are cospectral with respect to all matrices that lie in their adjacency algebras, which are instances of coherent algebras, cf.\@ \cref{ssec:adjacency-algebra}. This argument extends to graph parameters, like hitting times of random walks, which are determined by such spectral data \cite{lovasz_random_1996,godsil_equiarboreal_1981}. 

We aim at systematising these results by introducing $(1,1)$-\WL \emph{indistinguishability} in \cref{def:oneonewl}, a novel algorithmic equivalence relation on graphs strictly weaker than $2$-\WL. 
In \cref{ssec:eqmatmaps}, we define the notion of \emph{equitable matrix maps} and derive a spectral characterisation of $(1,1)$-\WL in terms of the spectra of these maps, stated as \cref{thm:specinv,rem:specinv}.
In \cref{ssec:septhms}, we prove \cref{ex:weaker} which states that $(1,1)$-\WL is strictly weaker than $2$-\WL in terms of distinguishing non-isomorphic graphs. 
In conjunction with \cref{thm:specinv}, this yields the desired \cref{thm:specinv-intro} which constitutes the main result of this section.
In \cref{ssec:app-algoneonewl}, $(1,1)$-\WL indistinguishability is characterised in terms of matrix equations and homomorphism indistinguishability. 

\subsection{Spectral Characterisation of $(1,1)$-\WL Indistinguishable Graphs}
\label{sec:emm}\label{ssec:eqmatmaps}

Let $G$ be a graph. 
Let $\mathcal{C}$ denote a \emph{vertex colouring}, i.e.\@ a map from $V(G)$ to a (unspecified) set of colours. 
Abusing notation, we sometimes treat $\mathcal{C}$ as a partition of $V(G)$ into colour classes $C \in \mathcal{C}$.
Define $\mathbb{C}\mathcal{C} \leq \mathbb{C}^{V(G)}$ to be the vector space spanned by colour class indicator vectors~$\boldsymbol{1}_C$, $C \in \mathcal{C}$.
The projection onto $\mathbb{C}\mathcal{C}$, typically denoted by~$M$, is called the \emph{averaging matrix} of $\mathcal{C}$ as $M = \sum_{C \in \mathcal{C}} \abs{C}^{-1} \boldsymbol{1}_C\boldsymbol{1}_C^T$ acts by averaging vectors over colour classes.
For $v \in V(G)$,
let $\mathcal{C}_G^v$ denote the vertex colouring obtained by running $1$-\WL on the vertex-individualised graph $G_v$, i.e.\@ the coloured graph obtained from $G$ by colouring $v$ orange and all other vertices grey.
\begin{definition}\label{def:emm}
A matrix map $\phi$ which associates with every graph $G$ a matrix in $\mathbb{C}^{V(G) \times V(G)}$ is an \emph{equitable matrix map} if for all $(1,1)$-\WL indistinguishable graphs $G$ and $H$ and all colour-preserving bijections $\pi \colon V(G) \to V(H)$ it holds that
\begin{enumerate}[label=(E\arabic*)]
\item for all $v\in V(G)$, $M^v \phi(G) = \phi(G) M^v$ for $M^v$ the averaging matrix of $\mathcal{C}_G^v$,\label{l1}
\item for all $v \in V(G)$ and permutation transformations $P \colon \mathbb{C}^{V(G)} \to \mathbb{C}^{V(H)}$ such that for all $x \in V(G)$ there exists $y \in V(H)$ such that $Pe_x = e_y$ and $\mathcal{C}_G^v(x) = \mathcal{C}_H^{\pi(v)}(y)$ it holds that $P\phi(G)M^v = \phi(H)PM^{v}$.\label{l2}
\end{enumerate}
Write $\mathfrak{E}$ for the set of all equitable matrix maps.
\end{definition}

Assertion~\ref{l1} guarantees that equitable matrix maps respect the subspaces $\mathbb{C}\mathcal{C}^v_G \leq \mathbb{C}^{V(G)}$ induced by $(1,1)$-\WL. That a similar statement holds for the adjacency matrix and $1$-\WL was first observed in \cite[Lemmata~3.1,~3.2]{furer_graph_1995}. \ref{l2} relates the entries of equitable matrix maps evaluated at $(1,1)$-\WL indistinguishable graphs turning them essentially into functions of the $(1,1)$-\WL colours.
\begin{lemma}\label{lem:adjacencymatrix}
The adjacency matrix map~$A$ and the degree matrix map~$D$ are equitable matrix maps.
\end{lemma}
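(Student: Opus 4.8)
The plan is to check conditions \ref{l1} and \ref{l2} of \cref{def:emm} for $\phi \in \{A, D\}$, using two elementary properties of the colouring $\mathcal{C}_G^v$ obtained by running $1$-\WL on $G_v$: since individualising $v$ leaves the edge set of $G$ unchanged, $\mathcal{C}_G^v$ is an \emph{equitable partition} of $G$, and moreover it refines the partition of $V(G)$ by degree. Equitability means that for any two classes $C, C' \in \mathcal{C}_G^v$ every vertex of $C'$ has the same number $a_{C'C}$ of neighbours in $C$, so that $A(G)\boldsymbol{1}_C = \sum_{C'} a_{C'C}\boldsymbol{1}_{C'} \in \mathbb{C}\mathcal{C}_G^v$; degree-refinement means $D(G)$ is constant, say $\equiv d_C$, on each $C$, so that $D(G)\boldsymbol{1}_C = d_C\boldsymbol{1}_C \in \mathbb{C}\mathcal{C}_G^v$. (The statement for $A$ and $1$-\WL is essentially \cite[Lemmata~3.1,~3.2]{furer_graph_1995}.)

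Condition \ref{l1} is then immediate: $M^v$ is the orthogonal projection onto $\mathbb{C}\mathcal{C}_G^v$, and a real symmetric matrix that maps $\mathbb{C}\mathcal{C}_G^v$ into itself also preserves its orthogonal complement and hence commutes with $M^v$; both $A(G)$ and $D(G)$ are symmetric and, by the previous paragraph, preserve $\mathbb{C}\mathcal{C}_G^v$.

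For \ref{l2}, fix $(1,1)$-\WL indistinguishable graphs $G, H$ with colour-preserving bijection $\pi$, a vertex $v$, and a permutation transformation $P$ as in \ref{l2}. Since $G_v$ and $H_{\pi(v)}$ are $1$-\WL indistinguishable, their stable colourings use the same colours with the same class sizes; writing $C_c^G, C_c^H$ for the colour-$c$ classes and $n_c$ for their common size, the colour-preserving bijection underlying $P$ maps $C_c^G$ onto $C_c^H$, so $P\boldsymbol{1}_{C_c^G} = \boldsymbol{1}_{C_c^H}$. Expanding $M^v = \sum_c n_c^{-1}\boldsymbol{1}_{C_c^G}\boldsymbol{1}_{C_c^G}^T$ and using the identities of the first paragraph, one computes $P A(G) M^v = \sum_c n_c^{-1}\bigl(\sum_{c'} a^G_{c'c}\boldsymbol{1}_{C_{c'}^H}\bigr)\boldsymbol{1}_{C_c^G}^T$ and $A(H) P M^v = \sum_c n_c^{-1}\bigl(\sum_{c'} a^H_{c'c}\boldsymbol{1}_{C_{c'}^H}\bigr)\boldsymbol{1}_{C_c^G}^T$, and analogously $P D(G)M^v$ and $D(H)PM^v$ with $d_c^G, d_c^H$ in place of the $a_{c'c}$. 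Hence \ref{l2} for $A$ and $D$ reduces to the assertion that $a^G_{c'c} = a^H_{c'c}$ and $d^G_c = d^H_c$ for all colours $c, c'$ occurring.

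This last assertion is the only non-routine point, and it is where the hypothesis genuinely uses $1$-\WL indistinguishability of $G_v$ and $H_{\pi(v)}$: in a stable $1$-\WL colouring the colour of a vertex records, for every colour $c$, the number of its neighbours coloured $c$ — this is exactly what one iteration of $1$-\WL computes — so $a_{c'c}$, and therefore $d_{c'} = \sum_c a_{c'c}$, is a function of the colour $c'$ alone, evaluated identically in every graph. As $c'$ occurs in the stable colouring of $G_v$ precisely when it occurs in that of $H_{\pi(v)}$, the structure constants and degrees agree, which closes the argument for both $A$ and $D$.
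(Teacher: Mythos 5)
Your proof is correct and follows essentially the same route as the paper's: both rest on the equitability of the stable $1$-\WL colouring of $G_v$ (so that $A(G)$ and $D(G)$ preserve $\mathbb{C}\mathcal{C}_G^v$ and hence commute with the averaging projection, giving \ref{l1}) and on the fact that the stable colour of a vertex determines its number of neighbours in each colour class, which together with $P\boldsymbol{1}_{C} = \boldsymbol{1}_{C'}$ yields \ref{l2}. The only cosmetic difference is that for \ref{l1} you invoke symmetry plus invariance of the subspace where the paper checks the commutation entrywise; the substance is identical.
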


\begin{proof}
	For \ref{l1}, let $G$ be a graph with equitable vertex colouring $\mathcal{C}$ whose averaging matrix is~$M$. 
	For sets $S, T\subseteq V(G)$ write $E_G(S, T)$ for the number of edges connecting vertices in $S$ with vertices in $T$.
	Then for vertices $u, v \in V(G)$, letting $\mathcal{C}(v) \subseteq V(G)$ denote the $\mathcal{C}$-colour class to which $v$ belongs,
	\[
	(M A(G))(u,v)
	= \frac{E_G(\{v\},\mathcal{C}(u))}{\abs{\mathcal{C}(u)}} 
	= \frac{E_G(\mathcal{C}(v), \mathcal{C}(u))}{\abs{\mathcal{C}(u)} \abs{\mathcal{C}(v)}}
	= \frac{E_G(\{u\}, \mathcal{C}(v))}{\abs{\mathcal{C}(v)}} 
	= (A(G) M)(u,v).
	\]
	
	For \ref{l2}, let $H$ be a second graph such that $G$ and $H$ are $(1,1)$-\WL indistinguishable. Let $\pi \colon V(G) \to V(H)$ be colour-preserving, $v \in V(G)$, and $M^v$ and $P$ as in \ref{l2}. Let $C \in \mathcal{C}_G^v$ be a colour class. Since $G$ and $H$ are $(1,1)$-\WL indistinguishable, there is a corresponding colour class $C' \in \mathcal{C}_H^{\pi(v)}$. Then
	\[
	P A(G) \boldsymbol{1}_C 
	= \sum_{D \in \mathcal{C}_G^v} E_G(D, C) P\boldsymbol{1}_D
	= \sum_{D' \in \mathcal{C}_H^{\pi(v)}} E_H(D', C') \boldsymbol{1}_{D'}
	= A(H) P\boldsymbol{1}_C
	\]
	since $P\boldsymbol{1}_C = \boldsymbol{1}_{C'}$. This implies \ref{l2}.
	Similar arguments show that $D$ is an equitable matrix map.
\end{proof}

The following \cref{lem:closure} asserts that $\mathfrak{E}$ has enjoyable algebraic properties.

\begin{lemma}\label{lem:closure}
	$\mathfrak{E}$ is closed under linear combinations, matrix products, and transposes.
\end{lemma}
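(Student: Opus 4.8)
The plan is to verify the three closure properties---linear combinations, matrix products, transposes---directly from the defining conditions \ref{l1} and \ref{l2}, since all three operations interact straightforwardly with the matrix identities appearing there. Throughout, fix $(1,1)$-\WL indistinguishable graphs $G$ and $H$, a colour-preserving bijection $\pi \colon V(G) \to V(H)$, a vertex $v \in V(G)$, the averaging matrix $M^v$ of $\mathcal{C}_G^v$ (and its counterpart $N^v$ for $\mathcal{C}_H^{\pi(v)}$), and a permutation transformation $P$ as in \ref{l2}; note $PM^v = N^v P$ by a direct computation on colour-class indicator vectors, a fact worth isolating as it will be used repeatedly.

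For \textbf{linear combinations}, suppose $\phi, \psi \in \mathfrak{E}$ and $\alpha, \beta \in \CC$; set $\chi \coloneqq \alpha\phi + \beta\psi$, so $\chi(G) = \alpha\phi(G) + \beta\psi(G)$. Condition \ref{l1} for $\chi$ follows by linearity: $M^v\chi(G) = \alpha M^v\phi(G) + \beta M^v\psi(G) = \alpha\phi(G)M^v + \beta\psi(G)M^v = \chi(G)M^v$. Condition \ref{l2} is identical: $P\chi(G)M^v = \alpha P\phi(G)M^v + \beta P\psi(G)M^v = \alpha\phi(H)PM^v + \beta\psi(H)PM^v = \chi(H)PM^v$. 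For \textbf{transposes}, set $\chi(G) \coloneqq \phi(G)^T$. Since $M^v$ is symmetric (it is an orthogonal projection), transposing \ref{l1} for $\phi$ gives $\phi(G)^T M^v = M^v \phi(G)^T$, which is \ref{l1} for $\chi$. For \ref{l2}, transpose $P\phi(G)M^v = \phi(H)PM^v$ to obtain $M^v\phi(G)^T P^T = M^v P^T \phi(H)^T$; here one uses $P^T = P^{-1}$ and the relation $PM^v = N^v P$ (equivalently $M^v P^T = P^T N^v$) to rearrange into the required form $P'\chi(G)M^v = \chi(H)P'M^v$, where $P' = P^{-1}$ is the permutation transformation $\mathbb{C}^{V(H)} \to \mathbb{C}^{V(G)}$ witnessing \ref{l2} in the reverse direction (the roles of $G$ and $H$, and of $\pi$ and $\pi^{-1}$, swap).

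For \textbf{matrix products}, let $\chi(G) \coloneqq \phi(G)\psi(G)$. Condition \ref{l1} is the only genuinely substantive point: from $M^v\phi(G) = \phi(G)M^v$ and $M^v\psi(G) = \psi(G)M^v$ we get $M^v\phi(G)\psi(G) = \phi(G)M^v\psi(G) = \phi(G)\psi(G)M^v$, using \ref{l1} for each factor once. For \ref{l2}, the key is to insert an idempotent: since $M^v$ is a projection, $(M^v)^2 = M^v$, and by \ref{l1} for $\psi$ we may write $\phi(G)\psi(G)M^v = \phi(G)M^v\psi(G)M^v$. Now apply \ref{l2} for $\phi$ to the inner $\phi(G)M^v$, producing a factor $\phi(H)P$, and then \ref{l2} for $\psi$ to the remaining $\psi(G)M^v$; the intermediate $PM^v = N^v P$ lets the $P$ pass through $\phi(H)$'s averaging matrix cleanly, and one lands on $\phi(H)\psi(H)PM^v = \chi(H)PM^v$.

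The main obstacle I anticipate is purely bookkeeping rather than conceptual: making sure that in the product and transpose cases the permutation transformation $P$ is threaded through the various averaging matrices in a consistent way, which forces one to prove and use the commutation relation $PM^v = N^v P$ at the outset. Once that lemma is in hand, every verification reduces to a one- or two-line manipulation, exactly as the text's remark that the statement is ``immediate from \ref{l1} and \ref{l2}'' suggests.
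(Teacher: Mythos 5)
Your proof is correct and is essentially the paper's argument: the paper offers no proof beyond declaring the lemma immediate from \ref{l1} and \ref{l2}, and your direct verification (linearity, inserting the idempotent $M^v$ for products, and the commutation $PM^v = M^{\pi(v)}P$ together with the $G$/$H$-symmetry of \cref{def:emm} for transposes) is precisely that verification written out. The only blemish is the displayed identity $P'\chi(G)M^v = \chi(H)P'M^v$ in the transpose case, whose dimensions do not match as literally written; with the roles of $G$ and $H$ genuinely swapped it should read $P^T\chi(H)M^{\pi(v)} = \chi(G)P^TM^{\pi(v)}$, after which your rearrangement goes through.
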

\begin{proof}
	Since \ref{l1} and \ref{l2} are linear conditions, the first claim is immediate. 
	
	For closure under matrix products, let $\phi, \psi \in \mathfrak{E}$. Let $G$ be any graph and $M^v$ be as in \ref{l1}. Then
	$M^v(\phi \cdot \psi)(G) = M^v\phi(G) \psi(G) = \phi(G)M^v\psi(G) = \phi(G) \psi(G) M^v = (\phi \cdot \psi)(G) M^v$ by \ref{l1}.
	Hence, $\phi \cdot \psi$ satisfies \ref{l1}.
	For \ref{l2}, let $H$ and $P$ be as in \cref{def:emm}. Then 
	\begin{align*} 
	P (\phi \cdot \psi)(G)M^v & = P \phi(G) \psi(G) M^v \overset{\text{\ref{l1}}}{=} P \phi(G) M^v \psi(G) \overset{\text{\ref{l2}}}{=} \phi(H) PM^v \psi(G) \\
	&\overset{\text{\ref{l1}}}{=} \phi(H) P \psi(G)M^v \overset{\text{\ref{l2}}}{=} (\phi \cdot \psi)(H) PM^v.
	\end{align*}
	Hence, \ref{l2} holds.
	
	For closure under transposes, let $\phi \in \mathfrak{E}$.
	Observe that the $M^v$ are symmetric matrices. Hence, $\phi^T$ satisfies \ref{l1}.
	Furthermore, if $P \colon \mathbb{C}^{V(G)} \to \mathbb{C}^{V(H)}$ is a permutation transformation such that for all $x \in V(G)$ there exists $y \in V(H)$ such that $Pe_x = e_y$ and $\mathcal{C}_G^v(x) = \mathcal{C}_H^{\pi(v)}(y)$ then $P^T \colon \mathbb{C}^{V(H)} \to \mathbb{C}^{V(G)}$ is such that for all $y \in V(H)$ there exists $x \in V(G)$ such that $Pe_y = e_x$ and $\mathcal{C}_G^{\pi^{-1}(v)}(x) = \mathcal{C}_H^{v}(y)$ where $\pi^{-1} \colon V(H) \to V(G)$ is a colour-preserving bijection. It follows that $P^T\phi(H)M^v = \phi(G) P^T M^v$. Hence, $M^v \phi^T(H) P = M^v P \phi^T(G)$.
	It remains to observe that $M^v P = PM^{\pi^{-1}(v)}$ and that \ref{l2} holds.
\end{proof}

\Cref{lem:adjacencymatrix,lem:closure} imply that many frequently studied matrices associated with graphs are equitable matrix maps, cf.\@ \cite{van_dam_which_2003,cvetkovic_signless_2007,von_luxburg_tutorial_2007}.

\begin{example}
The graph Laplacian $L \coloneqq D-A$, the signless Laplacian $\abs{L} \coloneqq D + A$,  the adjacency matrix of the complement $A^c \coloneqq J - A - I$, the Seidel matrix $S \coloneqq A^c - A$, the random walk normalised Laplacian $L_\text{rw} \coloneqq D^{-1}A$, and the symmetric normalised Laplacian $L_\text{sym} \coloneqq D^{-1/2}LD^{-1/2}$ are equitable matrix maps.
\end{example}

We now turn to the spectral properties of equitable matrix maps. Key here is \cref{lem:proj}, which allows to pass from an equitable matrix map to the projections onto its eigenspaces. As a corollary, it follows for example that \emph{time-$t$ heat kernels} $H_t(G)$ for $t \in \mathbb{R}$ as considered in \cite{feldman_weisfeiler_2022} are equitable matrix maps. In the notation of \cref{lem:proj}, they are defined as the matrix map $H_t(G) \coloneqq \sum_{\lambda \in \mathbb{C}} e^{-\lambda t} L_\lambda(G)$ where $L$ is the graph Laplacian.

\begin{lemma}\label{lem:proj}
Let $\phi \in \mathfrak{E}$ and $\lambda \in \mathbb{C}$. For a graph $G$, define $\phi_\lambda(G)$ to be the projection onto the space $E_\lambda$ spanned by the eigenvectors of $\phi(G)$ with eigenvalue $\lambda$. Then $\phi_\lambda \in \mathfrak{E}$.
\end{lemma}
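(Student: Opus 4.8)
The plan is to verify conditions \ref{l1} and \ref{l2} for $\phi_\lambda$ from the fact that $\phi$ already satisfies them, using the holomorphic functional calculus. The key observation is that for any graph $G$ the matrix $\phi_\lambda(G)$ is the Riesz spectral projection $\frac{1}{2\pi i}\oint_\gamma (zI-\phi(G))^{-1}\,dz$, where $\gamma$ is a small positively oriented loop enclosing $\lambda$ and no other eigenvalue of $\phi(G)$; equivalently, $\phi_\lambda(G) = q_G(\phi(G))$ for a polynomial $q_G$ obtained by Lagrange--Sylvester interpolation. When $\phi(G)$ is diagonalisable this is exactly the projection onto the eigenspace $E_\lambda$; in general one reads $E_\lambda$ as the corresponding generalised eigenspace. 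Note that $q_G$ depends on $G$, so \cref{lem:closure} does not apply directly. Condition \ref{l1} is nonetheless immediate: by \ref{l1} for $\phi$, $M^v$ commutes with $\phi(G)$, hence with every polynomial in $\phi(G)$, in particular with $\phi_\lambda(G)$. The same commutation shows that the subspace $V_G \coloneqq \mathbb{C}\mathcal{C}_G^v = \operatorname{im} M^v$ is invariant under $\phi(G)$, and therefore also under $\phi_\lambda(G) = q_G(\phi(G))$.

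For \ref{l2}, fix $(1,1)$-\WL indistinguishable graphs $G, H$, a colour-preserving bijection $\pi$, a vertex $v$, and $P$ as in \ref{l2}; set $V_H \coloneqq \mathbb{C}\mathcal{C}_H^{\pi(v)} = \operatorname{im} M^{\pi(v)}$. Since $P$ maps each indicator $\boldsymbol{1}_C$, $C \in \mathcal{C}_G^v$, to $\boldsymbol{1}_{C'}$ for the matching class $C' \in \mathcal{C}_H^{\pi(v)}$ (as in the proof of \cref{lem:adjacencymatrix}), and $\mathcal{C}_G^v$ and $\mathcal{C}_H^{\pi(v)}$ have the same number of classes by $(1,1)$-\WL indistinguishability, $P$ restricts to a linear isomorphism $\bar P \colon V_G \to V_H$. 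By \ref{l1} for $G$ and $H$, the subspaces $V_G$ and $V_H$ are invariant under $\phi(G)$ and $\phi(H)$ respectively, so these restrictions make sense. Restricting \ref{l2} for $\phi$ to vectors $w \in V_G$, on which $M^v$ acts as the identity, yields $P\phi(G)w = \phi(H)Pw$, i.e. $\bar P\,(\phi(G)|_{V_G}) = (\phi(H)|_{V_H})\,\bar P$: the isomorphism $\bar P$ conjugates the restriction $\phi(G)|_{V_G}$ to $\phi(H)|_{V_H}$.

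It remains to transport this conjugacy through the functional calculus. Because $V_G$ is $\phi(G)$-invariant, restricting the Riesz integral shows that $\phi_\lambda(G)|_{V_G}$ equals the spectral projection of $\phi(G)|_{V_G}$ at $\lambda$ (which is $0$ if $\lambda$ is not an eigenvalue of the restriction), and similarly for $H$. Spectral projections are similarity invariants, hence $\bar P\,(\phi_\lambda(G)|_{V_G}) = (\phi_\lambda(H)|_{V_H})\,\bar P$. Evaluating both sides on an arbitrary vector of the form $M^v w$, which lies in $V_G$ and is mapped by $P$ into $V_H$, and recalling that $\phi_\lambda(G)$ and $\phi_\lambda(H)$ preserve $V_G$ and $V_H$, this unravels to $P\,\phi_\lambda(G)\,M^v = \phi_\lambda(H)\,P\,M^v$, which is \ref{l2} for $\phi_\lambda$.

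I expect the main obstacle to be precisely that the spectral-projection polynomial $q_G$ varies with the graph, so that \ref{l2} cannot be obtained from \cref{lem:closure} by exhibiting $\phi_\lambda$ as a fixed polynomial in $\phi$; the remedy of passing to the $(1,1)$-\WL-invariant subspaces $V_G, V_H$, on which $\phi(G)$ and $\phi(H)$ become \emph{similar}, and invoking similarity-invariance of spectral projections, is the technical heart of the argument. A minor further point is the case where $\phi(G)$ fails to be diagonalisable, handled uniformly by working with Riesz projections onto generalised eigenspaces throughout.
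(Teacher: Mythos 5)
Your proof is correct, but it takes a genuinely different technical route from the paper's. The paper argues directly: for \ref{l1} it checks that $M^v$ maps $E_\lambda$ into $E_\lambda$ (and, being self-adjoint, also preserves $(E_\lambda)^\perp$), and for \ref{l2} it verifies the identity $P\phi_\lambda(G)M^v = \phi_\lambda(H)PM^v$ separately on $E_\lambda$ and on $(E_\lambda)^\perp$, the key computation being that for $x \in E_\lambda$ the vector $PM^vx$ is again a $\lambda$-eigenvector, now of $\phi(H)$. You instead realise $\phi_\lambda(G)$ as the Riesz/spectral projection (a polynomial $q_G(\phi(G))$), get \ref{l1} by commutation with polynomials, and for \ref{l2} restrict to the \WL subspaces $V_G = \mathbb{C}\mathcal{C}^v_G$ and $V_H$, observe that $P$ restricts to a similarity between $\phi(G)|_{V_G}$ and $\phi(H)|_{V_H}$, and transport the spectral projection through that similarity. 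The underlying insight (on the subspace where $M^v$ acts as the identity, $P$ intertwines $\phi(G)$ and $\phi(H)$) is the same, but your packaging buys robustness: it makes no use of orthogonality of eigenspaces, so it applies verbatim to non-normal members of $\mathfrak{E}$ such as $D^{-1}A$, whereas the paper's treatment of the $(E_\lambda)^\perp$ case leans on the orthogonal-projection reading of $\phi_\lambda$.

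The one caveat is definitional rather than a gap: you read ``the projection onto $E_\lambda$'' as the spectral (Riesz) projection, which in the non-diagonalisable case has the generalised eigenspace as its image, while the paper's own proof treats $\phi_\lambda(G)$ as the orthogonal projection onto the span of genuine eigenvectors (its argument uses $(E_\lambda)^\perp$), matching the geometric-multiplicity reading of the spectrum later exploited in \cref{thm:specinv}. Under the paper's reading your \ref{l1} argument (``a polynomial in $\phi(G)$'') would not apply when $\phi(G)$ is not normal, so the two proofs establish slightly different—each internally consistent—versions of the ambiguously stated lemma; you flag this explicitly, which is fair, but be aware that your variant yields traces equal to algebraic rather than geometric multiplicities when $\phi(G)$ is not diagonalisable.
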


\begin{proof}
	By~\ref{l1}, $\phi(G)$ preserves the direct sum decomposition $\mathbb{C}^{V(G)} = \mathbb{C}\mathcal{C}^v_G \oplus (\mathbb{C}\mathcal{C}^v_G)^\perp$ for the individualised vertex colouring of any $v \in V(G)$.
	Furthermore, $M^v$ preserves the direct sum decomposition $\mathbb{C}^{V(G)} = E_\lambda \oplus (E_\lambda)^\perp$. 
	Indeed, if $x \in E_\lambda$ then $\phi(G)M^vx = M^v\phi(G)x = \lambda M^vx$, so $M^v x \in E_\lambda$.
	This implies~\ref{l1}.
	
	For \ref{l2}, observe that if $x \in E_\lambda$ then $\lambda P M^vx = P\phi(G)M^v x = \phi(H)PM^vx = \phi(H)PM^vx$ by~\ref{l2}, so $PM^vx$ is in the eigenspace of $\phi(H)$ with eigenvalue $\lambda$. Hence, $P\phi_\lambda(G)M^v x = PM^v x = \phi_\lambda(H) PM^vx$. If $x \in (E_\lambda)^\perp$ then $P\phi_\lambda(G)M^v x = 0 = \phi_\lambda(H) PM^vx$. This implies~\ref{l2}.
\end{proof}

In \cref{thm:colentry}, the correspondence between the colouring algorithm $(1,1)$-\WL and equitable matrix maps is established.
As a consequence, it is proved in \cref{thm:specinv} that a spectral invariant generalising Fürer's spectral invariant (\cref{def:furer-inv}, \cite{furer_power_2010}) is determined by $(1,1)$-\WL.

\begin{proposition}\label{thm:colentry}
Let $\phi \in \mathfrak{E}$.
Let $G$ and $H$ be graphs, $v, w \in V(G)$, and $x, y \in V(H)$. 
If the $(1,1)$-\WL colours $\mathcal{C}_G^v(w)$ and $\mathcal{C}_H^x(y)$ coincide 
then the $(v,w)$-th entry of $\phi(G)$
and the $(x,y)$-th entry of $\phi(H)$ are equal.
\end{proposition}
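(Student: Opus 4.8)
The plan is to relate the entries of $\phi(G)$ to the action of $\phi(G)$ on colour class indicator vectors, and then transport this information between $G$ and $H$ via the permutation-type map $P$ furnished by $(1,1)$-\WL indistinguishability. First I would observe that since $\mathcal{C}_G^v(w) = \mathcal{C}_H^x(y)$, in particular the atomic types of $v$ in $G_v$ and of $x$ in $H_x$ agree, so $v \mapsto x$ extends (this is what a colour-preserving bijection does); more carefully, I would invoke \Cref{ex:weaker} or the definitions to get that $G$ and $H$ are $(1,1)$-\WL indistinguishable with a colour-preserving bijection $\pi$ satisfying $\pi(v) = x$, and that $\pi$ maps the colour class of $w$ in $\mathcal{C}_G^v$ onto the colour class of $y$ in $\mathcal{C}_H^x$.

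The key computation is to express the $(v,w)$-entry of $\phi(G)$ using the averaging matrix $M^v$ of $\mathcal{C}_G^v$. Let $C = \mathcal{C}_G^v(w)$ be the colour class of $w$. By \ref{l1}, $\phi(G)$ commutes with $M^v$, hence $\phi(G)$ preserves $\mathbb{C}\mathcal{C}_G^v$, so $\phi(G)\boldsymbol{1}_C \in \mathbb{C}\mathcal{C}_G^v$, i.e.\ $\phi(G)\boldsymbol{1}_C$ is constant on each colour class. In particular, for any $w' \in C$ we have $(\phi(G))(v,w') = (\phi(G))(v,w)$ provided $v$ lies in a singleton class — which it does, since $v$ is individualised (coloured orange) and $1$-\WL keeps the orange vertex in its own class. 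So the entry $(\phi(G))(v,w)$ equals $e_v^T \phi(G)\boldsymbol{1}_C$ up to the normalising factor; concretely $e_v^T\phi(G)\boldsymbol{1}_C = \sum_{w' \in C}(\phi(G))(v,w') = \abs{C}\cdot(\phi(G))(v,w)$, and likewise $e_v^T \phi(G) M^v e_w = (\phi(G))(v,w)$ since $M^v e_w = \abs{C}^{-1}\boldsymbol{1}_C$ and $e_v^T\phi(G)M^v e_v = (\phi(G))(v,v)$.

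Next I would build the permutation transformation $P \colon \mathbb{C}^{V(G)} \to \mathbb{C}^{V(H)}$ required in \ref{l2}: send $e_z \mapsto e_{\pi(z)}$ for all $z \in V(G)$. Since $\pi$ is colour-preserving for the individualisation at $v$, this $P$ satisfies the hypothesis of \ref{l2}, namely $\mathcal{C}_G^v(z) = \mathcal{C}_H^{\pi(v)}(\pi(z)) = \mathcal{C}_H^x(\pi(z))$ for all $z$. Applying \ref{l2} gives $P\phi(G)M^v = \phi(H)PM^v$. Evaluating both sides at $e_w$: the left side is $P\phi(G)(\abs{C}^{-1}\boldsymbol{1}_C)$, and using that $\phi(G)\boldsymbol{1}_C$ is constant (value $(\phi(G))(v,w)\abs{C}$ in the $v$-coordinate, but I want the full vector) — more cleanly, pair with $e_x^T = e_{\pi(v)}^T$ on the left: $e_x^T P \phi(G) M^v e_w = e_v^T \phi(G) M^v e_w = (\phi(G))(v,w)$ (since $P^T e_x = e_v$ as $\pi(v)=x$), while $e_x^T \phi(H) P M^v e_w = e_x^T\phi(H) P(\abs{C}^{-1}\boldsymbol{1}_C) = \abs{C}^{-1} e_x^T \phi(H)\boldsymbol{1}_{C'} = \abs{C'}^{-1}e_x^T\phi(H)\boldsymbol{1}_{C'} = (\phi(H))(x,y)$, where $C' = \pi(C) = \mathcal{C}_H^x(y)$ and $\abs{C}=\abs{C'}$, and the last equality uses that $\phi(H)$ commutes with $M^x$ so $\phi(H)\boldsymbol{1}_{C'}$ is constant on colour classes hence equal to $(\phi(H))(x,y)$ in every coordinate indexed by $C'$, in particular we use $y \in C'$ together with $x$ being a singleton class. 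Equating the two expressions yields $(\phi(G))(v,w) = (\phi(H))(x,y)$, as desired.

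I expect the main obstacle to be bookkeeping rather than conceptual: making sure the colour-preserving bijection $\pi$ can be chosen with $\pi(v) = x$ and $\pi(w) = y$ (or at least $\pi(w) \in \mathcal{C}_H^x(y)$), and carefully justifying that the individualised vertices $v$ and $x$ each form a singleton $1$-\WL colour class so that entries in "row $v$" are recovered exactly (not just up to averaging) from $\phi(G)M^v$. One subtlety: the hypothesis only says the $(1,1)$-\WL \emph{colours} $\mathcal{C}_G^v(w)$ and $\mathcal{C}_H^x(y)$ coincide, which a priori is weaker than assuming a global colour-preserving bijection exists; I would need to check that equality of these stable $1$-\WL colours on $G_v$ and $H_x$ already implies $G_v$ and $H_x$ are $1$-\WL indistinguishable as coloured graphs (which it does, since the colour of the orange vertex encodes the whole colour-class-size profile), and that therefore the relevant $P$ with the property in \ref{l2} exists even without a pre-chosen global $\pi$ — indeed \ref{l2} is quantified over all such $P$, and we just need one, built by matching colour classes of $\mathcal{C}_G^v$ with those of $\mathcal{C}_H^x$ bijectively (equicardinal by indistinguishability) and sending $w \mapsto y$ inside the matched classes.
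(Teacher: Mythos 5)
Your proof is correct and follows essentially the same route as the paper's: pair with $e_x^T$, use that the individualised vertex lies in a singleton colour class so $M^v e_v = e_v$, and push $\phi$ through $M^v$ and a colour-matching permutation transformation $P$ (with $Pe_v = e_x$ and $PM^v e_w = M^x e_y$) via \ref{l1} and \ref{l2} to arrive at $e_x^T \phi(H) M^x e_y = \phi(H,x,y)$. Your closing concern is handled in the paper simply by presupposing the existence of such a colour-matching $P$ (equivalently, working where $G$ and $H$ are $(1,1)$-\WL indistinguishable, as in all applications of the proposition), so your attempted derivation of global indistinguishability from the single colour equality --- which would in fact fail for disconnected graphs --- is not needed, and likewise the detour through constancy of $\phi(G)\boldsymbol{1}_C$ can be replaced by the one-line identity $e_v^T\phi(G)M^v e_w = e_v^T M^v \phi(G) e_w = \phi(G,v,w)$.
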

\begin{proof}
	Let $P \colon \mathbb{C}^{V(G)} \to \mathbb{C}^{V(H)}$ be a permutation transformation such that for all $u \in V(G)$ and $z \in V(H)$ such that $Pe_u = Pe_z$ it holds that $\mathcal{C}^v_G(u) = \mathcal{C}^x_H(z)$. Then,
	\begin{align*}
		\phi(G, v, w) 
		&= e_v^T \phi(G) e_w 
		= e_v^T M^v \phi(G) e_w 
		\overset{\text{\ref{l1}}}{=} e_v^T \phi(G) M^v e_w 
		= e_x^T P^T \phi(G) M^v e_w \\
		&\overset{\text{\ref{l2}}}{=} e_x^T \phi(H) P M^v e_w 
		= e_x^T \phi(H) M^{x} e_y 
		\overset{\text{\ref{l1}}}{=} e_x^T M^x \phi(H) e_y
		= \phi(H, x, y),
	\end{align*}
	since $M^v e_v = e_v$ for all $v \in V(G)$ because $v$ lies in a singleton colour class in $\mathcal{C}^v_G$.
\end{proof}

\begin{theorem}\label{thm:specinv}\label{cor:cospectrality}
If $G$ and $H$ are $(1,1)$-\WL indistinguishable graphs then $\mathcal{S}(G) = \mathcal{S}(H)$ for
\[
\mathcal{S}(G) \coloneqq \left(
\phi \mapsto \Spec \phi(G),
\multiset{S^G_v \mid v \in V(G)} \right)
\]
where $S^G_v \coloneqq \left( \phi \mapsto \phi(G, v,v), \multiset{\phi \mapsto \phi(G, v, w) \mid w \in V(G)}\right)$
and the maps indicated by $\mapsto$ have domain $\mathfrak{E}$.
In particular, for every $\phi \in \mathfrak{E}$, the matrices $\phi(G)$ and $\phi(H)$ are cospectral.
\end{theorem}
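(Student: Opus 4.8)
The plan is to derive both components of $\mathcal{S}$ from \cref{thm:colentry}, using $(1,1)$-\WL indistinguishability to supply the colour coincidences needed to invoke it, and using \cref{lem:proj} to re-express eigenvalue multiplicities as traces of equitable matrix maps.

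First I would fix a colour-preserving bijection $\pi\colon V(G) \to V(H)$ witnessing that $G$ and $H$ are $(1,1)$-\WL indistinguishable, and show that it already identifies the second component $\multiset{S^G_v \mid v \in V(G)}$. For each $v \in V(G)$, the individualised graphs $G_v$ and $H_{\pi(v)}$ are $1$-\WL indistinguishable, so they carry the same multiset of stable $1$-\WL colours; since $v$ and $\pi(v)$ are the unique orange vertices of $G_v$ and $H_{\pi(v)}$, their stable colours agree, i.e.\@ $\mathcal{C}_G^v(v) = \mathcal{C}_H^{\pi(v)}(\pi(v))$, and there is a bijection $\sigma_v\colon V(G) \to V(H)$ with $\sigma_v(v) = \pi(v)$ and $\mathcal{C}_G^v(w) = \mathcal{C}_H^{\pi(v)}(\sigma_v(w))$ for all $w \in V(G)$. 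Now \cref{thm:colentry}, applied to these colour coincidences, yields $\phi(G, v, v) = \phi(H, \pi(v), \pi(v))$ and $\phi(G, v, w) = \phi(H, \pi(v), \sigma_v(w))$ for every $\phi \in \mathfrak{E}$. Hence $S^G_v = S^H_{\pi(v)}$, the inner multisets being matched through $\sigma_v$, and since $\pi$ is a bijection this gives $\multiset{S^G_v \mid v \in V(G)} = \multiset{S^H_{v'} \mid v' \in V(H)}$.

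Next I would treat the first component. Fix $\phi \in \mathfrak{E}$ and $\lambda \in \mathbb{C}$; by \cref{lem:proj} the projection $\phi_\lambda$ onto the $\lambda$-eigenspace of $\phi(\cdot)$ is again an equitable matrix map. Since $\phi_\lambda(G)$ is a projection whose image is the eigenspace of $\phi(G)$ with eigenvalue $\lambda$, its trace equals the dimension of that eigenspace, that is, the geometric multiplicity of $\lambda$ as an eigenvalue of $\phi(G)$; likewise $\tr \phi_\lambda(H)$ is the geometric multiplicity of $\lambda$ in $\phi(H)$. Applying the diagonal identity of the previous paragraph to $\phi_\lambda$ in place of $\phi$ and summing over $v$ through $\pi$, $\tr \phi_\lambda(G) = \sum_{v \in V(G)} \phi_\lambda(G, v, v) = \sum_{v \in V(G)} \phi_\lambda(H, \pi(v), \pi(v)) = \sum_{v' \in V(H)} \phi_\lambda(H, v', v') = \tr \phi_\lambda(H)$. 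So $\lambda$ has equal geometric multiplicity in $\phi(G)$ and $\phi(H)$ for every $\lambda \in \mathbb{C}$, whence $\Spec \phi(G) = \Spec \phi(H)$. As $\phi$ was arbitrary, the maps $\phi \mapsto \Spec \phi(G)$ and $\phi \mapsto \Spec \phi(H)$ coincide, and together with the previous paragraph this gives $\mathcal{S}(G) = \mathcal{S}(H)$; the concluding ``in particular'' is the special case of a single $\phi$.

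The only delicate point I anticipate is justifying that $\tr \phi_\lambda(G)$ really computes the geometric multiplicity: one has to read $E_\lambda$ as the geometric $\lambda$-eigenspace of $\phi(G)$ and $\phi_\lambda(G)$ as the projection onto it (exactly as in the proof of \cref{lem:proj}), after which the identity that the trace of a projection equals its rank gives the claim even though $\phi(G)$ need not be symmetric or diagonalisable. Everything else is routine bookkeeping with the bijections $\pi$, $\sigma_v$ and the multiset matchings they induce.
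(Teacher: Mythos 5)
Your proposal is correct and follows essentially the same route as the paper: the second component is matched via the bijections $\pi$ and $\sigma_v$ together with \cref{thm:colentry}, and the spectrum is recovered by applying \cref{lem:proj} and reading off $\dim E_\lambda = \tr \phi_\lambda(G) = \sum_v \phi_\lambda(G,v,v)$ from the diagonal entries. Your extra remark about interpreting $\Spec$ via geometric multiplicities of possibly non-diagonalisable $\phi(G)$ is a fair clarification of a point the paper leaves implicit, but it does not change the argument.
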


\begin{proof}
	It is first shown that the second components $\mathcal{S}(G)$ and $\mathcal{S}(H)$ are equal by constructing suitable bijections which witness the equality of the involved multisets.
	Since $G$ and $H$ are $(1,1)$-\WL indistinguishable, there exist bijections $\pi \colon V(G) \to V(H)$ and $\sigma_v \colon V(G) \to V(H)$ for every $v \in V(G)$ such that $\mathcal{C}^v_G(w) = \mathcal{C}^{\pi(v)}_H(\sigma_v(w))$ for all $v, w \in V(G)$. Observe that $\sigma_v(v) = \pi(v)$ for all $v\in V(G)$. By \cref{thm:colentry}, $\phi(G, v, v) = \phi(H, \pi(v), \pi(v))$ and $\phi(G, v, w) = \phi(H, \pi(v), \sigma_v(w))$. This implies that the second component of $\mathcal{S}(G)$ is determined by $(1,1)$-\WL.
	
	It remains to show that the first component of $\mathcal{S}(G)$ is determined by its second component. Let $\phi \in \mathfrak{E}$. By \cref{lem:proj}, for every $\lambda \in \mathbb{C}$, the projection $\phi_\lambda$ onto the span $E_\lambda$ of the eigenvectors of $\phi$ with eigenvalue $\lambda$ is in $\mathfrak{E}$. Observe that $\dim E_\lambda = \tr \phi_\lambda(G) = \sum_{v \in V(G)} \phi_\lambda(G, v, v)$. This is determined by the second component of $\mathcal{S}(G)$. Hence, for every $\phi \in \mathfrak{E}$ and every $\lambda \in \mathbb{C}$ the spans of the eigenvectors of $\phi(G)$ and of the eigenvectors of $\phi(H)$ with eigenvalue $\lambda$ have the same dimension. This implies that $\Spec \phi(G) = \Spec \phi(H)$.
\end{proof}

\begin{remark}\label{rem:specinv}
There is a rather trivial converse to \cref{thm:specinv,cor:cospectrality}: For a graph $G$, let $\omega_G$ be the matrix map that takes as value either the zero matrix or the identity matrix of appropriate size such that $\omega_G(G) = \omega_G(H)$ if and only if $G$ and $H$ are $(1,1)$-\WL indistinguishable. Clearly, $\omega_G \in \mathfrak{E}$.
\end{remark}

\subsection{Separation of $(1,1)$-\WL and $2$-\WL} 
\label{ssec:septhms}

It is well-known that there exist $1$-\WL indistinguishable graphs which are not cospectral, e.g.\@ the cycle graph on six vertices and the disjoint union of two triangle graphs. 
Hence, $1$-\WL is strictly less powerful than $(1,1)$-\WL in terms of its ability to distinguish non-isomorphic graphs. The following theorem shows that $(1,1)$-\WL is strictly less powerful than $2$-\WL in terms of distinguishing power.\footnote{This fact is probably known to many experts in the field. However, we could not find any reference for it.}
In conjunction with \cref{thm:specinv}, this yields the desired \cref{thm:specinv-intro}. The graphs constructed in \cref{ex:weaker} are depicted in \cref{fig:cntex}.

\begin{theorem}\label{ex:weaker}
There exist graphs $G$ and $H$ which are $(1,1)$-\WL indistinguishable but not $2$-\WL indistinguishable.
\end{theorem}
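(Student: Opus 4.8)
The plan is to exhibit an explicit pair of small graphs and verify the two required properties by hand (or, realistically, sketch the verification and leave the routine bookkeeping to the reader). I would look for the smallest known pair of non-isomorphic graphs that are $2$-\WL distinguishable but ``locally very symmetric'', so that vertex-individualisation followed by $1$-\WL cannot tell them apart. A natural source is a pair of strongly regular graphs or a pair built from a combinatorial gadget with a vertex-transitive (or nearly so) automorphism group: the classic candidates are the $4\times 4$ rook's graph and the Shrikhane graph, both strongly regular with parameters $(16,6,2,2)$. These are $2$-\WL distinguishable but have the same $1$-\WL colouring, and one can hope that after individualising a single vertex the resulting refinements still agree as multisets. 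Alternatively one can take two non-isomorphic $1$-\WL-equivalent regular graphs on few vertices that are cospectral for every equitable matrix map but still separated by $2$-\WL; \cref{fig:cntex} indicates the authors have a concrete small example in mind, so I would simply present those two graphs.

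Concretely, the proof would proceed in three steps. First, \textbf{describe $G$ and $H$}: give their vertex and edge sets (or a picture), and note $|V(G)| = |V(H)|$ and that both are regular of the same degree, so $1$-\WL does not distinguish them. Second, \textbf{show $(1,1)$-\WL indistinguishability}: construct the colour-preserving bijection $\pi \colon V(G) \to V(H)$ required by \cref{def:oneonewl}. For each $v \in V(G)$ one must check that $G_v$ and $H_{\pi(v)}$ have the same stable $1$-\WL colouring, i.e.\@ that individualising $v$ in $G$ and $\pi(v)$ in $H$ and running colour refinement yields identical colour-class-size profiles and identical refinement trees. If $G$ and $H$ are vertex-transitive this reduces to a single check (one choice of $v$), which is the reason to pick highly symmetric graphs. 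Third, \textbf{show $2$-\WL distinguishes them}: exhibit a $2$-\WL-computable quantity that differs — for strongly regular graphs with the same parameters this is delicate (they are $2$-\WL equivalent!), so in fact I must \emph{not} pick an SRG pair; instead I need a pair that $1$-\WL and individualisation-plus-$1$-\WL cannot separate but $2$-\WL can. The cleanest route is then to take $G$ and $H$ regular with the property that every pair of individualised graphs $G_v, H_w$ has a ``symmetric enough'' neighbourhood structure, while the global pattern of the multiset $\multiset{\WL_2^\infty(G,(x,y)) \mid x,y}$ differs — e.g.\@ because $G$ and $H$ differ in the number of $4$-cycles, which $2$-\WL counts but which is invisible to $1$-\WL even after one individualisation.

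The honest way to present this is: state the two graphs, verify $2$-\WL distinguishes them by computing a small invariant (like the number of common neighbours of adjacent versus non-adjacent pairs, aggregated — essentially the coherent-closure data), and verify $(1,1)$-\WL equivalence by checking that for every $v$, colour refinement on $G_v$ produces a partition of $V(G) \setminus \{v\}$ with the same signature as on $H_{\pi(v)}$. I would organise the latter as a short case analysis over the $1$-\WL colour (here trivial, one class) of $v$, and within each case track how the individualised colour propagates by distance layers. The routine calculation is the layer-by-layer refinement; I would present the resulting stable colouring as a table rather than deriving it step by step.

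The main obstacle is \textbf{locating the right example}: the example must thread between three constraints simultaneously — $1$-\WL-equivalent (so regular, same degree, same order), $(1,1)$-\WL-equivalent (so every single vertex-individualisation fails to separate them, which is a strong and somewhat unusual condition), yet $2$-\WL-separated (so \emph{not} having the same coherent closure / not both SRGs with equal parameters). Finding such a pair, and then checking the $(1,1)$-\WL condition for all $n$ vertices, is where essentially all the work lies; once the pair is in hand, both verifications are mechanical. I expect the paper uses a known small construction (the figure suggests on the order of ten to twenty vertices), possibly obtained by a computer search or adapted from the CFI-type or Miyazaki-type gadgets, and I would adopt that construction and simply carry out the two checks.
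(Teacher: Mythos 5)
There is a genuine gap: your proposal never actually produces the pair of graphs, and the existence of such a pair is the entire content of the theorem. You survey candidate sources (strongly regular graphs, regular cospectral pairs, CFI/Miyazaki gadgets) and correctly discard the SRG route, but you end by deferring to ``the construction the paper has in mind,'' so nothing is proved. Worse, the fallback separating invariant you propose would not work: you suggest letting $2$-\WL detect a difference in the number of $4$-cycles, but by \cref{thm:specinv} any $(1,1)$-\WL indistinguishable graphs are cospectral (and share the degree sequence via $1$-\WL), so they have the same $\tr A^\ell$ for all $\ell$ and hence the same number of $4$-cycles; cycle counts can never separate $(1,1)$-\WL equivalent graphs. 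Your working assumption that the example should be regular and ideally vertex-transitive also points away from what actually works.

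The paper's construction is of a different flavour and is not regular: take a backbone $4$-cycle $u_1u_2u_3u_4$, attach to each backbone vertex (by a complete join of that single vertex to all six gadget vertices) a copy of either the $6$-cycle $X$ or the disjoint union of two triangles $Y$ --- the classical $1$-\WL-equivalent non-isomorphic pair. In $G$ the two $X$-copies sit on adjacent backbone vertices and the two $Y$-copies on the other two; in $H$ they alternate around the cycle. Then $2$-\WL distinguishes $G$ and $H$: in the $3$-pebble bijective game Spoiler plays along the backbone to reach a position where a pebbled vertex of $G$ points to an $X$-copy while its partner in $H$ points to a $Y$-copy, and the two remaining pebbles expose a triangle in $Y$ that has no counterpart in $X$. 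On the other hand, individualising any single vertex and running $1$-\WL cannot tell the $XXYY$ arrangement from the $XYXY$ arrangement, essentially because $X$ and $Y$ are $1$-\WL indistinguishable even relative to the colour propagated from the individualised vertex; one checks the four cases (individualised vertex on the backbone pointing to an $X$- or $Y$-copy, or inside an $X$- or $Y$-copy) and assembles the colour-preserving bijection. If you want to salvage your write-up, you need to supply a concrete construction of this kind and carry out that case analysis; the verification template you describe (layer-by-layer refinement, presented as a table) is fine, but it has nothing to act on until the example exists.
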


\begin{figure}[t]
	\centering
	\captionsetup[subfigure]{justification=centering}
	\tikzset{
		vertex/.style = {fill,circle,inner sep=0pt,minimum size=5pt},
		edge/.style = {-,thick},
		link/.style = {-, semithick},
		lbl/.style={color=lightgray},
		node/.style = {inner sep=0.7pt,circle,draw,fill}
	}
\begin{subfigure}[t]{0.35 \textwidth}
\centering
\begin{tikzpicture}
	\node[vertex] (a) [label = {}] {} ;
	\node[vertex] (b) [label = {}] [right = of a] {}; 
	\node[vertex] (c) [label = {}] [below = of b] {}; 
	\node[vertex] (d) [label = {}] [left = of c] {}; 
	
	\draw[edge] (a) -- (b) {}; 
	\draw[edge] (b) -- (c) {}; 
	\draw[edge] (c) -- (d) {}; 
	\draw[edge] (d) -- (a) {}; 
	
\begin{scope}[xshift = -0.99cm, yshift=0.65cm]
		\node[node] (aa) at (60:0.25cm) {};
		\node[node] (bb) at (180:0.25cm) {};
		\node[node] (cc) at (300:0.25cm) {};
		\draw[link] (aa) -- (bb) {};
		\draw[link] (cc) -- (bb) {};
		\draw[link] (aa) -- (cc) {};
		
		\begin{scope}[xshift = 0.5cm]
			\node[node] (aa) at (0:0.25cm) {};
			\node[node] (bb) at (120:0.25cm) {};
			\node[node] (cc) at (240:0.25cm) {};
			\draw[link] (aa) -- (bb) {};
			\draw[link] (cc) -- (bb) {};
			\draw[link] (aa) -- (cc) {};
		\end{scope}
	\end{scope}
	
\begin{scope}[xshift = 1.65cm, yshift=0.65cm]
		\node[node] (aa) at (60:0.25cm) {};
		\node[node] (bb) at (180:0.25cm) {};
		\node[node] (cc) at (300:0.25cm) {};
		\draw[link] (aa) -- (bb) {};
		\draw[link] (cc) -- (bb) {};
		\draw[link] (aa) -- (cc) {};
		
		\begin{scope}[xshift = 0.5cm]
			\node[node] (aa) at (0:0.25cm) {};
			\node[node] (bb) at (120:0.25cm) {};
			\node[node] (cc) at (240:0.25cm) {};
			\draw[link] (aa) -- (bb) {};
			\draw[link] (cc) -- (bb) {};
			\draw[link] (aa) -- (cc) {};
		\end{scope}
	\end{scope}
	
\begin{scope}[xshift = 1.80cm, yshift=-1.80cm]
		\newdimen\R
		\R=0.3cm
		\draw (0:\R) \foreach \x in {60,120,...,360} {  -- (\x:\R) };
		\foreach \x/\l/\p in
		{ 60/{}/above, 120/{}/above, 180/{}/left, 240/{}/below, 300/{}/below, 360/{}/right }
		\node[inner sep=0.7pt,circle,draw,fill,label={\p:\l}] at (\x:\R) {};
	\end{scope}
	
\begin{scope}[xshift = -0.65cm, yshift=-1.8cm]
		\newdimen\R
		\R=0.3cm
		\draw (0:\R) \foreach \x in {60,120,...,360} {  -- (\x:\R) };
		\foreach \x/\l/\p in
		{ 60/{}/above, 120/{}/above, 180/{}/left, 240/{}/below, 300/{}/below, 360/{}/right }
		\node[inner sep=0.7pt,circle,draw,fill,label={\p:\l}] at (\x:\R) {};
	\end{scope}
	
	\begin{scope}[on background layer]
\draw[dashed, thin ,fill = black!10] (0,0) -- (105:15mm)
		arc [start angle=105, end angle=165, radius=15mm] -- (0,0); 
		\draw[dashed, thin ,fill = black!10, xshift = 1.15cm] (0,0) -- (15:15mm)
		arc [start angle=15, end angle=75, radius=15mm] -- (0,0); 
		\draw[dashed, thin ,fill = black!10, yshift = -1.15cm] (0,0) -- (195:15mm)
		arc [start angle=195, end angle=255, radius=15mm] -- (0,0); 
		\draw[dashed, thin ,fill = black!10, xshift = 1.15cm, yshift= -1.15cm] 
		(0,0) -- (285:15mm)
		arc [start angle=285, end angle=345, radius=15mm] -- (0,0); 
		\node[] at (-1.5cm,1.5cm) {$Y_2$};  
		\node[] at (2.5cm,1.5cm) {$Y_1$};  
		\node[] at (-1.5cm,-2.5cm) {$X_1$};  
		\node[] at (2.5cm,-2.5cm) {$X_2$};  
	\end{scope}
\end{tikzpicture}
\caption{Graph $G$}
	\label{fig:a}
\end{subfigure}
\quad\quad\quad
\begin{subfigure}[t]{0.35 \textwidth}
		\centering
		\begin{tikzpicture}
			
			\node[vertex] (a) [label = {}] {} ;
			\node[vertex] (b) [label = {}] [right = of a] {}; 
			\node[vertex] (c) [label = {}] [below = of b] {}; 
			\node[vertex] (d) [label = {}] [left = of c] {}; 
			
			\draw[edge] (a) -- (b) {}; 
			\draw[edge] (b) -- (c) {}; 
			\draw[edge] (c) -- (d) {}; 
			\draw[edge] (d) -- (a) {}; 
			
\begin{scope}[xshift = -0.99cm, yshift=0.65cm]
				\node[node] (aa) at (60:0.25cm) {};
				\node[node] (bb) at (180:0.25cm) {};
				\node[node] (cc) at (300:0.25cm) {};
				\draw[link] (aa) -- (bb) {};
				\draw[link] (cc) -- (bb) {};
				\draw[link] (aa) -- (cc) {};
				
				\begin{scope}[xshift = 0.5cm]
					\node[node] (aa) at (0:0.25cm) {};
					\node[node] (bb) at (120:0.25cm) {};
					\node[node] (cc) at (240:0.25cm) {};
					\draw[link] (aa) -- (bb) {};
					\draw[link] (cc) -- (bb) {};
					\draw[link] (aa) -- (cc) {};
				\end{scope}
			\end{scope}
			
\begin{scope}[xshift = 1.65cm, yshift=-1.75cm]
				\node[node] (aa) at (60:0.25cm) {};
				\node[node] (bb) at (180:0.25cm) {};
				\node[node] (cc) at (300:0.25cm) {};
				\draw[link] (aa) -- (bb) {};
				\draw[link] (cc) -- (bb) {};
				\draw[link] (aa) -- (cc) {};
				
				\begin{scope}[xshift = 0.5cm]
					\node[node] (aa) at (0:0.25cm) {};
					\node[node] (bb) at (120:0.25cm) {};
					\node[node] (cc) at (240:0.25cm) {};
					\draw[link] (aa) -- (bb) {};
					\draw[link] (cc) -- (bb) {};
					\draw[link] (aa) -- (cc) {};
				\end{scope}
			\end{scope}
			
\begin{scope}[xshift = 1.85cm, yshift=0.70cm]
				\newdimen\R
				\R=0.3cm
				\draw (0:\R) \foreach \x in {60,120,...,360} {  -- (\x:\R) };
				\foreach \x/\l/\p in
				{ 60/{}/above, 120/{}/above, 180/{}/left, 240/{}/below, 300/{}/below, 360/{}/right }
				\node[inner sep=0.7pt,circle,draw,fill,label={\p:\l}] at (\x:\R) {};
			\end{scope}
			
\begin{scope}[xshift = -0.65cm, yshift=-1.8cm]
				\newdimen\R
				\R=0.3cm
				\draw (0:\R) \foreach \x in {60,120,...,360} {  -- (\x:\R) };
				\foreach \x/\l/\p in
				{ 60/{}/above, 120/{}/above, 180/{}/left, 240/{}/below, 300/{}/below, 360/{}/right }
				\node[inner sep=0.7pt,circle,draw,fill,label={\p:\l}] at (\x:\R) {};
			\end{scope}
			
			\begin{scope}[on background layer]
\draw[dashed, thin ,fill = black!10] (0,0) -- (105:15mm)
				arc [start angle=105, end angle=165, radius=15mm] -- (0,0); 
				\node[] at (-1.5cm,1.5cm) {$Y_1$};  
				
				\draw[dashed, thin ,fill = black!10, xshift = 1.15cm] (0,0) -- (15:15mm)
				arc [start angle=15, end angle=75, radius=15mm] -- (0,0); 
				\node[] at (2.5cm,1.5cm) {$X_1$};  
				
				\draw[dashed, thin ,fill = black!10, yshift = -1.15cm] (0,0) -- (195:15mm)
				arc [start angle=195, end angle=255, radius=15mm] -- (0,0); 
				\node[] at (-1.5cm,-2.5cm) {$X_2$};  
				
				\draw[dashed, thin ,fill = black!10, xshift = 1.15cm, yshift= -1.15cm] 
				(0,0) -- (285:15mm)
				arc [start angle=285, end angle=345, radius=15mm] -- (0,0); 
				\node[] at (2.5cm,-2.5cm) {$Y_2$};  
				
			\end{scope}
		\end{tikzpicture}
		\caption{Graph $H$}
		\label{fig:b}
	\end{subfigure}
	\caption{Pair of graphs which are $(1,1)$-\WL indistinguishable but $2$-\WL distinguishable from \cref{ex:weaker}. Shaded sector denote bicliques $K_{1,6}$.} \label{fig:cntex}
\end{figure}
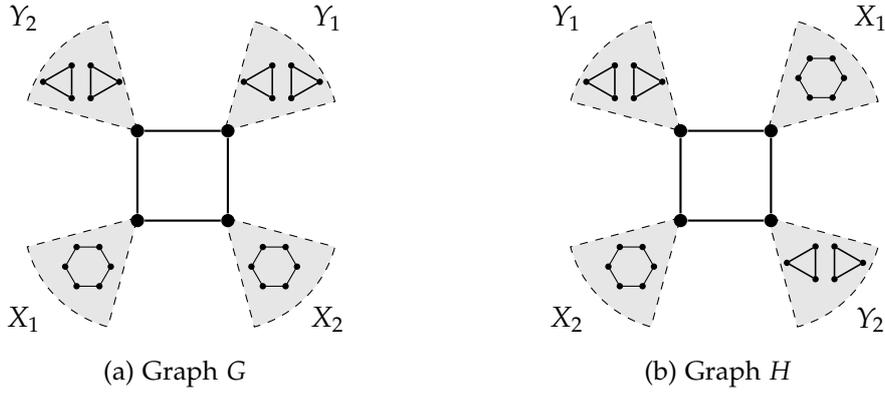 
\begin{proof}
	Let $X$ denote the six-cycle graph while $Y$ denotes the disjoint union of two triangles. 
	Define the graph $G$ as the disjoint union of a \emph{backbone} cycle $C$ on four vertices $u_1,u_2,u_3,u_4$,  two copies $X_1,X_2$ of $X$, and two copies $Y_1,Y_2$ of $Y$. Add an add an edge to $G$ from every vertex in $X_1$ to $u_1$, from every vertex in $X_2$ to $u_2$, from every vertex in $Y_1$ to $u_3$, and from every vertex in $Y_2$ to $u_4$. We say that vertex $u_1$ \emph{points} to the copy $X_1$ and so on. 
	
	For the graph $H$, the construction is identical to $G$ except for the cyclic arrangement of the copies of $X$ and~$Y$. Formally, edges are added to $H$ from every vertex in $X_1$ to $u_1$, from every vertex in $X_2$ to $u_3$, from every vertex in $Y_1$ to $u_2$, and from every vertex in $Y_2$ to $u_4$. 
	
	It is claimed that $2$-\WL can distinguish the graphs $G$ and $H$. Indeed, in the $3$-pebble bijective game on $G$ and $H$, cf.\@ \cite{cai_optimal_1992}, Spoiler can dictate a play on the backbone cycles of $G$ and $H$ and pebble vertices $u \in V(G)$ and $v \in V(H)$ such that $u$ points to an $X$-copy and $v$ points to a $Y$-copy. Then, Spoiler can use the two remaining pebbles to mark an edge of a triangle in that $Y$-copy. For any response of Duplicator, Spoiler can then pebble the remaining vertex of the triangle to expose the non-isomorphism of $X$ and $Y$, and hence of $G$ and $H$. 
	
	Finally, it is claimed that $(1,1)$-\WL cannot distinguish graphs $G$ and $H$. 
	Let $u \in V(G)$ and $v \in V(H)$. Let $G_u$ and $H_v$ denote graphs 
	$G$ and $H$ where $u$ and $v$ respectively are individualised, i.e.\@
	they are coloured with a separate common colour. 
	In each of the following cases, $G_u$ and $H_v$ are easily seen to be 
	$1$-\WL indistinguishable:
	\begin{enumerate*}[label=(\alph*)]
		\item Both $u$ and $v$ are backbone vertices pointing to $X$-copies,
		\item Both $u$ and $v$ are backbone vertices pointing to $Y$-copies, 
		\item Both $u$ and $v$ are non-backbone vertices lying in $X$-copies, 
		\item Both $u$ and $v$ are non-backbone vertices lying in $Y$-copies. 
	\end{enumerate*}
	Hence, it is easy to find the desired bijection between $V(G)$ and $V(H)$ 
	witnessing the $(1,1)$-\WL indistinguishability of $G$ and $H$.
\end{proof}

\label{ssec:adjacency-algebra}
\Cref{ex:weaker} can be paralleled by a separation of the algebraic object typically associated with $2$-\WL, i.e.\@ the adjacency algebra \cite{chen_lectures_2018}, and the algebra of equitable matrix maps. For a graph~$G$,
the \emph{adjacency algebra}  $\Adj(G)$ is defined as the subalgebra of $\mathbb{C}^{V(G) \times V(G)}$ on the $\mathbb{C}$-span of the colour class indicator matrices of the $2$-\WL colours of $G$.
For a graph $G$, write $\mathfrak{E}(G)$ for the set of matrices $\phi(G)$ for $\phi \in \mathfrak{E}$.
Since $2$-\WL colours determine $(1,1)$-\WL colours, cf.\@ \cref{lem:2wl-11wl}, \cref{thm:colentry} implies that $\mathfrak{E}(G)$ is a subalgebra of the adjacency algebra of $G$.
Analysing the graphs constructed in \cref{ex:weaker} more thoroughly shows that both graphs exhibit twenty $(1,1)$-\WL colours and thus $\dim \mathfrak{E}(F) \leq 20$ for $F \in \{G, H\}$. At the same time, the $\Adj(G)$ is $37$-dimensional, while $\Adj(H)$ is $29$-dimensional. This shows that there are graphs $G$ for which $\mathfrak{E}(G)$ is a proper subalgebra of~$\Adj(G)$.

\begin{lemma} \label{lem:2wl-11wl}
	Let $G$ and $H$ be graphs with vertices $v,w\in V(G)$ and $x, y \in V(H)$. Then
	\[
		\WL_2^\infty(G, vw) = \WL_2^\infty(H, xy) \implies \mathcal{C}_G^v(w) = \mathcal{C}_H^x(y).
	\]
\end{lemma}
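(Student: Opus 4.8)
The idea is that a single coordinate of a $2$-\WL colour already records which vertex has been individualised, so running $1$-\WL on $G_v$ cannot extract more than what already sits inside $\WL_2^\infty(G, v\,\cdot\,)$. Concretely, I would prove by induction on $d \geq 0$ the following uniform statement: for all graphs $G, H$ and all $v, w \in V(G)$, $x, y \in V(H)$, if $\WL_2^\infty(G, vw) = \WL_2^\infty(H, xy)$ then $\WL_1^d(G_v, w) = \WL_1^d(H_x, y)$. Since $1$-\WL on $G_v$ and on $H_x$ stabilise after finitely many rounds, choosing $d$ past the stabilisation point of both then yields $\mathcal{C}_G^v(w) = \mathcal{C}_H^x(y)$, as $\mathcal{C}_G^v$ is by definition the stable $1$-\WL colouring of $G_v$.

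For the base case $d = 0$, the colour $\WL_1^0(G_v, w) = \atp(G_v, w)$ records only whether $w$ is orange, i.e.\@ whether $w = v$. Because $\WL_2^\infty$ refines the initial colouring $\atp(G, \cdot)$, the hypothesis forces $\atp(G, vw) = \atp(H, xy)$, which in particular decides whether the two coordinates coincide; hence $\WL_1^0(G_v, w) = \WL_1^0(H_x, y)$.

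For the inductive step I would exploit stability of $\WL_2^\infty$: applying one further refinement round to a stable colouring splits no colour class, so $\WL_2^\infty(G, vw) = \WL_2^\infty(H, xy)$ also yields equality of the ``one more round'' colours, and hence equality of the multiset of triples $\bigl(\atp(G,(v,w,w')),\, \WL_2^\infty(G,(w',w)),\, \WL_2^\infty(G,(v,w'))\bigr)$ over $w' \in V(G)$ with the analogous multiset over $H$. From $\atp(G,(v,w,w'))$ one reads off whether $ww' \in E(G)$, equivalently whether $w' \in N_{G_v}(w)$ since individualisation leaves the edge set untouched; restricting the multiset equality to these triples produces a bijection $N_{G_v}(w) \to N_{H_x}(y)$ preserving the value $\WL_2^\infty(G, (v, \cdot))$ resp.\@ $\WL_2^\infty(H, (x, \cdot))$. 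Applying the induction hypothesis along this bijection, and also to the pair $(v,w)$ versus $(x,y)$ itself, gives $\WL_1^d(G_v, w) = \WL_1^d(H_x, y)$ together with $\multiset{\WL_1^d(G_v, w') \mid w' \in N_{G_v}(w)} = \multiset{\WL_1^d(H_x, y') \mid y' \in N_{H_x}(y)}$, which is exactly the data defining $\WL_1^{d+1}(G_v, w) = \WL_1^{d+1}(H_x, y)$.

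The conceptual content is light; the main thing to get right is the bookkeeping around ``stability''. One must justify that comparing $\WL_2^\infty$-colours across $G$ and $H$ is meaningful (this is fine since colours are built from the graph-independent function $\atp$, or one may simply run $2$-\WL on $G \sqcup H$), that the ``one more round'' multiset is recoverable from a single stable colour, and that $\WL_1^d$ for sufficiently large $d$ faithfully represents the stable colouring $\mathcal{C}_G^v$. Each of these is routine, but they are precisely the points where a careless argument would slip.
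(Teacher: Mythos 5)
Your proposal is essentially the paper's own proof: induct on the number of $1$-\WL rounds on the individualised graphs, use the stability of $\WL_2^\infty$ to turn the hypothesis into equality of the one-more-round multisets, extract from that a bijection preserving atomic types and stable pair colours, and push the induction hypothesis along it. The one place where your bookkeeping slips relative to the paper's conventions is the restriction of that bijection to $N_{G_v}(w)$. In this paper $1$-\WL is defined (as the $k=1$ case of the general scheme) by refining with the multiset $\multiset{\left(\atp(G_v, wz), \WL_1^d(G_v, z)\right) \mid z \in V(G)}$ over \emph{all} vertices $z$, not just neighbours, and cross-graph equality of colours — which is exactly what the lemma asserts — is sensitive to this difference (the neighbours-only and all-vertex variants agree on a single graph but need not identify colours across two graphs in the same way). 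The fix costs nothing: the multiset equality you derive from stability already matches the triples $\left(\atp(G, vww'), \WL_2^\infty(G, w'w), \WL_2^\infty(G, vw')\right)$ over every $w' \in V(G)$, so take the resulting bijection $\pi \colon V(G) \to V(H)$ on all of $V(G)$, note that $\atp(G, vww') = \atp(H, xy\pi(w'))$ gives $\atp(G_v, ww') = \atp(H_x, y\pi(w'))$, and apply the induction hypothesis along all of $\pi$ (using $\WL_2^\infty(G, vw') = \WL_2^\infty(H, x\pi(w'))$); this is precisely how the paper's proof proceeds and yields the full refinement data, not only its neighbour part.
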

\begin{proof}
	The proof is by induction on the number of rounds of the computation of the colourings $\mathcal{C}_G^v$ and $\mathcal{C}_H^x$. Observe that by assumption there exists a bijection $\pi \colon V(G) \to V(H)$ such that
	\begin{align*}
		\atp(G, vwz) = \atp(H, xy \pi(z)), \quad 
		\WL_2^\infty(G, vz) = \WL_2^\infty(H, x\pi(z)), \quad
		\WL_2^\infty(G, zw) = \WL_2^\infty(H, \pi(z)y)
	\end{align*}
	for all $z \in V(G)$.
	In the zero-th round, $(\mathcal{C}_G^v)^0(w) = (\mathcal{C}_G^x)^0(y)$ if and only if $v=w \Leftrightarrow x = y$. This is clearly implied by $\WL_2^\infty(G, vw) = \WL_2^\infty(H, xy)$.
	In later rounds,
	\[
		(\mathcal{C}_G^v)^{i+1}(w) = \multiset{\left( \atp(G_v, wz), (\mathcal{C}_G^v)^{i}(z) \right)\mid z \in V(G)   }.
	\]
	Observing that $\atp(G, vwz) = \atp(H, xy \pi(z))$ implies that $\atp(G_v, wz) = \atp(H_x, y\pi(z))$, it can be readily seen that $\pi$ from above witnesses that $(\mathcal{C}_G^v)^{i+1}(w) = (\mathcal{C}_H^x)^{i+1}(y)$.
\end{proof}

\subsection{Hitting Times}
\label{ssec:app-comdist}

Random walks on graphs constitute one of the best-studied random processes, with myriad applications to statistics and computing \cite{lovasz_random_1996}. Given a graph $G$, at any current position $v \in V(G)$ the walk proceeds to a neighbour $w$ of $v$ with probability $1/d_G(v)$. Along with cover time and mixing time, the \emph{commute time} is a central parameter in the quantitative study of random walks. Recall that the \emph{hitting time} $H(s, t)$ of two vertices $s, t \in V(G)$ is the expected duration of a random walk starting in $s$ and ending in $t$. The \emph{commute time} or \emph{commute distance} $\kappa(s, t)$ is defined to be $H(s, t) + H(t, s)$. Godsil~\cite{godsil_equiarboreal_1981} showed that two $2$-\WL indistinguishable graphs have the same multiset of commute distances. We devote the rest of this subsection to the proof of \cref{thm:comdist} which parallels this result.

First, we show in \cref{prop:induced} that the hitting times $H$ can be read off an equitable matrix map. 
With some technical adjustments for disconnected graphs in \cref{lemma:connected}, we then use \cref{thm:colentry} to conclude that hitting times are determined by $(1,1)$-\WL colours. 

\begin{proposition} \label{prop:induced}
	Define\footnote{If $\phi(G)$ is singular for a graph $G$ and $\phi \in \mathfrak{E}$ then $(\phi^{-1})(G)$ is set to the zero matrix. It follows from \cref{thm:specinv} that $\phi^{-1} \in \mathfrak{E}$.}  $X \coloneqq (I - D^{-1}A + JD/(2m))^{-1} (J - 2mD^{-1})$ where 
	$I$ is the identity, $D$ the degree, $A$ the adjacency, and $J$ the all-ones matrix map, and $m \colon G \mapsto \abs{E(G)}$ denotes the number-of-edges map.
	Then $K$ is an equitable matrix map. 
	
	Moreover, for all connected graphs $G$ with vertices $s, t \in V(G)$, $H_G(s, t) = X_G(s,t) - X_G(t,t)$.
\end{proposition}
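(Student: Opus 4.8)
The plan is to prove the two assertions of \cref{prop:induced} separately: that $K$ is an equitable matrix map, using only the closure properties of $\mathfrak{E}$; and the commute-distance identity, using the classical fundamental-matrix calculus of the random walk on a connected graph.

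\emph{$K \in \mathfrak{E}$.} First I would collect the ingredients: $A, D \in \mathfrak{E}$ by \cref{lem:adjacencymatrix}, $I \in \mathfrak{E}$ trivially, and $J = A^c + A + I \in \mathfrak{E}$ by \cref{lem:closure}. Next, since $m(G) = \tfrac12 \tr D(G)$, \cref{thm:specinv} gives $m(G) = m(H)$ whenever $G$ and $H$ are $(1,1)$-\WL indistinguishable; hence for any $\psi \in \mathfrak{E}$ the rescalings $2m\psi$ and $(2m)^{-1}\psi$ again lie in $\mathfrak{E}$, as \ref{l1} and \ref{l2} are insensitive to a scalar factor that agrees on $(1,1)$-\WL indistinguishable pairs. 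For the matrix inversions I would spell out the argument behind the footnote: if $G$ and $H$ are $(1,1)$-\WL indistinguishable then they are $\psi$-cospectral by \cref{thm:specinv}, so $\psi(G)$ and $\psi(H)$ share a characteristic polynomial; if both are singular, $\psi^{-1}$ takes value $0$ on both and \ref{l1}, \ref{l2} hold trivially, while if both are nonsingular the Cayley--Hamilton theorem yields one common polynomial $q$ with $\psi^{-1}(G) = q(\psi(G))$ and $\psi^{-1}(H) = q(\psi(H))$, whence \ref{l1} holds since $\psi^{-1}(G)$ commutes with whatever $\psi(G)$ does, and \ref{l2} follows by iterating \ref{l1} and \ref{l2} for $\psi$ along the monomials of $q$. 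Applying this to $\psi \coloneqq I - D^{-1}A + JD/(2m) \in \mathfrak{E}$ and multiplying by $2mD^{-1} - J \in \mathfrak{E}$, \cref{lem:closure} gives $K \in \mathfrak{E}$.

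\emph{The commute-distance identity.} Fix a connected graph $G$ on at least two vertices, write $P \coloneqq D^{-1}A$ for the transition matrix of the random walk, let $\pi$ be its stationary distribution, $\pi_v = d_G(v)/(2m)$, set $\Pi \coloneqq \diag(\pi) = D/(2m)$, and let $\boldsymbol{1}$ denote the all-ones vector. Since $JD/(2m) = \boldsymbol{1}\pi^T$ and $2mD^{-1} = \Pi^{-1}$, the definition reads $K(G) = Z(\Pi^{-1} - J)$, where $Z \coloneqq (I - P + \boldsymbol{1}\pi^T)^{-1}$ is the \emph{fundamental matrix} of the walk. I would first verify that $Z$ exists for connected $G$: if $(I - P + \boldsymbol{1}\pi^T)x = 0$, then left-multiplying by $\pi^T$ and using $\pi^TP = \pi^T$, $\pi^T\boldsymbol{1} = 1$ gives $\pi^Tx = 0$, so $(I-P)x = 0$, so $x \in \spn(\boldsymbol{1})$ by connectedness, so $x = 0$. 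From $P\boldsymbol{1} = \boldsymbol{1}$ and $\pi^T\boldsymbol{1} = 1$ we get $(I - P + \boldsymbol{1}\pi^T)\boldsymbol{1} = \boldsymbol{1}$, hence $Z\boldsymbol{1} = \boldsymbol{1}$ and $ZJ = J$, so that $K(G) = Z\Pi^{-1} - J$. Now for $s,t \in V(G)$ put $z \coloneqq e_s - e_t$; since $Jz = 0$,
\[
(e_s - e_t)^TK(G)(e_s - e_t) = z^TZ\Pi^{-1}z = \frac{Z_{ss} - Z_{ts}}{\pi_s} + \frac{Z_{tt} - Z_{st}}{\pi_t}.
\]
By the classical expression of hitting times through the fundamental matrix, $H(i,j) = (Z_{jj} - Z_{ij})/\pi_j$ (see e.g.\@ \cite{lovasz_random_1996}), the right-hand side equals $H(t,s) + H(s,t) = \kappa_G(s,t)$, as claimed.

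\emph{Main obstacle.} The mathematical content is classical, so the difficulty will be bookkeeping rather than depth. Concretely I expect the delicate points to be: (i) stating the fundamental-matrix and hitting-time identities with exactly the right normalisation, since it is easy to drop or misplace factors of $2m$ when translating between $P$, $\Pi$, and $D$; and (ii) making the inverse-convention part of the first paragraph watertight, so that $K$ is genuinely an equitable matrix map on \emph{all} graphs, including degenerate ones (disconnected graphs, or graphs with isolated vertices, or edgeless graphs) where $I - D^{-1}A + JD/(2m)$, $D$, or $1/(2m)$ fail to be invertible and one must fall back on the zero-matrix (resp.\@ zero-scalar) conventions.
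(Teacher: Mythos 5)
Your proof is correct and shares the paper's skeleton: both assertions are reduced to the matrix $Z = (I - D^{-1}A + JD/(2m))^{-1}$, and $K \in \mathfrak{E}$ comes from \cref{lem:adjacencymatrix,lem:closure} together with the inverse convention. The genuine differences are the following. For the commute-distance identity the paper does not import the fundamental-matrix formula: it derives $(I - D^{-1}A)H = J - 2mD^{-1}$ from the hitting-time recurrence, checks that $-K(G) = Z(J - 2mD^{-1})$ solves this system (using Perron--Frobenius and connectedness to see that the kernel of $I - D^{-1}A$ is $\spn(\boldsymbol{1})$), and notes that the residual $\boldsymbol{1}a^T$ ambiguity cancels in $\kappa(s,t) = X(s,t)+X(t,s)-X(s,s)-X(t,t)$. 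You instead quote $H(i,j) = (Z_{jj}-Z_{ij})/\pi_j$; this is true for every connected graph, but the usual derivation via $\sum_t (P^t - \boldsymbol{1}\pi^T)$ assumes aperiodicity (false for bipartite graphs), and \cite{lovasz_random_1996} states hitting times spectrally rather than through $Z$, so to be self-contained you should verify the formula from the recurrence --- one line, since $(I - D^{-1}A)Z = I - \boldsymbol{1}\pi^T$ shows your candidate matrix satisfies the same system as $H$ and has zero diagonal --- which essentially reproduces the paper's computation. Your kernel argument for the invertibility of $I - D^{-1}A + \boldsymbol{1}\pi^T$ is simpler than the paper's spectral decomposition and perfectly valid. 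Your Cayley--Hamilton justification of the footnote is more explicit than the paper's, which merely asserts it; two small cautions there: \cref{thm:specinv} matches spectra with \emph{geometric} multiplicities, so ``same characteristic polynomial'' needs diagonalisability (true for $I - D^{-1}A + JD/(2m)$, which is similar to a symmetric matrix) or, in general, a common annihilating polynomial such as $\prod_{\lambda}(x-\lambda)^{\abs{V(G)}}$; and deducing $J \in \mathfrak{E}$ from $A^c$ is circular with the paper's example --- check $J$ directly (e.g.\@ $M^v J = J = J M^v$ and $PJM^v = JPM^v$).
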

\begin{proof}The proof is guided by \cite[Section~3]{lovasz_random_1996}.
	To ease notation, a connected graph $G$ with $n$~vertices and $m$~edges is fixed.
	Furthermore, set $M \coloneqq D^{-1} A$. Note that $M \boldsymbol{1} = \boldsymbol{1}$ and $M^T \pi = \pi$ where $\pi \coloneqq D\boldsymbol{1}/(2m)$ is the stable distribution of the random walk.
	Let finally $M' \coloneqq \boldsymbol{1} \pi^T = JD/(2m)$.
	The hitting time satisfies the following equation for all vertices $s \neq t$,
	\[
	H(s,t) = 1 + \frac{1}{d_G(s)} \sum_{r \in N_G(s)} H(r, t).
	\]
	Put in matrix form, this implies that $F \coloneqq J + MH - H$ is a diagonal matrix.
	Moreover,
	\[
	F^T \pi = J\pi + H^T(M-I)^T \pi = J\pi = \boldsymbol{1}.
	\]
	This implies that $F = 2mD^{-1}$. Hence, $H$ is determined by the equation
	\begin{equation} \label{eq:H}
		J - 2mD^{-1} = (I - M) H.
	\end{equation}
	It remains to solve this equation for $H$. However, $I-M$ is singular (as $\boldsymbol{1}$ lies in its kernel). By the Perron--Frobenius Theorem (e.g.\@ \cite{godsil_algebraic_2004}), since the graph is assumed to be connected, $\boldsymbol{1}$ spans the eigenspace of $M$ with eigenvalue $1$. This guarantees that the kernel of $I-M$ is spanned by~$\boldsymbol{1}$. Thus, for any solution $X$ of \cref{eq:H} the matrix $X + \boldsymbol{1} a^T$ for any vector $a$ is a solution as well. The matrix $H$ is nevertheless uniquely determined by \cref{eq:H} since it satisfies for every vertex $s$ that $H(s,s) = 0$. In other words, after finding a solution $X$ for \cref{eq:H}, $a$ has to be chosen such that $X + \boldsymbol{1}a^T$ is zero on the diagonal.
	
	Consider the matrix map $X \coloneqq (I - M + M')^{-1}(J-2mD^{-1})$. It is well-defined since when $N \coloneqq D^{-1/2}A D^{-1/2}$ is spectrally decomposed as $\sum_{k=1}^n \lambda_k v_kv_k^T$ with $v_1(i) = \sqrt{\pi(i)}$ and $1 = \lambda_1 > \lambda_2 \geq \dots \geq \lambda_n \geq -1$ by Perron--Frobenius,
	\[
	M' + I - M = D^{-1/2}v_1v_1^TD^{1/2} + \sum_{k=1}^n (1- \lambda_k)D^{-1/2}v_kv_k^TD^{1/2}
	\]
	observing that $\boldsymbol{1}\pi^T = D^{-1/2}v_1v_1^TD^{1/2}$. The eigenvalues $2 - \lambda_1 = 1$ and $1 - \lambda_k$ for $1 < k \leq n$ are all non-zero. Hence, $I - M + M'$ is indeed non-singular.
	Furthermore, $X$ solves \cref{eq:H}. Indeed, it can be seen that $(I-M)(I-M+M')^{-1} = I - M'$ and $(I-M')(J - 2mD^{-1}) = J-2mD^{-1}$.
	
	In order to obtain $H$ from $X$, a suitable vector $a$ has to be found such that the matrix $X + \boldsymbol{1}a^T$ is zero on the diagonal and thus equals $H$. Alternatively, $H$ can be expressed in terms of $X$ as follows:
	\[
		H(s, t) = X(s, t) - X(t,t).
	\]
	This implies the second assertions. By \cref{lem:closure}, $X$ is an equitable matrix map since it is build of the equitable matrix maps $I$, $J$, $D$, $M$, and $M'$ by matrix products and taking inverses.
\end{proof}

The proof of \cref{prop:induced} relied on the Perron--Frobenius Theorem, which characterises the adjacency spectrum of connected graphs. In order to prove \cref{prop:main} in full generality, it has to be argued that  connected components of a graph can be considered separately. This is the purpose of \cref{lemma:connected} whose proof relies on the following immediate corollary of \cref{lem:closure,thm:colentry}.

\begin{corollary} \label{fact:dist}
	For graphs $G$ and $H$, vertices $s,t \in V(G)$, and $u,v \in V(H)$, if $\mathcal{C}_G^s(t) = \mathcal{C}_G^u(v)$ then 
	$\dist_G(s,t) = \dist_{H}(u,v)$, where $\dist_G(s,t) \coloneqq \min\left\{i \geq 0 \ \middle|\ A^i(G, s, t) \neq 0\right\}$.
\end{corollary}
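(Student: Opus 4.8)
The plan is to reduce the statement to \cref{thm:colentry} applied to the powers of the adjacency matrix map. The key observation is that for every $i \geq 0$ the $i$-th power $A^i$ is itself an equitable matrix map: the identity map $A^0 = I$ trivially lies in $\mathfrak{E}$, we have $A \in \mathfrak{E}$ by \cref{lem:adjacencymatrix}, and $\mathfrak{E}$ is closed under matrix products by \cref{lem:closure}, so $A^i \in \mathfrak{E}$ for all $i \geq 0$.

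Fixing graphs $G,H$, vertices $s,t \in V(G)$ and $u,v \in V(H)$ with $\mathcal{C}_G^s(t) = \mathcal{C}_H^u(v)$ (reading the hypothesis as $\mathcal{C}_H^u(v)$, since $u,v \in V(H)$), I would then apply \cref{thm:colentry} with $\phi = A^i$ for each $i \geq 0$ to conclude that the $(s,t)$-entry of $A^i(G)$ equals the $(u,v)$-entry of $A^i(H)$, i.e.\ $A^i(G,s,t) = A^i(H,u,v)$. Since this holds for every $i$, the sets $\{i \geq 0 \mid A^i(G,s,t) \neq 0\}$ and $\{i \geq 0 \mid A^i(H,u,v) \neq 0\}$ coincide, and hence so do their minima; if both sets are empty (so that $s,t$, and consequently $u,v$, lie in distinct connected components), then $\dist_G(s,t) = \dist_H(u,v) = \infty$ under the usual convention $\min \emptyset = \infty$. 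In all cases $\dist_G(s,t) = \dist_H(u,v)$, which is the claim.

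There is essentially no obstacle here: once one records that adjacency-matrix powers witness graph distances entrywise, the result is an immediate consequence of the cited closure property and \cref{thm:colentry}. The only points worth making explicit are that $A^0 = I$ belongs to $\mathfrak{E}$ and that the convention $\min\emptyset = \infty$ makes the equality meaningful for pairs of vertices in different components.
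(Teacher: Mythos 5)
Your proof is correct and follows exactly the route the paper intends: the paper presents this statement as an immediate corollary of \cref{lem:closure} and \cref{thm:colentry}, i.e.\ precisely your argument that $A^i \in \mathfrak{E}$ for all $i \geq 0$ and hence $A^i(G,s,t) = A^i(H,u,v)$ entrywise, so the minima agree (with $\min\emptyset = \infty$). You also correctly read the hypothesis as $\mathcal{C}_H^u(v)$, fixing the evident typo in the statement.
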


\begin{lemma} \label{lemma:connected}
	Let $G$ and $H$ be graphs and let $s,t \in V(G)$ and $u,v \in V(H)$ be vertices such that $\mathcal{C}_G^s(t) = \mathcal{C}_H^u(v)$.
	If $\dist_G(s,t) = \dist_{H}(u,v) < \infty$ then $\mathcal{C}_{G'}^s(t) = \mathcal{C}_{H'}^u(v)$ where $G'$ and $H'$ denote the connected components of $G$ and $H$ containing $s,t$, and $u,v$, respectively.
\end{lemma}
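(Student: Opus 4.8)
The plan is to reduce \cref{lemma:connected} to the following \emph{localisation principle}: for every graph $F$, every vertex $a \in V(F)$, and every vertex $b$ lying in the connected component $F'$ of $a$, the colour $\mathcal{C}_F^a(b)$ determines $\mathcal{C}_{F'}^a(b)$ via a function $\Phi$ independent of $F$. Granting this, the lemma is immediate: the hypothesis $\dist_G(s,t) < \infty$ (equivalently, by \cref{fact:dist}, $\dist_H(u,v) < \infty$) places $t$ in the component $G'$ and $v$ in the component $H'$, so applying the principle to $(G,s,t)$ and $(H,u,v)$ yields $\mathcal{C}_{G'}^s(t) = \Phi(\mathcal{C}_G^s(t)) = \Phi(\mathcal{C}_H^u(v)) = \mathcal{C}_{H'}^u(v)$.

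To prove the localisation principle I would use three observations. First, \emph{classical colour refinement} --- the variant of $1$-WL whose refinement step ranges only over the neighbourhood of a vertex rather than over all vertices --- is \emph{local}: since every walk starting in $b$ stays inside $F'$, an easy induction on the number of rounds shows that the colour-refinement colour of any vertex of $F'$ computed in $F_a$ agrees, as a colour \emph{value}, with the one computed in $F'_a$. Second, the $1$-WL colour $\mathcal{C}_F^a(b)$ used here is interdefinable with the pair consisting of the colour-refinement colour of $b$ in $F_a$ together with the multiset of colour-refinement colours of all vertices of $F_a$; this is the familiar comparison between the two variants, tracked now at the level of colour values and not merely of the induced partitions. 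Third, a vertex $z$ lies in $F'$ if and only if $\dist_F(z,a) < \infty$, and this is decidable from the colour-refinement colour of $z$ in $F_a$, since its bounded-depth unfoldings eventually reach the individualised vertex, cf.\@ \cref{fact:dist}. Combining the three: from $\mathcal{C}_F^a(b)$ one extracts (i) the colour-refinement colour of $b$ in $F_a$, which by locality equals that of $b$ in $F'_a$, and (ii) the multiset of colour-refinement colours over $V(F)$, from which, deleting the colours belonging to vertices outside $F'$ as identified in the third observation, one obtains the multiset of colour-refinement colours over $V(F')$; by locality this is the corresponding multiset for $F'_a$, and together (i) and (ii) reconstitute $\mathcal{C}_{F'}^a(b)$ by the second observation. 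The recipe does not refer to $F$, so it is the desired $\Phi$.

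I expect the main obstacle to be the second observation: making precise exactly how much global information the all-vertices aggregation of $1$-WL records, namely that $\mathcal{C}_F^a(b)$ amounts to precisely the colour-refinement colour of $b$ plus the census of colour-refinement colours of $F_a$. The locality of colour refinement and the colour-decidability of component membership are routine inductions, of the same flavour as those in \cref{lem:2wl-11wl,fact:dist}. Finally, note that the hypothesis $\dist_G(s,t) = \dist_H(u,v)$ is, given $\mathcal{C}_G^s(t) = \mathcal{C}_H^u(v)$, already a consequence of \cref{fact:dist}; only its finiteness is genuinely used.
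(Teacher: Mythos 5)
Your proposal is correct, but it takes a genuinely different route from the paper. The paper proves \cref{lemma:connected} by a direct two-graph induction on rounds with an $n$-round offset: it shows $(\mathcal{C}^s_{G})^{n+i}(x) = (\mathcal{C}^u_{H})^{n+i}(y) \implies (\mathcal{C}^s_{G'})^i(x) = (\mathcal{C}^u_{H'})^i(y)$, using that after an $n$-round head start the individualising colour of $s$ has propagated through its component, so the bijection witnessing colour equality must map $V(G')$ onto $V(H')$, after which the refinement step restricts to the components. You instead prove a stronger single-graph statement: a graph-independent map $\Phi$ sending $\mathcal{C}_F^a(b)$ to $\mathcal{C}_{F'}^a(b)$, obtained by decomposing the all-vertex (counting) variant of $1$-\WL at the level of colour \emph{values} into the classical colour-refinement colour plus the global census, then invoking locality of colour refinement and colour-definability of component membership (your analogue of \cref{fact:dist}) to filter the census down to $V(F')$. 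Both arguments hinge on the same propagation phenomenon, but your version yields a reusable invariance statement, whereas the paper's avoids formalising the ``counting-WL $=$ CR $+$ census'' interdefinability, which, as you anticipate, is where the bookkeeping lives: the equivalence holds with a one-round offset (the round-$(i{+}1)$ counting colour determines the round-$(i{+}1)$ CR colour and the round-$i$ census, and conversely), and in the filtering step you must use the nesting of colour values across rounds, since a vertex of $F'$ at distance $>i$ from $a$ does not yet reveal its membership in its round-$i$ colour; filtering a sufficiently deep census (depth at least $\abs{V(F)}$, recoverable as the census cardinality) and projecting down to earlier rounds resolves this. Your closing remark is also right: given $\mathcal{C}_G^s(t) = \mathcal{C}_H^u(v)$, equality of the distances follows from \cref{fact:dist}, and only their finiteness is genuinely used, exactly as in the paper.
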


\begin{proof}
	Let $n \coloneqq \abs{V(G)} = \abs{V(H)}$.
	It is shown by induction on the number of iterations $i \geq 0$ that $(\mathcal{C}^s_{G})^{n+i}(x) = (\mathcal{C}^u_{H})^{n+i}(y) \implies (\mathcal{C}^s_{G'})^i(x) = (\mathcal{C}^u_{H'})^i(y)$ for all $x \in V(G')$ and $y \in V(H')$.
	For $i = 0$, the claim is clear. Suppose that  $(\mathcal{C}^s_{G})^{n+i+1}(x) = (\mathcal{C}^u_{H})^{n+i+1}(y)$ for some $x \in V(G')$ and $y \in V(H')$. Then there exists a bijection $\pi \colon V(G) \to V(H)$ such that $\atp(G, xz) = \atp(H, y\pi(z))$ and  $(\mathcal{C}^s_{G})^{n+i}(z) = (\mathcal{C}^u_{H})^{n+i}(\pi(z))$ for all $z \in V(G)$. 
	In $(\mathcal{C}^s_{G})^{n+i}$, vertices in $G'$ are distinguished from vertices not in $G'$ since with an $n$~round headstart the individualising colour of $s$ was propagated to all vertices in the same connected component.
	Hence, the bijection $\pi$ is such that $\pi(V(G')) = V(H')$. It follows that $\pi$ witnesses the following equality of multisets
	\[
		\multiset{\left( \atp(G', xr), (\mathcal{C}_G^s)^i(r)\right) \mid r \in V(G')}
		=
		\multiset{\left( \atp(H', yw), (\mathcal{C}_H^u)^i(w)\right) \mid w \in V(H')}.
	\]
	Thus, $(\mathcal{C}^s_{G'})^{i+1}(x) = (\mathcal{C}^u_{H'})^{i+1}(y)$.
\end{proof}

It remains to combine \cref{cor:cospectrality,lemma:connected,prop:induced,thm:colentry} to give a proof of \cref{thm:rw}:

\begin{proof}
	Let $G$ and $H$ be $(1,1)$-\WL indistinguishable graphs with vertices $s, t \in V(G)$ and $u, v \in V(H)$. Suppose that $\mathcal{C}_G^s(t) = \mathcal{C}_H^u(v)$.
	Observe that $G$ is connected if and only if $H$ is connected. This follows for example from the fact that $G$ and $H$ are cospectral with respect to their Laplacian matrices, cf.\@ \cite{van_dam_which_2003} and \cref{cor:cospectrality}. 
	By \cref{lemma:connected}, it may be assumed that both graphs are connected. Indeed, by \cref{fact:dist}, if $\dist_G(s,t) = \dist_H(u,v) = \infty$ then $H_G(s,t) = H_H(u,v) = \infty$. Otherwise, \cref{lemma:connected} permits to pass to the connected components containing $s,t$ and $u,v$ respectively.

	By \cref{prop:induced}, $H_G(s, t) = X_G(s, t) - X_G(t,t)$ for the $X \in \mathfrak{E}$ defined there. 
	By \cref{lem:closure}, also $X^T \in \mathfrak{E}$.
	The assumption $\mathcal{C}_G^s(t) = \mathcal{C}_H^u(v)$ implies that $\mathcal{C}_G^s(s) = \mathcal{C}_H^u(u)$ since the vertices $s$ and $u$ have distinct colours.
	By \cref{thm:colentry}, $X_G(s,s) = X_H(u,u)$ and $X^T_G(s,t) = X^T_H(u,v)$.
	Hence, $H_G(t, s) = H_H(v,u)$.
	
	This shows that if $\mathcal{C}_G^s(t) = \mathcal{C}_H^u(v)$ then $H_G(t, s) = H_H(v, u)$. To show that $(1,1)$-\WL indistinguishable graphs have the same multiset of hitting times, observe that by assumption there exist bijections $\pi \colon V(G) \to V(H)$ and $\pi_v \colon V(G) \to V(H)$ for every $v \in V(G)$ such that $\mathcal{C}^s_G(t) = \mathcal{C}^{\pi(s)}_H(\pi_s(t))$ for every $s, t \in V(G)$. This induces a bijection $\Pi \colon V(G) \times V(G) \to V(H) \times V(H)$ via $(s,t) \mapsto (\pi(s), \pi_s(t))$. By the previous observation, this bijection witnesses that
	\[
		\multiset{H_G(t, s) \mid s, t \in V(G)}
		=
		\multiset{H_H(v, u) \mid u, v \in V(H)},
	\]
	as desired.
\end{proof}

\section{Graph Spectra for $k$-\WL after $d$~Iterations}
\label{sec:genadjmat}

The purpose of this section is to outline the proof of \cref{thm:main-intro}. As already indicated, we view this result, in virtue of  \cref{thm:camb}, as a characterisation of homomorphism indistinguishability over graphs admitting certain $(k+1)$-pebble forest covers of depth at most $d$ in terms of matrix equations. The proof comprises four steps as described in \cref{recipe}. To commence, fix $k \geq 1$ and $d \geq 0$.

\subsection{Bilabelled Graphs and Augmented Homomorphism Representation}
\label{ssec:Akd}

The first step is to devise a class of bilabelled graphs such that the underlying unlabelled graphs are in $\mathcal{PFC}_k^d$, cf.\@ \cref{def:pfc-labelled}. 
For technical reasons, we consider only covers all whose leaves have the same depth \ref{b5}. It is later argued that this does not constitute a loss of generality.
As described in \cref{recipe}, we deviate from \cite{mancinska_quantum_2019,grohe_homomorphism_2021} by augmenting the labels with information about the role they play in a fixed pebble forest cover.

\begin{definition}\label{def:wlkd}
Let $\mathcal{WL}_k^d$ denote the set of all tuples $\widehat{\boldsymbol{F}} = (\boldsymbol{F}, p_\pin, p_\pout)$ such that
$\boldsymbol{F} = (F, \vec{u}, \vec{v})$ is a $(k+d,k+d)$-bilabelled graph, i.e.\@ $\vec{u}, \vec{v} \in V(F)^{k+d}$, 
and there exists a tree cover $\mathcal{F}$ of $F$ with a $(k+1)$-pebbling function $p \colon V(F) \to [k+1]$ such that
\begin{enumerate}[label=(B\arabic*)]
\item $\vec{u}_1 = \vec{v}_1$ is the root of $\mathcal{F}$ and the vertices $\vec{u}_{k+d}$ and $\vec{u}_{k+d}$ are leaves of $\mathcal{F}$,
\item $\vec{u}_i = \vec{v}_i$ for all $i \in [k]$,  $\vec{u}_1 \leq \vec{u}_2 \leq \dots \leq \vec{u}_{k+d}$, and $\vec{v}_1 \leq \vec{v}_2 \leq \dots \leq \vec{v}_{k+d}$ in $\mathcal{F}$,\label{b2}
\item for all $k \leq i < k+d$ it holds that $\vec{u}_i < \vec{u}_{i+1}$ and $\vec{v}_i < \vec{v}_{i+1}$ and that there is no vertex $x$ such that $\vec{u}_i < x < \vec{u}_{i+1}$ or $\vec{v}_i < x < \vec{v}_{i+1}$.\label{b3}
\item all unlabelled vertices $x \in V(F)$ are such that $\vec{u}_k \leq x$,\label{b4}
\item for all leaves $x$ of $\mathcal{F}$ it holds that the set $\{y \in V(F) \mid \vec{u}_k < y \leq x\}$ is of size $d$,\label{b5}
\item the leaves $\vec{u}_{k+d}$ and $\vec{v}_{k+d}$ have the least number of common ancestors among all pairs of leaves of $\mathcal{F}$, i.e.\@ writing\label{b6}
\begin{equation}\label{eq:gca}
	\gca(x,y) \coloneqq \abs{\{ z \in V(F) \mid \vec{u}_k < z \land z \leq x \land z \leq y\}}
\end{equation}
for $x, y \in V(F)$, it holds that $\gca(x, y) \geq \gca(\vec{u}_{k+d}, \vec{v}_{k+d})$ for all pairs of leaves $x,y \in V(F)$ of $\mathcal{F}$,
\item $p$ is injective on $\{\vec{u}_1, \dots, \vec{u}_k\}$,
\item $p_\pin, p_\pout \colon [k+d] \to [k+1]$ are functions such that $p_\pin(i) = p(\vec{u}_i)$ and $p_\pout(i) = p(\vec{v}_i)$ for all $i \in [k+d]$.
\end{enumerate}
$\mathcal{WL}_k^d$ furthermore contains the symbol $\bot$.
\end{definition}

Recording not only the graph and its in- and out-labels but also the value of the pebbling function at the labels allows us to introduce well-defined operations in the next section. We first augment the homomorphism representation used in \cite{mancinska_quantum_2019,grohe_homomorphism_2021} by information about the pebbling functions.

\begin{definition}\label{def:ahr}
Let $G$ be a graph. The \emph{augmented homomorphism representation} is the map
\[
\mathcal{WL}_k^d \to \mathbb{C}^{V(G)^{k+d} \times V(G)^{k+d}} \otimes \mathbb{C}^{[k+1]^{[k+d]} \times [k+1]^{[k+d]}}
\]
sending $\widehat{\boldsymbol{F}} = (F, \vec{u}, \vec{v}, p_\pin, p_\pout) \in \mathcal{WL}_k^d$
to $\boldsymbol{F}_G \otimes e_{p_\pin} e_{p_\pout}^T$ and $\bot$ to the zero matrix. Here, $\otimes$ denotes matrix Kronecker product and 
$e_p \in \mathbb{C}^{[k+1]^{[k+d]}}$ for $p \colon [k+d] \to [k+1]$ the $p$-th standard basis vector.
\end{definition}

\subsection{Operations}
\label{sec:wlkd-step2}

The second step is to define combinatorial operations on $\mathcal{WL}_k^d$ and accompanying algebraic operations respecting the augmented homomorphism representation.
Due to the correspondence between these operations, which is established in \cref{prop:ops}, we abusively use the same notation for both.
The following combinatorial operations are adaptions of standard operations on bilabelled graphs, cf.\@ \cref{sec:prelims}.

\begin{definition}[Combinatorial operations] \label{def:comop}
	Let $\widehat{\boldsymbol{F}} = (\boldsymbol{F}, p_\pin, p_\pout), \widehat{\boldsymbol{F}}' = (\boldsymbol{F}', p'_\pin, p'_\pout) \in \mathcal{WL}_k^d$.
	\begin{enumerate}
		\item The \emph{unlabelling} of $\widehat{\boldsymbol{F}} \neq \bot$ denoted by $\soe \widehat{\boldsymbol{F}}$ is the underlying unlabelled graph of $\boldsymbol{F}$.
		\item The \emph{reverse} of $\widehat{\boldsymbol{F}} = (\boldsymbol{F}, p_\pin, p_\pout)$ is $\widehat{\boldsymbol{F}}^* = (\boldsymbol{F}^*, p_\pout, p_\pin)$. The reverse of $\bot$ is $\bot$.
		\item The \emph{series composition} of $\widehat{\boldsymbol{F}}$ and $\widehat{\boldsymbol{F}}' $ is defined to be $\bot$ if $p_\pout \neq p'_\pin$ and otherwise to be $(\boldsymbol{F} \cdot \boldsymbol{F}', p_{\pin}, p'_\pout)$. The series composition of $\bot$ with any element of $\mathcal{WL}_k^d$ on either side is $\bot$.
	\end{enumerate}
\end{definition}

Unlabelling establishes a connection to unlabelled graphs. Reversal is needed for algebraic purposes, cf.\@ \cref{thm:soe}, while series composition is the operation under which finite generation will be proven in \cref{prop:fingen}. Recall the definition of $\mathcal{PFC}_k^d$ from \cref{def:pfc-labelled}.

\begin{lemma} \label{lem:wlkd-closure}
$\mathcal{WL}_k^d$ is closed under reversal and series composition. The graphs obtained from elements of $\mathcal{WL}_k^d \setminus \{\bot\}$ by unlabelling are in $\mathcal{PFC}_k^d$.
\end{lemma}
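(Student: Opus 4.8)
The plan is to verify the two assertions of \cref{lem:wlkd-closure} separately, in each case by exhibiting an explicit tree cover and pebbling function satisfying all of \ref{b2}--\ref{b6} (together with the remaining conditions of \cref{def:wlkd}), built from the data witnessing membership of the arguments.

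\textbf{Closure under reversal.} First I would observe that this is essentially immediate: if $(\mathcal{F}, p)$ witnesses $\widehat{\boldsymbol{F}} = (\boldsymbol{F}, p_\pin, p_\pout) \in \mathcal{WL}_k^d$ with $\boldsymbol{F} = (F, \vec{u}, \vec{v})$, then the \emph{same} pair $(\mathcal{F}, p)$ witnesses $\widehat{\boldsymbol{F}}^* = (\boldsymbol{F}^*, p_\pout, p_\pin) \in \mathcal{WL}_k^d$, because $\boldsymbol{F}^* = (F, \vec{v}, \vec{u})$ merely swaps the roles of $\vec{u}$ and $\vec{v}$, and every condition \ref{b2}--\ref{b6} is symmetric in $\vec{u}$ and $\vec{v}$ (the conditions involving $\vec{u}_k$ use $\vec{u}_i = \vec{v}_i$ for $i \le k$ by \ref{b2}). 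The case $\widehat{\boldsymbol{F}} = \bot$ is trivial since $\bot^* = \bot$.

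\textbf{Closure under series composition.} This is the main work. Let $\widehat{\boldsymbol{F}} = (\boldsymbol{F}, p_\pin, p_\pout)$ and $\widehat{\boldsymbol{F}}' = (\boldsymbol{F}', p'_\pin, p'_\pout)$ be in $\mathcal{WL}_k^d \setminus \{\bot\}$; if $p_\pout \neq p'_\pin$ the composition is $\bot \in \mathcal{WL}_k^d$ and there is nothing to prove, so assume $p_\pout = p'_\pin$. Write $\boldsymbol{F} \cdot \boldsymbol{F}' = (F'', \vec{u}, \vec{v}')$ where $F''$ is obtained by disjoint union of $F$ and $F'$ followed by identifying $\vec{v}_i$ with $\vec{u}'_i$ for $i \in [k+d]$. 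The plan is to define a tree cover $\mathcal{F}''$ on $F''$ by ``stacking'' the tree cover $\mathcal{F}'$ of $F'$ on top of the chain $\vec{v}_1 \le \dots \le \vec{v}_{k+d}$ inside $\mathcal{F}$: formally, keep $\mathcal{F}$ on the copy of $F$, and for each vertex $x$ of $F'$ declare $\vec{u}_j \le x$ (equivalently $\vec{v}_j \le x$) for the appropriate initial segment dictated by where $x$ sits in $\mathcal{F}'$ relative to the $\vec{u}'$-chain — i.e.\@ graft the part of $\mathcal{F}'$ strictly above $\vec{u}'_k$ onto the leaves $\vec{v}_{k+d}$ of the $F$-part's chain. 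Because \ref{b5} guarantees every leaf of both $\mathcal{F}$ and $\mathcal{F}'$ has exactly $d$ vertices strictly above $\vec{u}_k$ resp.\@ $\vec{u}'_k$, the grafting is consistent and produces a genuine tree cover; I would check that edges of $F''$ are still comparable chains (edges within each part are fine; the identification does not create new edges). For the pebbling function, set $p''$ to agree with $p$ on the $F$-part and with $p'$ on the $F'$-part — this is consistent on the identified vertices precisely because $p_\pout = p'_\pin$ forces $p(\vec{v}_i) = p'(\vec{u}'_i)$ for all $i$ — and verify the $(k+1)$-pebbling condition is preserved along chains crossing the gluing interface, using injectivity of $p$ resp.\@ $p'$ on the top segments. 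Then I would run down \ref{b2}--\ref{b6}: \ref{b2}, \ref{b3} and the injectivity conditions are inherited; \ref{b4} holds because unlabelled vertices of $F''$ come from unlabelled vertices of $F$ or $F'$, each lying above $\vec{u}_k$; \ref{b5} holds by the length bookkeeping; and \ref{b6} — that the new leaves $\vec{u}_{k+d} \cdot$, i.e.\@ the leaves of $\boldsymbol{F} \cdot \boldsymbol{F}'$ which I must identify correctly (they are the deepest leaves $\vec{u}_{k+d}$ of $F$-part extended into $F'$ and $\vec{v}'_{k+d}$), have the minimum number of common ancestors above $\vec{u}_k$ — should follow from \ref{b6} for $\widehat{\boldsymbol{F}}'$ since any two leaves of $\mathcal{F}''$ share exactly the common ancestors they have inside the $F'$-grafted part (plus the fixed chain $\vec{u}_1 \le \dots \le \vec{u}_k$, which $\gca$ excludes).

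\textbf{Unlabelling lands in $\mathcal{PFC}_k^d$.} Finally, for the last sentence: given $\widehat{\boldsymbol{F}} = (\boldsymbol{F}, p_\pin, p_\pout) \in \mathcal{WL}_k^d \setminus \{\bot\}$ with witnessing $(\mathcal{F}, p)$ and $\boldsymbol{F} = (F, \vec{u}, \vec{v})$, I would exhibit a $(k+1)$-pebble forest cover of depth $d$ of the $k$-labelled graph $(F, (\vec{u}_1, \dots, \vec{u}_k))$ as required by \cref{def:pfc-labelled}, which shows $F \in \mathcal{PFC}_k^d$. Take the same forest cover $\mathcal{F}$ (it is even a tree cover) and the same pebbling function $p$. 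The three conditions of \cref{def:pfc-labelled} for the label tuple $(\vec{u}_1, \dots, \vec{u}_k)$ are exactly: (1) $\vec{u}_1, \dots, \vec{u}_k$ form a chain — this is the initial segment of the chain in \ref{b2}; (2) all other vertices lie above them — this is \ref{b4} (all unlabelled vertices are $\ge \vec{u}_k$, and the remaining labelled vertices $\vec{u}_{k+1}, \dots, \vec{u}_{k+d}$ and $\vec{v}_{k+1}, \dots$ are above $\vec{u}_k$ by \ref{b2}--\ref{b3}); (3) $p$ injective on $\{\vec{u}_1, \dots, \vec{u}_k\}$ — this is one of the listed conditions of \cref{def:wlkd}. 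The depth condition (longest chain among vertices strictly above $\vec{u}_k$ has length $d$) is \ref{b5} together with the fact that $\mathcal{F}$ is a tree, so every maximal chain reaches a leaf. Hence $(F, (\vec{u}_1,\dots,\vec{u}_k)) \in \mathcal{LPFC}_k^d$ and so its underlying graph $F$ is in $\mathcal{PFC}_k^d$.

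\textbf{Main obstacle.} I expect the delicate point to be the series composition: precisely specifying the glued tree order $\mathcal{F}''$ so that it is simultaneously a valid tree cover of $F''$, compatible with a well-defined pebbling function across the interface, \emph{and} still satisfies the minimality conditions \ref{b5}--\ref{b6}. In particular \ref{b6} requires care because it is a global extremality condition, and one must argue that stacking $\mathcal{F}'$ on top of the $\vec{v}$-chain of $\mathcal{F}$ does not inadvertently create a pair of leaves sharing fewer ancestors above $\vec{u}_k$ than the designated pair. The bookkeeping is routine but easy to get wrong, so I would be explicit about which vertices count toward $\gca$ in $F''$.
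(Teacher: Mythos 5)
Your construction for series composition (identify the two tree covers along the chains $\vec{v}_i = \vec{u}'_i$, take the union of the pebbling functions, which is well defined exactly because $p_\pout = p'_\pin$) is the same as the paper's, and your treatment of reversal and of the unlabelling claim is fine. However, there is a genuine gap at precisely the point the paper's proof is about: the verification of \ref{b6}. Your stated reason --- that \ref{b6} for the composition ``follows from \ref{b6} for $\widehat{\boldsymbol{F}}'$ since any two leaves of $\mathcal{F}''$ share exactly the common ancestors they have inside the $F'$-grafted part'' --- is false. The leaves of the glued cover comprise all leaves of $\mathcal{F}$ and all leaves of $\mathcal{F}'$, and for a pair of leaves of the $F$-part, or for a mixed pair (one leaf from each factor), the common ancestors counted by $\gca$ are vertices of the $F$-part lying on or below the identified chain, not vertices of the grafted $F'$-part; so \ref{b6} of $\widehat{\boldsymbol{F}}'$ alone cannot bound them. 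Moreover the extremality of the designated pair $(\vec{u}_{k+d}, \vec{v}'_{k+d})$ against mixed pairs does not reduce to \ref{b6} of either factor separately, which is exactly why the paper flags this axiom as ``less immediate.''

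What is actually needed (and what the paper proves) is the following two-step argument. First, every common ancestor of $\vec{u}_{k+d}$ and $\vec{v}'_{k+d}$ in the glued order is an ancestor of the identified vertex $\vec{v}_{k+d} = \vec{u}'_{k+d}$, so $\gca(\vec{u}_{k+d}, \vec{v}'_{k+d}) \leq \gca(\vec{u}_{k+d}, \vec{v}_{k+d})$ and $\gca(\vec{u}_{k+d}, \vec{v}'_{k+d}) \leq \gca(\vec{u}'_{k+d}, \vec{v}'_{k+d})$. Second, letting $z$ be the maximal common ancestor of $\vec{u}_{k+d}$ and $\vec{v}'_{k+d}$, one shows $z \leq x$ for \emph{every} leaf $x$ of either factor: if $z$ and $x$ were incomparable, then $\gca(x, \vec{u}_{k+d}) < \gca(\vec{u}_{k+d}, \vec{v}'_{k+d})$, contradicting \ref{b6} of the respective factor combined with the first step. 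Hence every pair of leaves of the glued cover has $\gca$ at least $\gca(\vec{u}_{k+d}, \vec{v}'_{k+d})$, which is \ref{b6} for the composition. Note this uses \ref{b6} of \emph{both} factors. A smaller point: be careful with your description of the grafting --- $F'$-only vertices attach along the whole identified chain above $\vec{u}'_k$ (not on top of the leaf $\vec{v}_{k+d}$); grafting above that leaf would add depths and break \ref{b5}, whereas the chain identification keeps all leaves at depth exactly $d$, as your bookkeeping requires.
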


\begin{proof}The closure of $\mathcal{WL}_k^d$ under reversal and the assertion that graphs obtained from elements of $\mathcal{WL}_k^d$ by unlabelling are in $\mathcal{PFC}_k^d$ follow immediately from \cref{def:wlkd}. 
	
	It remains to show that $\mathcal{WL}_k^d$ is closed under series composition. To that end, let  $\widehat{\boldsymbol{F}} = (\boldsymbol{F}, p_\pin, p_\pout)$ and $\widehat{\boldsymbol{F}}' = (\boldsymbol{F}', p'_\pin, p'_\pout)$ be elements of $\mathcal{WL}_k^d$, wlog different from $\bot$. If $p_\pout \neq p'_\pin$ then the result of the series composition is $\bot \in \mathcal{WL}_k^d$.
	
	Otherwise, a tree cover of the graph underlying $\boldsymbol{F} \cdot \boldsymbol{F}'$ can be defined by taking the disjoint union of the tree covers of $\boldsymbol{F} = (F, \vec{u}, \vec{v})$ and $\boldsymbol{F}' = (F', \vec{u}', \vec{v}')$ satisfying \cref{def:wlkd} and identifying $\vec{v}_i$ with $\vec{u}'_i$ for all $i \in [k+d]$. The new pebbling function is defined as the union of the original pebbling functions. The assumption that  $p_\pout = p'_\pin$ guarantees that this is feasible. The axioms of \cref{def:wlkd} are easily checked. \ref{b6} is less immediate:
	
	First observe that $\gca(\vec{u}_{k+d}, \vec{v}_{k+d})$ and $\gca(\vec{u}'_{k+d}, \vec{v}'_{k+d})$ are at least $\gca(\vec{u}_{k+d}, \vec{v}'_{k+d})$. Indeed, any common ancestor of $\vec{u}_{k+d}$ and $\vec{v}'_{k+d}$ must be an ancestor of $\vec{v}_{k+d}$, which is identified with $\vec{u}'_{k+d}$. Let $z$ denote the maximal vertex such that $z \leq \vec{u}_{k+d}, \vec{v}'_{k+d}$. Any leaf $x$ of $\boldsymbol{F}$ satisfies $z \leq x$. Indeed, if $z$ and $x$ were incomparable w.r.t.\@ $\leq$ then $\gca(x, \vec{u}_{k+d}) < \gca(\vec{u}_{k+d}, \vec{v}'_{k+d})$ which contradicts the previous observation since $\gca(x, \vec{u}_{k+d}) \geq \gca(\vec{u}_{k+d}, \vec{v}_{k+d})$. The same argument applies to leaves of $\boldsymbol{F}'$. It follows that $\gca(x, y) \geq \gca(\vec{u}_{k+d}, \vec{v}'_{k+d})$ for every pair of leaves $x,y$ in the series product.
\end{proof}

\Cref{prop:ops} shows that the augmented homomorphism representation is indeed a representation of the algebraic object $\mathcal{WL}_k^d$ with reversal and series composition in terms of matrices.
\begin{lemma} \label{prop:ops}
Let $G$ be a graph. Let $\widehat{\boldsymbol{F}} = (\boldsymbol{F}, p_\pin, p_\pout), \widehat{\boldsymbol{F}}' = (\boldsymbol{F}', p'_\pin, p'_\pout) \in \mathcal{WL}_k^d$. 
\begin{enumerate}
\item Unlabelling corresponds to sum-of-entries, i.e.\@ $\hom(\soe \widehat{\boldsymbol{F}}, G ) = \soe \widehat{\boldsymbol{F}}_G$.
\item Reversal corresponds to taking adjoints, i.e.\@ $(\widehat{\boldsymbol{F}}^*)_G = (\widehat{\boldsymbol{F}}_G)^*$.
\item Series composition corresponds to matrix product, i.e.\@ $(\widehat{\boldsymbol{F}} \cdot \widehat{\boldsymbol{F}}' )_G = \widehat{\boldsymbol{F}}_G \cdot \widehat{\boldsymbol{F}}'_G$.
\end{enumerate}
\end{lemma}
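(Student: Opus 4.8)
The plan is to reduce each of the three assertions to the corresponding well-known identity for ordinary bilabelled graphs from \cite{mancinska_quantum_2019,grohe_homomorphism_2021}, treating the extra Kronecker factor $e_{p_\pin}e_{p_\pout}^T$ by elementary properties of the Kronecker product. Concretely, I would use three facts about $\otimes$: the mixed-product law $(A \otimes B)(C \otimes D) = (AC) \otimes (BD)$, compatibility with conjugate transposition $(A \otimes B)^* = A^* \otimes B^*$, and multiplicativity of the sum of entries $\soe(A \otimes B) = \soe(A)\soe(B)$. Combined with the observations that $e_{p_\pin}e_{p_\pout}^T$ has a single nonzero entry, that $(e_{p_\pin}e_{p_\pout}^T)^* = e_{p_\pout}e_{p_\pin}^T$, and that $e_{p_\pout}^T e_{p'_\pin}$ equals $1$ if $p_\pout = p'_\pin$ and $0$ otherwise, the augmented statements reduce to the unaugmented ones.

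For assertion~(1) I would write $\soe \widehat{\boldsymbol{F}}_G = \soe(\boldsymbol{F}_G) \cdot \soe(e_{p_\pin}e_{p_\pout}^T) = \soe \boldsymbol{F}_G$, and then note $\soe \boldsymbol{F}_G = \hom(F, G)$ since summing the homomorphism count over all images of $\vec{u}$ and $\vec{v}$ counts every homomorphism $F \to G$ exactly once; as $\soe \widehat{\boldsymbol{F}}$ is by definition $F$ with its labels forgotten, this is the claim. For assertion~(2), $(\widehat{\boldsymbol{F}}_G)^* = (\boldsymbol{F}_G)^* \otimes e_{p_\pout}e_{p_\pin}^T$, so it remains to verify $(\boldsymbol{F}^*)_G = (\boldsymbol{F}_G)^*$; this holds entrywise because $(\boldsymbol{F}^*)_G(\vec{y},\vec{x})$ counts precisely the homomorphisms counted by $\boldsymbol{F}_G(\vec{x},\vec{y})$, and a homomorphism matrix is real, so its conjugate transpose is its transpose. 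For assertion~(3), the mixed-product law gives $\widehat{\boldsymbol{F}}_G \widehat{\boldsymbol{F}}'_G = (\boldsymbol{F}_G\boldsymbol{F}'_G) \otimes \bigl(e_{p_\pin}(e_{p_\pout}^T e_{p'_\pin})e_{p'_\pout}^T\bigr)$, which is the zero matrix when $p_\pout \neq p'_\pin$ — matching the definition $\widehat{\boldsymbol{F}} \cdot \widehat{\boldsymbol{F}}' = \bot$ — and equals $(\boldsymbol{F}_G\boldsymbol{F}'_G) \otimes e_{p_\pin}e_{p'_\pout}^T$ otherwise, so the claim reduces to $(\boldsymbol{F} \cdot \boldsymbol{F}')_G = \boldsymbol{F}_G\boldsymbol{F}'_G$. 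This last identity is the standard series-composition formula: expanding $\sum_{\vec{y}} \boldsymbol{F}_G(\vec{x},\vec{y})\boldsymbol{F}'_G(\vec{y},\vec{z})$ counts pairs of homomorphisms $h \colon F \to G$ and $h' \colon F' \to G$ with $h(\vec{u}) = \vec{x}$, $h(\vec{v}) = h'(\vec{u}') = \vec{y}$, and $h'(\vec{v}') = \vec{z}$, and such pairs correspond bijectively to homomorphisms out of the glued graph $F \cdot F'$ respecting the in- and out-labels.

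I do not expect a genuine obstacle: the three unaugmented identities are standard \cite{mancinska_quantum_2019,grohe_homomorphism_2021} and the Kronecker bookkeeping is routine. The only point I would treat with a little care is the bijection in~(3) when the label tuples $\vec{v}$ or $\vec{u}'$ contain repeated entries; in that case the summation over $\vec{y}$ in the matrix product is automatically restricted to tuples compatible with those repetitions, which is exactly what the identification $\vec{v}_i = \vec{u}'_i$ in the series composition enforces, so the correspondence still goes through.
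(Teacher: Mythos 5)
Your proposal is correct and follows essentially the same route as the paper: both reduce each augmented identity to the standard unaugmented one from Man\v{c}inska--Roberson and Grohe et al.\@ and then handle the extra factor $e_{p_\pin}e_{p_\pout}^T$, the paper by entrywise sums with delta terms and you by the equivalent Kronecker-product identities, including the same treatment of the $\bot$ case when $p_\pout \neq p'_\pin$. Your added care about repeated label entries in the series composition is sound and does not change the argument.
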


\begin{proof}
	First consider the correspondence of unlabelling and sum-of-entries. Recalling the analogous argument for the homomorphism representation from \cite{mancinska_quantum_2019,grohe_homomorphism_2021},
	\begin{align*}
		\hom(\soe \widehat{\boldsymbol{F}}, G)
		&= \hom(\soe \boldsymbol{F}, G) \\
		&= \sum_{\vec{v}, \vec{v}' \in V(G)^{k+d}} \boldsymbol{F}_G(\vec{v}, \vec{v}') \\
		&= \sum_{\substack{\vec{v}, \vec{v}' \in V(G)^d, \\ p, p' \colon [k+d] \to [k+1]}} \boldsymbol{F}_G(\vec{v}, \vec{v}') e_p^T e_{p_\pin} e_{p_\pout}^T e_{p'} \\
		&= \soe \widehat{\boldsymbol{F}}_G
	\end{align*}
	since $e_p^T e_{p_\pin} e_{p_\pout}^T e_{p'} = 1$ iff $p = p_\pin$ and $p' = p_\pout$ and zero otherwise for all $p, p' \colon [k+d] \to [k+1]$.
	
	The correspondence between reversal and transposition is purely syntactical.
	
	Finally, consider series composition and matrix product. If one of the factors is $\bot$, which is mapped to the zero matrix under the augmented homomorphism representation, the statement is readily verified. Otherwise, let $\vec{u}, \vec{u}' \in V(G)^{k+d}$ and $p, p' \colon [k+d] \to [k+1]$ be arbitrary.
	\begin{align*}
		(\widehat{\boldsymbol{F}}_G \cdot \widehat{\boldsymbol{F}}'_G)(\vec{u},p; \vec{u},p')
		&= \sum_{\substack{\vec{v} \in V(G)^{k+d}, \\ q \colon [k+d] \to [k+1]}} \widehat{\boldsymbol{F}}_G(\vec{u}, p; \vec{v}, q) \widehat{\boldsymbol{F}}'_G(\vec{v}, q; \vec{u}', p') \\
		&= \sum_{\substack{\vec{v} \in V(G)^{k+d}, \\ q \colon [k+d] \to [k+1]}} \boldsymbol{F}_G(\vec{u}, \vec{v}) \boldsymbol{F}'_G(\vec{v}, \vec{u}') \delta_{p = p_\pin} \delta_{p_\pout  = q = p'_\pin} \delta_{p'_\pout = p'} \\
		&= \sum_{\vec{v} \in V(G)^{k+d}} \boldsymbol{F}_G(\vec{u}, \vec{v}) \boldsymbol{F}'_G(\vec{v}, \vec{u}') \delta_{p = p_\pin} \delta_{p_\pout = p'_\pin} \delta_{p'_\pout = p'} \\
		&= \begin{cases}
			(\boldsymbol{F} \cdot \boldsymbol{F}')_G(\vec{u},\vec{u}') \delta_{p = p_\pin}\delta_{p'_\pout = p'}, & \text{if } p_\pout = p'_\pin,\\
			0, & \text{otherwise},
		\end{cases}\\
		&= (\widehat{\boldsymbol{F}} \cdot \widehat{\boldsymbol{F}}')_G(\vec{u},p; \vec{u},p').
	\end{align*}
	The last equality holds because of the correspondence of series composition and matrix product in the non-augmented case \cite{mancinska_quantum_2019,grohe_homomorphism_2021}.
\end{proof}

\subsection{Finite Generation}
\label{ssec:fingen}
To prove that $\mathcal{WL}_k^d$ is finitely generated under series composition, we define the set $\mathcal{A}_k^d$, cf.\@ \cref{fig:intro-tdb}. Their augmented homomorphism representations yield the matrix maps featured in \cref{thm:main-intro}. For $n \in \mathbb{N}$, write $(n)$ for the tuple $(1,2,\dots, n)$.

\begin{definition}\label{defn:gadj}
	The set $\mathcal{A}_k^d$ is the subset of $\mathcal{WL}_k^d$ containing the following elements:
	For every $H \subseteq [k]$, every $\vec{h} \in H^k$ such that $\vec{h}_1 \leq \vec{h}_2 \leq \dots \leq \vec{h}_k$ and every $p_H \colon [k] \to [k+1]$ such that $p(i) = p(j)$ implies $\vec{h}_i = \vec{h}_j$ for all $i,j\in [k]$,
	\begin{enumerate}
		\item the \emph{identity graph} $(\boldsymbol{I}, p, p)$, where $\boldsymbol{I} = (I, \vec{h}(k+1)\dots(k+d), \vec{h}(k+1)\dots(k+d))$ with $V(I) = H \cup \{k+1, \dots, k+d\}$ and $E(I) = \emptyset$, for every $p \colon [k+d] \to [k+1]$ such that $p|_{[k]} = p_H$.
		\item the \emph{adjacency graph} $(\boldsymbol{A}^{ij}, p, p)$ for $i < j \in H \cup \{k+1, \dots, k+d\}$, where $\boldsymbol{A}^{ij} = (A^{ij}, \vec{h}(k+1)\dots(k+d), \vec{h}(k+1)\dots(k+d))$ with $V(A^{ij}) = H \cup \{k+1, \dots, k+d\}$ and $E(A^{ij}) = \{ij\}$, for every $p \colon [k+d] \to [k+1]$ such that $p(i) \neq p(\ell)$ for all $i < \ell \leq j$ and $p|_{[k]} = p_H$.
		\item the \emph{join graph} $(\boldsymbol{J}^\ell, p_\pin, p_\pout)$ for $k \leq \ell < k+d$, where  $\boldsymbol{J}^\ell = (J^\ell, \vec{h}(k+1)\dots(k+d), \vec{h}(k+1)\dots\ell(\ell+1)'\dots (k+d)')$ with  $V(J^\ell) = H \cup \{k+1, \dots, k+d, (\ell+1)', \dots, (k+d)'\}$ and $E(J^\ell) = \emptyset$, for every $p_\pin, p_\pout \colon [k+d] \to [k+1]$ such that $p_\pin|_{[\ell]} = p_\pout|_{[\ell]}$ and $p|_{[k]} = p_H$.
	\end{enumerate}
\end{definition}
\begin{figure}
	\centering
	\captionsetup[subfigure]{justification=centering}
	\tikzset{
		vertex/.style = {fill,circle,inner sep=0pt,minimum size=5pt},
		edge/.style = {-,thick},
		lbl/.style={color=lightgray}
	}
	\begin{subfigure}[t]{0.3 \textwidth}
		\centering
		\begin{tikzpicture}
			\node[vertex, label = {left:$(k+1)\vphantom{(k+1)'}$}, label = {right:$(k+1)'$}] (v1) {} ;
			\node[vertex, label = {left:$(k+2)\vphantom{(k+2)'}$}, label = {right:$(k+2)'$}] (v2) [below of = v1] {} ;
			\node[vertex, label = {left:$k+d\vphantom{(k+d)'}$}, label = {right:$(k+d)'$}] (vd) [below of = v2, yshift=-1cm]{} ;
			
			\draw [fill=gray, draw=gray, -stealth] (v1) -- (v2);
			\draw [fill=gray, draw=gray, -stealth] (v2) -- (vd) node [midway, xshift=-.2cm, anchor=center, rotate=270] {$\dots$} node [midway, xshift=.2cm, anchor=center, rotate=270] {$\dots$};
			
			\draw [fill=gray, draw=black, rounded corners=2pt] (-.1, 2) rectangle (.1, .5);	
			\draw [fill=gray, draw=gray, -stealth] (0, .5) -- (v1);
			\node [anchor=north east] at (-.1, 2.1) {$1\vphantom{1'}$};
			\node [anchor=north west] at (.1, 2.1) {$1'$};
			\node [anchor=south east] at (-.1, .5) {$k\vphantom{k'}$};
			\node [anchor=south west] at (.1, .5) {$k'$};
			\node [rotate=270] at (-.3, 1.25) {$\dots$};
			\node [rotate=270] at (.3, 1.25) {$\dots$};
		\end{tikzpicture}
		\caption{$\boldsymbol{I}$}
	\end{subfigure}
	\begin{subfigure}[t]{0.3\textwidth}
		\centering
		\begin{tikzpicture}
			\node[vertex, label = {left:$(k+1)\vphantom{(k+1)'}$}, label = {right:$(k+1)'$}] (v1) {} ;
			\node[vertex, label = {left:$i\vphantom{i'}$}, label = {right:$i'$}] (vi) [below of = v1] {} ;
			\node[vertex, label = {left:$j\vphantom{j'}$}, label = {right:$j'$}] (vj) [below of = vi] {} ;
			\node[vertex, label = {left:$k+d\vphantom{(k+d)'}$}, label = {right:$(k+d)'$}] (vd) [below of = vj]{} ;
			\draw[edge] (vi) to [bend left = 60] (vj) ;
			
			\draw [fill=gray, draw=gray, -stealth] (v1) -- (vi) node [midway, xshift=-.2cm, anchor=center, rotate=270] {$\dots$} node [midway, xshift=.2cm, anchor=center, rotate=270] {$\dots$};
			\draw [fill=gray, draw=gray, -stealth] (vi) -- (vj) node [midway, xshift=-.2cm, anchor=center, rotate=270] {$\dots$} node [midway, xshift=.2cm, anchor=center, rotate=270] {$\dots$};
			\draw [fill=gray, draw=gray, -stealth] (vj) -- (vd) node [midway, xshift=-.2cm, anchor=center, rotate=270] {$\dots$} node [midway, xshift=.2cm, anchor=center, rotate=270] {$\dots$};
			
			\draw [fill=gray, draw=black, rounded corners=2pt] (-.1, 2) rectangle (.1, .5);	
			\draw [fill=gray, draw=gray, -stealth] (0, .5) -- (v1);
			\node [anchor=north east] at (-.1, 2.1) {$1\vphantom{1'}$};
			\node [anchor=north west] at (.1, 2.1) {$1'$};
			\node [anchor=south east] at (-.1, .5) {$k\vphantom{k'}$};
			\node [anchor=south west] at (.1, .5) {$k'$};
			
			\node [rotate=270] at (-.3, 1.25) {$\dots$};
			\node [rotate=270] at (.3, 1.25) {$\dots$};
		\end{tikzpicture}
		\caption{$\boldsymbol{A}^{ij}$ for $1 \leq i < j \leq k+d$}
	\end{subfigure}
	\begin{subfigure}[t]{0.3 \textwidth}
		\centering
		\begin{tikzpicture}
			\node[vertex, label = {left:$(k+1)\vphantom{(k+1)'}$}, label = {right:$(k+1)'$}] (v1) {} ;
			\node[vertex, label = {left:$\ell\vphantom{\ell'}$}, label = {right:$\ell'$}] (vell) [below of = v1] {} ;
			\node[] (vemp) [below of = vell] {};
			\node[vertex, label = {left:$(\ell+1)$}] (vleft) [left of = vemp]{};
			\node[vertex, label = {right:$(\ell+1)'$}] (vright) [right of = vemp]{};
			\node[] (vemp1) [below of = vemp] {}; 
			\node[vertex, label = {left:$k+d$}] (vdl) [below of = vleft]{};
			\node[vertex, label = {right:$(k+d)'$}] (vdr) [below of = vright]{};
			
			\draw [fill=gray, draw=gray, -stealth] (v1) -- (vell) node [midway, xshift=-.2cm, anchor=center, rotate=270] {$\dots$} node [midway, xshift=.2cm, anchor=center, rotate=270] {$\dots$};
			\draw [fill=gray, draw=gray, -stealth] (vell) -- (vleft);
			\draw [fill=gray, draw=gray, -stealth] (vell) -- (vright);
			\draw [fill=gray, draw=gray, -stealth] (vleft) -- (vdl) node [midway, xshift=-.2cm, anchor=center, rotate=270] {$\dots$} node [midway, xshift=.2cm, anchor=center, rotate=270] {$\dots$};
			\draw [fill=gray, draw=gray, -stealth] (vright) -- (vdr) node [midway, xshift=-.2cm, anchor=center, rotate=270] {$\dots$} node [midway, xshift=.2cm, anchor=center, rotate=270] {$\dots$};
			
			\draw [fill=gray, draw=black, rounded corners=2pt] (-.1, 2) rectangle (.1, .5);	
			\draw [fill=gray, draw=gray, -stealth] (0, .5) -- (v1);
			\node [anchor=north east] at (-.1, 2.1) {$1\vphantom{1'}$};
			\node [anchor=north west] at (.1, 2.1) {$1'$};
			\node [anchor=south east] at (-.1, .5) {$k\vphantom{k'}$};
			\node [anchor=south west] at (.1, .5) {$k'$};
			
			\node [rotate=270] at (-.3, 1.25) {$\dots$};
			\node [rotate=270] at (.3, 1.25) {$\dots$};
		\end{tikzpicture}
		\caption{$\boldsymbol{J}^\ell$ for $k \leq \ell < k+d$}
	\end{subfigure}
	\caption{\label{fig:intro-tdb}The $(k+d,k+d)$-bilabelled graphs introduced in \cref{defn:gadj}. The numbers on the left and right of the vertices indicate in- and out-labels respectively. The relation $\leq$ of the tree cover is indicated by grey directed edges pointing from lesser to greater elements.}
\end{figure}
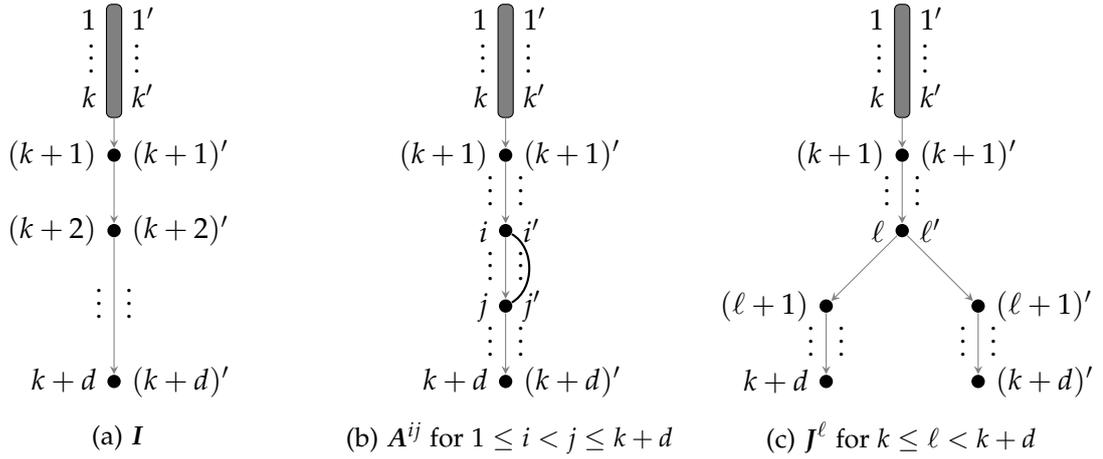

Observe that $\boldsymbol{I}$, $\boldsymbol{A}^{ij}$, and $\boldsymbol{J}^{ij}$ admit tree covers compatible with all stipulated pebbling functions, cf.\@ \cref{fig:intro-tdb}. 
Intuitively, the vertices in $H$ represent the free variables of the formulas in \cref{thm:camb}.
The set $\mathcal{A}_k^d$ comprises $O(k^k \cdot (k+d)^2 \cdot k^{2(k+d)} )$ many matrix maps of dimension $O((nk)^{2(d+k)})$ for $n$-vertex graphs. As argued in \cref{sec:results}, we consider $k$ and $d$ to be small constants in application scenarios. Having defined the generators, it can be proven that $\mathcal{WL}_k^d$ is finitely generated.

\begin{proposition} \label{prop:fingen}
For every $\widehat{\boldsymbol{F}} \in \mathcal{WL}_k^d$, there exist $\widehat{\boldsymbol{A}}^1, \dots, \widehat{\boldsymbol{A}}^r \in \mathcal{A}_k^d$ for some $r \in \mathbb{N}$ such that
$\widehat{\boldsymbol{F}}$
is the series composition of $\widehat{\boldsymbol{A}}^1, \dots, \widehat{\boldsymbol{A}}^r$, i.e.\@ $ \widehat{\boldsymbol{F}} = \widehat{\boldsymbol{A}}^1 \cdot \dots \cdot \widehat{\boldsymbol{A}}^r$.
\end{proposition}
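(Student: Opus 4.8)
The plan is to induct on the number of unlabelled vertices of $\widehat{\boldsymbol{F}} = (\boldsymbol{F}, p_\pin, p_\pout)$, equivalently on $\abs{V(F)} - k$ where $F$ is the underlying graph, by peeling off generators one at a time from the top of the fixed tree cover $\mathcal{F}$. Fix a tree cover $\mathcal{F}$ with $(k+1)$-pebbling function $p$ witnessing $\widehat{\boldsymbol{F}} \in \mathcal{WL}_k^d$ as in \cref{def:wlkd}. Let $H \coloneqq \{i \in [k] \mid \vec{u}_i \in \{\vec{u}_1,\dots,\vec{u}_k\}\}$ record which of the first $k$ coordinates are genuinely distinct vertices (using \ref{b2} and injectivity of $p$ on $\{\vec{u}_1,\dots,\vec{u}_k\}$), and let $\vec{h} \in H^k$, $p_H \coloneqq p_\pin|_{[k]} = p_\pout|_{[k]}$ be the associated data, matching the parametrisation in \cref{defn:gadj}. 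If $\widehat{\boldsymbol{F}} = \bot$ it is itself the empty series composition of zero generators (or a degenerate product yielding $\bot$), so assume $\widehat{\boldsymbol{F}} \neq \bot$.

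The base case is when $F$ has no unlabelled vertices beyond $\vec{u}_k$, i.e.\@ $d = 0$ or all of $\vec{u}_{k+1},\dots,\vec{u}_{k+d}$ coincide with $\vec{u}_k$; then by \ref{b4}, \ref{b5} every vertex is labelled and $F$ is a single chain $\vec{u}_1 \leq \dots \leq \vec{u}_{k+d}$ carrying some edges. Such $\widehat{\boldsymbol{F}}$ decomposes as a series composition of identity graphs $\boldsymbol{I}$ and adjacency graphs $\boldsymbol{A}^{ij}$: introduce each edge $\vec{u}_i\vec{u}_j$ of $F$ by one $\boldsymbol{A}^{ij}$ factor (the pebbling side-condition $p(i) \neq p(\ell)$ for $i < \ell \leq j$ is exactly the $(k+1)$-pebbling condition for the edge $\vec{u}_i \vec{u}_j$ along the chain), and pad with identity graphs so that all the in/out pebbling functions chain up correctly; since series composition multiplies the homomorphism matrices (\cref{prop:ops}) and the $\boldsymbol{A}^{ij}$ have a single edge each, the product realises $\widehat{\boldsymbol{F}}$. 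For the inductive step, consider the leaves of $\mathcal{F}$. By \ref{b5} every leaf sits at the same depth $d$ below $\vec{u}_k$. Pick the ``last'' join vertex, i.e.\@ the greatest vertex $z$ that is a common ancestor of $\vec{u}_{k+d}$ and $\vec{v}_{k+d}$; by \ref{b6} this is the greatest branching point of the whole tree above $\vec{u}_k$. Write $\ell \coloneqq \gca(\vec{u}_{k+d},\vec{v}_{k+d}) + k$ so that $z = \vec{u}_\ell = \vec{v}_\ell$ sits at position $\ell$ in both chains \ref{b3}. Then $\widehat{\boldsymbol{F}}$ factors as $\widehat{\boldsymbol{G}} \cdot \boldsymbol{J}^\ell \cdot (\widehat{\boldsymbol{F}}_1 \parallel \widehat{\boldsymbol{F}}_2)$ where $\widehat{\boldsymbol{G}}$ is the part of $F$ from the root chain up to and including $z$ (a graph with strictly fewer unlabelled vertices, to which induction applies), $\boldsymbol{J}^\ell$ is the join generator that splits the single strand at level $\ell$ into the two strands leading to $\vec{u}_{k+d}$ and $\vec{v}_{k+d}$, and the last factor handles the two subtrees hanging off $z$ — here one must argue that the two ``branches'' can themselves be expressed as products of generators, e.g.\@ by first processing the $\vec{u}$-branch while carrying the $\vec{v}$-branch trivially and then vice versa, using identity and adjacency generators whose $p_\pin, p_\pout$ agree on the appropriate prefixes. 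Each such branch again has fewer unlabelled vertices, so induction closes.

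The main obstacle is the branching bookkeeping in the inductive step: because every element of $\mathcal{A}_k^d$ is itself a full $(k+d,k+d)$-bilabelled graph on the ambient label set $H \cup \{k+1,\dots,k+d\}$ (one cannot locally manipulate a subtree without dragging along the rest of the label structure), one has to verify that the ``inactive'' strand in each factor really can be carried along as a chain of $\boldsymbol{I}$'s and $\boldsymbol{A}^{ij}$'s without violating the pebbling side-conditions of \cref{defn:gadj}, and that the pebbling functions $p_\pin, p_\pout$ of consecutive factors match so that no series composition collapses to $\bot$. This is where the choice of $\ell$ via \ref{b6} is essential — it guarantees that below level $\ell$ the cover is a single strand on both sides, so the earlier factors $\widehat{\boldsymbol{G}}$ genuinely lie in $\mathcal{WL}_k^{\ell - k}$ with a chain-shaped label set and the induction hypothesis is applicable. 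One also has to check the depth constraint \ref{b5} is preserved when splitting, which follows since both subtrees below $z$ have all leaves at common depth $d$ by \ref{b5} applied to $\mathcal{F}$, and the pebbling function restricted to each subtree still satisfies the $(k+1)$-pebble condition. The remaining verifications — that the homomorphism matrices multiply correctly and that the $\atp$-data is respected — are routine given \cref{prop:ops} and the definition of series composition.
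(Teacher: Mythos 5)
Your base case and the intended roles of $\boldsymbol{I}$, $\boldsymbol{A}^{ij}$ and $\boldsymbol{J}^\ell$ are in the right spirit, but the inductive step contains a genuine gap rather than mere bookkeeping. The factor you write as $\widehat{\boldsymbol{F}}_1 \parallel \widehat{\boldsymbol{F}}_2$ does not exist in this calculus: $\mathcal{WL}_k^d$ carries only series composition, and between consecutive factors only the $k+d$ labelled positions are shared. A subtree hanging off the $\vec{v}$-strand, with its unlabelled vertices and possibly several leaves, cannot be ``carried along trivially'' by chains of $\boldsymbol{I}$'s and $\boldsymbol{A}^{ij}$'s, because those generators have no vertices besides the labelled ones. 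Moreover, your choice of split level $\ell$ at the branch point $z$ of the $\vec{u}$- and $\vec{v}$-strands is wrong in general: other subtrees may hang off the $\vec{u}$-strand \emph{above} $z$, so one cannot strip that strand in a single step without disconnecting them. Your proposed remedy (``first process the $\vec{u}$-branch while carrying the $\vec{v}$-branch trivially and then vice versa'') is exactly the part that needs a construction, and none is given. A secondary issue: your base case ``no unlabelled vertices'' also contains Y-shaped covers where the $\vec{u}$- and $\vec{v}$-strands diverge; these already require a $\boldsymbol{J}^\ell$ factor and are not products of $\boldsymbol{I}$ and $\boldsymbol{A}^{ij}$ alone, so with your induction measure the base case is not what you describe (the correct degenerate case is a cover with a single leaf).

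The paper closes this gap differently: it inducts on the number of \emph{leaves} of the tree cover and peels off, in each step, only the tip of the $\vec{u}_{k+d}$-strand above its last branching, writing $\widehat{\boldsymbol{F}} = \widehat{\boldsymbol{F}}' \cdot \widehat{\boldsymbol{J}}^\ell \cdot \widehat{\boldsymbol{F}}''$, where $\widehat{\boldsymbol{F}}'$ is the single-leaf graph induced on the vertices comparable with $\vec{u}_{k+d}$ (labelled $\vec{u}$ on both sides) and $\widehat{\boldsymbol{F}}''$ is the \emph{entire} remaining graph, whose in-label tuple is re-anchored along the strand to a new leaf $x$. The crucial step your sketch omits is the choice of $x$ — among the leaves with $\gca(\vec{u}_{k+d},x)$ maximal, one with $\gca(x,\vec{v}_{k+d})$ minimal — which is what guarantees that the re-anchored $(\boldsymbol{F}'',p',p_\pout)$ again satisfies \ref{b6} (the paper verifies this by a case distinction) and hence lies in $\mathcal{WL}_k^d$, so that the induction hypothesis applies to it. Without this re-anchoring device, or some substitute for it, your induction does not close.
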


\begin{proof}Let $\widehat{\boldsymbol{F}} = (\boldsymbol{F}, p_\pin, p_\pout) \in \mathcal{WL}_k^d$. 
	Let $\mathcal{F}$ be a tree cover and $p$ be a pebbling function of the graph underlying $\boldsymbol{F} = (F, \vec{u}, \vec{v})$ satisfying \cref{def:wlkd}.
	The proof is by induction on the number of leaves in $\mathcal{F}$. 
	
	Suppose that $\mathcal{F}$ has exactly one leaf, i.e.\@ it is a chain. 
	Then $\boldsymbol{F}$ is a series composition of $\boldsymbol{I}$ and  $\abs{E(F)}$-many bilabelled graphs of the form $\boldsymbol{A}^{ij}$, one for each edge in $F$.
		
	Suppose that $\mathcal{F}$ has more than one leaf. 
	Write $X$ for the set of all leaves $x$ of $\mathcal{F}$ other than $\vec{u}_{k+d}$ such that $\gca(\vec{u}_{k+d}, x)$ is maximal, i.e.\@ such that $\gca(\vec{u}_{k+d}, x) \geq \gca(\vec{u}_{k+d}, y)$ for all leaves $y \neq \vec{u}_{k+d}$. Let $x \in X$ be such that $\gca(x, \vec{v}_{k+d})$ is minimal, i.e.\@ $\gca(x, \vec{v}_{k+d}) \leq \gca(y, \vec{v}_{k+d})$ for all $y \in X$.
	Write $D$ for the set of vertices $z \in V(F)$ such that $z \leq \vec{u}_{k+d}$ and $z$ and $x$ are incomparable. Note that $D$ forms a chain in $\mathcal{F}$.
	
	Define $F'$ as the induced subgraph of $F$ on vertices comparable with $\vec{u}_{k+d}$. Define $\mathcal{F}'$ by restriction of $\mathcal{F}$, respectively, to $V(F')$. It can be readily verified that $\widehat{\boldsymbol{F}}' = (\boldsymbol{F}', p', p')$ with $p'(i) \coloneqq p(\vec{u}_i)$ for all $i \in [k+d]$ and $\boldsymbol{F}' = (F', \vec{u}, \vec{u})$ is in $\mathcal{WL}_k^d$. Moreover, its tree cover $\mathcal{F}'$ has only a single leaf.
	
	Define $F''$ as the graph obtained from $F$ by deleting the vertices in $D$.  Define $\mathcal{F}''$ by restriction of $\mathcal{F}$, respectively, to $V(F'')$. This forest cover is indeed a tree cover, i.e.\@ connected, since $x$ was chosen to be maximal. Write $\vec{x} \in V(F'')^{k+d}$ for the tuple such that $\vec{x}_i = \vec{u}_i = \vec{v}_i$ for all $i \in [k]$ and $\vec{x}_{k+d} = x$ satisfying \ref{b2} and \ref{b3}.
	Let $p'(i) \coloneqq p(\vec{x}_i)$ for all $i \in [k+d]$. It is claimed that $(\boldsymbol{F}'', p', p_\pout)$ is in $\mathcal{WL}_k^d$. 
	The only non-trivial property is \ref{b6}, i.e.\@ that $\gca(a, b) \geq \gca(x, \vec{v}_{k+d})$ for all leaves $a,b$ in $F''$. Distinguish cases: If $\gca(\vec{u}_{k+d}, x) > \gca(\vec{u}_{k+d}, \vec{v}_{k+d})$ then there is a vertex $y$ such that $y$ is a common ancestor of $\vec{u}_{k+d}$ and $x$ but $y$ is not an ancestor of $\vec{v}_{k+d}$. This implies that every common ancestor of $\vec{v}_{k+d}$ and $x$ is comparable with $y$ and hence $\gca(x, \vec{v}_{k+d}) = \gca(\vec{u}_{k+d}, \vec{v}_{k+d})$.
	Hence, $\gca(a, b) \geq \gca(\vec{u}_{k+d}, \vec{v}_{k+d}) = \gca(x, \vec{v}_{k+d})$ for any pair of leaves $a,b$ in $F''$.
	However, if $\gca(\vec{u}_{k+d}, x) = \gca(\vec{u}_{k+d}, \vec{v}_{k+d})$ then all leaves $a$ of $F''$ are in fact in $X$ and the claim follows readily.
	
	Let $\ell$ be maximal such that $\vec{x}_\ell = \vec{u}_\ell$. Then $\widehat{\boldsymbol{F}} = \widehat{\boldsymbol{F}}' \cdot \widehat{\boldsymbol{J}}^\ell \cdot \widehat{\boldsymbol{F}}''$. Since both $F'$ and $F''$ have less leaves than $F$, the claim follows inductively.
\end{proof}

\subsection{Matrix Equations}

We complete the proof of \cref{thm:main-intro} by appealing to the sum-of-entries version of the Specht--Wiegmann theorem, stated as \cref{thm:soe}.

\begin{proof}By \cref{cor:camb}, it suffices to show that the second assertion in \cref{thm:main-intro} is equivalent to homomorphism indistinguishability of $G$ and $H$ over $\mathcal{PFC}_k^d$.
	First suppose that the latter condition holds.
	Let $\widehat{\boldsymbol{A}}^1, \dots, \widehat{\boldsymbol{A}}^m$ be an enumeration of $\mathcal{A}_k^d$.
	Let $\mathsf{A} \coloneqq (\widehat{\boldsymbol{A}}^{1}_G, \dots, \widehat{\boldsymbol{A}}^{m}_G)$ and $\mathsf{B} \coloneqq (\widehat{\boldsymbol{A}}^{1}_H, \dots, \widehat{\boldsymbol{A}}^{m}_H)$ be the matrix sequences obtained by applying the augmented homomorphism representations with respect to $G$ and $H$.
	By \cref{thm:soe}, it suffices to show that  $\soe(w_{\mathsf{A}}) = \soe(w_{\mathsf{B}})$ for every word $w \in \Gamma_{2m}$.
	
	Since $\mathcal{A}_k^d$ is closed under the reversal operation, there exist 
	$\widehat{\boldsymbol{A}}^{i_1}, \dots, \widehat{\boldsymbol{A}}^{i_r} \in \mathcal{A}_k^d$ such that $w_{\mathsf{A}} = \widehat{\boldsymbol{A}}^{i_1}_G \cdots \widehat{\boldsymbol{A}}^{i_r}_G$ and $w_{\mathsf{B}} = \widehat{\boldsymbol{A}}^{i_1}_H \cdots \widehat{\boldsymbol{A}}^{i_r}_H$. If the product evaluates to $\bot$ then 
	$\soe(w_{\mathsf{A}}) = 0 = \soe(w_{\mathsf{B}})$ by \cref{prop:ops}. 
	Otherwise, by \cref{lem:wlkd-closure,prop:ops}, the graph underlying the product is in $\mathcal{PFC}_k^d$. 
	Hence, $\soe(w_{\mathsf{A}}) = \hom(F,G) = \hom(F, H) = \soe(w_{\mathsf{B}})$, by assumption and \cref{prop:ops}.
	
	Now suppose that the second assertion of \cref{thm:main-intro} holds, i.e.\@ there exists a pseudo-stochastic matrix $X$ 
	satisfying the stipulated equations. Since $\mathcal{A}_k^d$ contains the identity graph $\widehat{\boldsymbol{I}}$,
	\[
	\abs{V(G)}^{k+d} = \soe \widehat{\boldsymbol{I}}_G = \boldsymbol{1}^T \widehat{\boldsymbol{I}}_G \boldsymbol{1}
	= \boldsymbol{1}^T X \widehat{\boldsymbol{I}}_G \boldsymbol{1}
	= \soe \widehat{\boldsymbol{I}}_H
	= \abs{V(H)}^{k+d}.
	\]
	In other words, $G$ and $H$ have the same number of vertices. Let $F$ be an arbitrary graph in $\mathcal{PFC}_k^d$. It has to be shown that $\hom(F, G) = \hom(F, H)$.
	By padding $F$ with isolated vertices, it may be supposed that $F = \soe \widehat{\boldsymbol{F}}$ for some $\widehat{\boldsymbol{F}} \in \mathcal{WL}_k^d$. Indeed, the disjoint union $F_1 \sqcup F_2$ of two graphs $F_1$, $F_2$ satisfies the identity 
	$\hom(F_1 \sqcup F_2, G) = \hom(F_1,G) \cdot \hom(F_2,G)$ and $G$ and $H$ have the same number of vertices.
	
	Applying \cref{prop:fingen} to the graph $\widehat{\boldsymbol{F}}$, we obtain a sequence $\widehat{\boldsymbol{A}}^{i_1}, \dots, \widehat{\boldsymbol{A}}^{i_r} \in \mathcal{A}_k^d$ such that $\hom(F,G) = \soe(\widehat{\boldsymbol{A}}^{i_1}_G \cdots \widehat{\boldsymbol{A}}^{i_r}_G) = \soe(\widehat{\boldsymbol{A}}^{i_1}_G \cdots \widehat{\boldsymbol{A}}^{i_r}_G) = \hom(F, H)$ by \cref{thm:soe}.
\end{proof}

\section{Conclusion} 

We mapped out the landscape of spectral graph invariants between $1$-\WL and $2$-\WL by introducing $(1,1)$-\WL indistinguishability which subsumes many such invariants. This resolves an open question of \cite{furer_power_2010} and strengthens a result of \cite{godsil_equiarboreal_1981}.
We furthermore devised a hierarchy of spectral invariants matching the hierarchy of combinatorial invariants yielded by the Weisfeiler--Leman algorithm.
The design of graph neural networks based on these higher-order spectral invariants forms an interesting direction of future work.  

Our results contribute to the study of homomorphism indistinguishability. For characterising homomorphism indistinguishability over graphs admitting certain $k$-pebble forest covers of depth $d$, we devised a representation of labelled graphs in terms of matrices encoding information about the structure of the labels such that it affords a correspondence between combinatorial and algebraic operations. We believe that this strategy may prove useful for studying other graph classes in the future.

\paragraph{Acknowledgements}
We thank Martin Grohe for helpful conversations.
The first author is funded by the DFG Research Grants Program RA~3242/1-1 via project number 411032549. The second author is supported by the German Research Council (DFG) within Research Training Group 2236 (UnRAVeL). 

\newpage
\begin{small}
	\bibliographystyle{plainurl}

\end{small}
\newpage

\appendix

\section{Full Proof of \cref{thm:camb} and \cref{cor:camb}}
\label{app:thm:camb}

To ease notation, we consider for $k \in \mathbb{N}$ homomorphisms between $k$-labelled graphs $\boldsymbol{F} = (F, \vec{u})$ and $\boldsymbol{G} = (G, \vec{v})$ which are homomorphisms $h  \colon F\to G$ such that $h(\vec{u}) = \vec{v}$ entry-wise. We write $\hom(\boldsymbol{F}, \boldsymbol{G})$ for the number of such homomorphisms.

\subsection{From Graphs to Formulas}

The following \cref{lem:forward} is an adaptation of \cite[Lemma~4]{dvorak_recognizing_2010}. Recall \cref{def:pfc-labelled}.

\begin{lemma} \label{lem:forward}
	Let $k \geq 1$ and $d \geq 0$.
	For every $k$-labelled graph $\boldsymbol{F}$ admitting a $k$-pebble forest cover of depth $d$ and every $m \geq 0$,
	there exists a formula $\varphi_{\boldsymbol{F}}^m \in\mathsf{C}_{k}^d$ with $k$ free variables such that $\hom(\boldsymbol{F},\boldsymbol{K}) = m$ iff $\boldsymbol{K} \models \varphi_{\boldsymbol{F}}^m$ for all $k$-labelled graphs $\boldsymbol{K}$. 
\end{lemma}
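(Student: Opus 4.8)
The plan is to proceed by induction on the depth $d$ of the $k$-pebble forest cover, simultaneously over all $k$-labelled graphs $\boldsymbol{F}$ admitting such a cover. The statement asks for a formula $\varphi_{\boldsymbol{F}}^m$ with $k$ free variables such that for every $k$-labelled graph $\boldsymbol{K}$, $\boldsymbol{K} \models \varphi_{\boldsymbol{F}}^m$ iff $\hom(\boldsymbol{F}, \boldsymbol{K}) = m$. The free variables will correspond to the $k$ labelled vertices of $\boldsymbol{F}$, and the formula will ``read off'' the pebble forest cover top-down: quantifying over a vertex of $\boldsymbol{K}$ for each vertex of $F$, reusing the variable freed up when the pebbling function permits, so that the quantifier rank stays bounded by $d$ and the number of variables by $k$.

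First I would set up the base case $d = 0$: a $k$-pebble forest cover of depth $0$ means all non-labelled vertices sit at the roots, so essentially $F$ has at most $k$ vertices (the labelled chain), and $\hom(\boldsymbol{F}, \boldsymbol{K})$ is either $0$ or $1$ according to whether the map sending the labels of $\boldsymbol{F}$ to the designated vertices of $\boldsymbol{K}$ is a homomorphism; this is expressible by a quantifier-free formula in $\mathsf{C}_k^0$ (a conjunction of edge/equality atoms and their negations), and one takes $\varphi_{\boldsymbol{F}}^0$ to be this formula for $m = 1$ and its negation for $m = 0$, $\bot$ for $m \geq 2$. For the inductive step, consider a $k$-labelled graph $\boldsymbol{F}$ with a depth-$(d)$ cover. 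Look at the labelled vertex $\vec{u}_k$ of maximal depth (or the relevant ``frontier''); its children in the forest, together with the subtrees hanging off them, split $F$ into pieces. Each such child $w$ carries a pebble index $p(w)$ which, by the pebbling condition, differs from $p(\vec{u}_{p(w)})$ along the relevant chain — this is precisely what lets us reassign a variable. Detach the subtree rooted at $w$; it becomes a $k$-labelled graph $\boldsymbol{F}_w$ whose labels are the chain of ancestors up to $w$ (with the variable slot $p(w)$ now occupied by $w$), and it admits a pebble forest cover of depth $d-1$. By induction we obtain formulas $\varphi_{\boldsymbol{F}_w}^{m_w} \in \mathsf{C}_k^{d-1}$. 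The formula $\varphi_{\boldsymbol{F}}^m$ then existentially quantifies (using the freed variable $x_{p(w)}$, with a counting quantifier $\exists^{=c}$ recording how many choices of $w$ yield a given homomorphism count) over the vertex playing the role of each child, and asserts via a Boolean combination of the $\varphi_{\boldsymbol{F}_w}^{m_w}$ that the homomorphism counts of the pieces multiply out to $m$. Here one uses the standard fact that $\hom(\boldsymbol{F}, \boldsymbol{K})$ factors as a product (over children $w$ and over choices of image of $w$ consistent with the labels) of the homomorphism counts of the detached pieces; the counting quantifiers of $\mathsf{C}_k$ are exactly what is needed to turn such a product/sum over images into a first-order statement of bounded quantifier rank.

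The main obstacle I expect is bookkeeping the variable reuse so that the quantifier rank is genuinely $d$ and not larger: one must verify that the pebble index $p(w)$ assigned to a detached child is free to be overwritten precisely because the pebbling function is injective on the labelled chain and respects the condition along edges, and that the recursion depth matches the cover depth. A secondary subtlety is handling multiplicities correctly — when several children share the same pebble index or several images give the same count — which is why the formula is indexed by the target count $m$ and built from a finite case distinction over all achievable tuples of counts of the pieces (finitely many, since each count is bounded by $|V(\boldsymbol{K})|^{|V(F)|}$, but the formula must be uniform in $\boldsymbol{K}$, so one rather quantifies existentially with $\exists^{=c}$ over images and sums). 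Once the combinatorial decomposition of $\hom(\boldsymbol{F},\boldsymbol{K})$ along the forest cover is pinned down, translating it into $\mathsf{C}_k^d$ is routine; I would present that decomposition carefully as the heart of the argument and treat the logical translation as essentially mechanical, citing the analogous construction in \cite{dvorak_recognizing_2010}.
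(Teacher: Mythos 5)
Your plan follows essentially the same route as the paper's proof: induction on the depth of the cover, a quantifier-free atomic-type formula in the base case, and in the inductive step detaching the subtrees above the minimal unlabelled vertices, reusing the variable freed by the pebbling function via counting quantifiers $\exists^{=c}$, and assembling the target count $m$ by a finite case distinction over sums (over images of the detached vertex) and products (over the branches). The only point the paper spells out that you leave implicit is the two-case analysis of \emph{which} label can be reused (a labelled vertex $v$ with $p(v)=p(w)$, whose atomic type must be recorded in the formula before $v$ is deleted, versus a doubly-labelled vertex when $p(w)$ does not occur on the labelled chain), but this is precisely the bookkeeping you identify as the main obstacle, so the approaches coincide.
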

\begin{proof}
	The lemma is proven by induction over $d$.
	If $d= 0$ then every vertex of $\boldsymbol{F}  = (F, \vec{u})$ is labelled. 
	Hence $\hom(\boldsymbol{F},\boldsymbol{K})$ must be zero or one. Hence, if $m>1$, set $\varphi^m_{\boldsymbol{F}}$ to be false. If $m=1$, set $\varphi^m_{\boldsymbol{F}}$ to the quantifier-free formula encoding the atomic type of $\boldsymbol{u}$, i.e.\@
	\[
	\phi^1_{\boldsymbol{F}}(x_1, \dots, x_k) \coloneqq \bigwedge_{\{ \vec{u}_i, \vec{u}_j \} \in E(F)} E(x_i, x_j) \land \bigwedge_{\vec{u}_i = \vec{u}_j} (x_i = x_j).
	\]
	If $m=0$, set $\varphi^m_{\boldsymbol{F}}$ to be the negation of $\phi^1_{\boldsymbol{F}}$.
	
	For the inductive case, denote the $k$-pebbling function of $\boldsymbol{F}$ by $p$ and write $\leq$ for the forest order induced by the cover. Let $W$ be the set of minimal unlabelled vertices in $\boldsymbol{F}$, in symbols
	\[
	W \coloneqq \left\{w \in V(F) \mid w > \vec{u}_1, \dots, \vec{u}_k \land \forall x \in V(F).\ w \geq x > \vec{u}_1, \dots, \vec{u}_k \implies w = x\right\}.
	\]
	For every $w \in W$ separately and  $m \geq 0$, a formula $\phi_w^{m}$ constructed. 
	The final formula $\phi_{\boldsymbol{F}}^m$ will be assembled from these.
	Distinguish cases:
	\begin{itemize}
		\item There exists a labelled vertex $v$ such that $p(v) = p(w)$.
		
		Let $\ell$ be a label on $v$.
		Define a $k$-labelled graph $\boldsymbol{F}'$ from $\boldsymbol{F}$ as follows:
		\begin{enumerate}
			\item Delete all vertices which are incomparable with $w$.
			\item If $v$ has only one label, delete $v$. If $v$ has more than one label, remove the label $\ell$ from $v$.
			\item Add the label $\ell$ to the vertex $w$.
		\end{enumerate}
		Then $\boldsymbol{F}'$ is a $k$-labelled graph with a $k$-pebble forest cover of depth $d-1$.
		By the inductive hypothesis, for every integer $m'\geq 0$, there exists a $\mathsf{C}_{k}^{d-1}$-formula $\varphi_{\boldsymbol{F}'}^{m'}$ with $k$~free variables such $\hom(\boldsymbol{F}',\boldsymbol{K}) = m'$ if and only if $\boldsymbol{K} \models \varphi_{\boldsymbol{F}'}^{m'}$ for every $k$-labelled graph $\boldsymbol{K}$.
		Define the $\mathsf{C}_{k}^d$-formula $\varphi_{w}^{c,m}$ to be the conjunction of the following three formulas:
		\begin{itemize}
			\item the conjunction of $E(x_\ell,x_q)$ over $q \in [k]$ such that there is an edge between $ \boldsymbol{u}_\ell= v$ and $\boldsymbol{u}_q$ in $\boldsymbol{F}$, 
			\item the conjunction of $(x_\ell =x_q)$ over $q \in [k]$ such that $\boldsymbol{u}_\ell=\boldsymbol{u}_q=v$,
			and,  
			\item $\exists^{=c} x_\ell.\ \varphi_{\boldsymbol{F}'}^{m}$. 
		\end{itemize}
		
		\item The value $p(w)$ does not occur among the $p(v)$ for $v$ labelled.
		
		Since $p$ is injective on labelled vertices, there must exist a labelled vertex $v$ with at least two labels. Let $\ell \neq \ell'$ be two of these labels on $v$. 
		Consider the $k$-labelled graph $\boldsymbol{F}'$ obtained from $\boldsymbol{F}$ as follows: 
		\begin{enumerate}
			\item Delete all vertices which are incomparable with $w$.
			\item Remove the label $\ell$ from $v$.
			\item Add the label $\ell$ to the vertex $w$.
		\end{enumerate}
		By inductive hypothesis, for every integer $m'\geq 0$, there exists a $\mathsf{C}_k^{d-1}$-formula $\varphi_{\boldsymbol{F}'}^{m'}$ such that for every $k$-labelled graph $\boldsymbol{K}$, $\hom(\boldsymbol{F}',\boldsymbol{K}) = m'$ if and only if $\boldsymbol{K} \models \varphi_{\boldsymbol{F}'}^{m'}$. 
		Define the $\mathsf{C}_k^d$-formula $\varphi_{w}^{c,m}$ to be the conjunction of $\exists^{=c}x_\ell.\ \varphi_{\boldsymbol{F}'}^{m}$ and $x_{\ell} = x_{\ell'}$.
	\end{itemize}
	In both cases, the formula $\phi_w^m$ is constructed as follows.
	If $m \geq 1$, the formula $\phi_w^m$ is the disjunction over conjunctions of the form $\varphi_{w}^{c_1,m_1} \land \dots \land \varphi_{w}^{c_r,m_r} \land \exists^{=c}x_\ell.\ \neg\varphi_{\boldsymbol{F}'}^0$ such that $m = c_1 m_1 + \cdots +c_r m_r$ and $c= c_1+\cdots+c_r$ for $m_1, \dots, m_r, c_1, \dots, c_r \geq 1$ and $r \in \mathbb{N}$. If $m=0$ then set $\varphi_{w}^m \coloneqq \forall x_\ell.\ \varphi_{\boldsymbol{F}'}^0$.

	Finally, the formulas constructed for each $w \in W$ are combined.
	The formula $\phi^m_{\boldsymbol{F}}$ is the disjunction over the conjunctions $\phi^{m_1}_{w_1} \land \dots \land \phi^{m_r}_{w_r}$ where $W = \{w_1, \dots, w_r\}$ and $m = m_1 \cdots m_r$ for positive integers $m_i$ if $m > 0$ and $\bigvee_{w \in W} \phi^0_w$ otherwise.
	
	It remains to observe that the quantifier depth of $\phi_{\boldsymbol{F}}^m$ is at most the quantifier depth of $\phi_{\boldsymbol{F}'}^m$ plus one.
\end{proof}

\begin{corollary} \label{cor:forward}
	Let $k \geq 1$ and $d \geq 0$.
	For every $k$-labelled graph $\boldsymbol{F}$ admitting a $(k+1)$-pebble forest cover of depth $d$ and every $m \geq 0$,
	there exists a formula $\varphi_{\boldsymbol{F}}^m \in\mathsf{C}_{k+1}^d$ with $k$ free variables such that $\hom(\boldsymbol{F},\boldsymbol{K}) = m$ iff $\boldsymbol{K} \models \varphi_{\boldsymbol{F}}^m$ for all $k$-labelled graphs $\boldsymbol{K}$. 
\end{corollary}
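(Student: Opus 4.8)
The plan is to derive \cref{cor:forward} from \cref{lem:forward} by promoting $\boldsymbol{F}$ to a $(k+1)$-labelled graph and then eliminating the resulting spurious free variable. Given a $k$-labelled graph $\boldsymbol{F}=(F,\vec{u})$ with a $(k+1)$-pebble forest cover $(\mathcal{F},p)$ of depth $d$, form the $(k+1)$-labelled graph $\boldsymbol{F}^+\coloneqq(F,(\vec{u}_1,\dots,\vec{u}_k,\vec{u}_1))$ obtained by repeating the first label in a new $(k+1)$-st coordinate. The pair $(\mathcal{F},p)$ is then also a $(k+1)$-pebble forest cover of $\boldsymbol{F}^+$ of depth $d$ in the sense of \cref{def:pfc-labelled}: the set of labelled vertices of $\boldsymbol{F}^+$ is $\{\vec{u}_1,\dots,\vec{u}_k\}$, so it still forms a chain with everything else lying below it and $p$ still injective on it, and the depth is unchanged because $\vec{u}_1$ is already excluded. (Alternatively one may attach a fresh isolated vertex carrying the new label, inserting it in the cover just below the labelled chain so as to keep the depth at $d$.)

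Applying \cref{lem:forward} with $k$ replaced by $k+1$ to $\boldsymbol{F}^+$ yields, for each $m\geq 0$, a formula $\psi^m\in\mathsf{C}_{k+1}^d$ with free variables $x_1,\dots,x_{k+1}$ such that $\hom(\boldsymbol{F}^+,\boldsymbol{K}^+)=m$ iff $\boldsymbol{K}^+\models\psi^m$ for every $(k+1)$-labelled graph $\boldsymbol{K}^+$. Since a homomorphism $\boldsymbol{F}^+\to(K,(\vec{v},\vec{v}_1))$ is exactly a homomorphism $h\colon F\to K$ with $h(\vec{u}_i)=\vec{v}_i$ for $i\in[k]$ (the extra constraint $h(\vec{u}_1)=\vec{v}_1$ being implied), we have $\hom(\boldsymbol{F}^+,(K,(\vec{v},\vec{v}_1)))=\hom(\boldsymbol{F},(K,\vec{v}))$ for every $k$-labelled graph $(K,\vec{v})$. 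Hence it suffices to put
\[
\varphi_{\boldsymbol{F}}^m(x_1,\dots,x_k)\coloneqq\psi^m(x_1,\dots,x_k,x_1),
\]
that is, to substitute $x_1$ for the free variable $x_{k+1}$, and observe that $\boldsymbol{K}\models\varphi_{\boldsymbol{F}}^m$ iff $(K,(\vec{v},\vec{v}_1))\models\psi^m$ iff $\hom(\boldsymbol{F},\boldsymbol{K})=m$.

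The one point needing care — and the main obstacle — is the quantifier depth of this last step: the substitution of $x_1$ for $x_{k+1}$ must not cost a quantifier, since otherwise the formula would land in $\mathsf{C}_{k+1}^{d+1}$, and a naive substitution would change the meaning if some free occurrence of $x_{k+1}$ in $\psi^m$ were within the scope of a quantifier binding $x_1$. This can be handled in two ways: either check that in the inductive construction of \cref{lem:forward} applied to $\boldsymbol{F}^+$ one may always elect to move and bind the duplicated label $k+1$ rather than the label $1$, so that $x_1$ never occurs bound in $\psi^m$ and the substitution is automatically capture-free; or, not using \cref{lem:forward} as a black box, rerun its induction on $d$ directly in the present setting, where the spare $(k+1)$-st pebble furnishes exactly the extra variable needed for the quantification step while only $k$ labels, hence $k$ free variables, are ever in play. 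In either case the recursion manifestly stays within $\mathsf{C}_{k+1}^d$, and the corollary follows.
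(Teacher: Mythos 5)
Your proposal is correct and follows essentially the same route as the paper: duplicate one label to obtain a $(k+1)$-labelled graph whose pebble forest cover is unchanged, invoke \cref{lem:forward} with $k+1$ in place of $k$, and then identify the duplicated variable with its original. The only substantive difference is the last step: the paper duplicates the \emph{last} label and simply conjoins the equality $x_k = x_{k+1}$ to $\varphi_{\boldsymbol{F}'}^m$, which identifies the two variables without any syntactic substitution and thereby sidesteps the variable-capture issue that occupies your final paragraph. Your substitution route does work, but your remedy (a) is stated a bit too strongly: one cannot always ensure that $x_1$ is never bound in $\psi^m$, since once the duplicated label $k+1$ has been moved off $\vec{u}_1$ along a branch, a later occurrence of case 1 with pebble value $p(\vec{u}_1)$ forces label $1$ to be moved and $x_1$ to be quantified. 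What does hold under the policy of always preferring to move the duplicate is the weaker statement that suffices: whenever $x_1$ is bound, an $x_{k+1}$-quantifier already sits above it on the same branch, so no \emph{free} occurrence of $x_{k+1}$ lies in the scope of an $x_1$-quantifier and the substitution is capture-free; alternatively your remedy (b), rerunning the induction, or the paper's equality-conjunct trick each close this point cleanly.
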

\begin{proof}
	Turn $\boldsymbol{F} = (F, \vec{u})$ into a $(k+1)$-labelled graph $\boldsymbol{F}' = (F, \vec{u}\vec{u}_k)$ by duplicating the last label. This does not affect the associated pebble forest cover.
	For every $m \geq 0$, \cref{lem:forward} implies the existence of $\mathsf{C}_{k+1}^d$-formula $\varphi_{\boldsymbol{F}'}^m$ with $k+1$~free variables such that $\hom(\boldsymbol{F}',\boldsymbol{K}') = m$ if and only if $\boldsymbol{K}' \models \varphi_{\boldsymbol{F}'}^m$ for all $(k+1)$-labelled graphs $\boldsymbol{K}'$.
	
	Set $\varphi_{\boldsymbol{F}}^m(x_1, \dots, x_k) \coloneqq \varphi_{\boldsymbol{F}'}^m(x_1, \dots, x_{k+1}) \land (x_k = x_{k+1})$. This is a $\mathsf{C}_{k+1}^d$-formula with $k$~free variables. To verify the assertion, let $\boldsymbol{K}$ be a $k$-labelled graph and write $\boldsymbol{K}'$ for the $(k+1)$-labelled graph obtained from it by duplicating the last label as above. Then for every $m \geq 0$,
	\begin{align*}
		\hom(\boldsymbol{F}, \boldsymbol{K}) = m
		\iff \hom(\boldsymbol{F}', \boldsymbol{K}') = m
		\iff \boldsymbol{K}' \models \phi_{\boldsymbol{F}'}^m
		\iff \boldsymbol{K} \models \phi_{\boldsymbol{F}}^m.
	\end{align*}
	This concludes the proof.
\end{proof}

\subsection{From Formulas to Graphs}

\begin{definition}
	Let $k \geq \ell \geq 1$ and $d \geq 0$.
	An $\ell$-labelled graph $\boldsymbol{F} = (F, \vec{u})$ admits a \emph{regular $k$-pebble forest cover of depth $d$} if it has $k$-pebble forest cover of depth $d$ as defined in \cref{def:pfc-labelled} such that $\vec{u}_1 \leq \dots \leq \vec{u}_\ell$ and $p(\vec{u}_i) = i$ for all $i \in [k]$.
\end{definition}

The relevant property of such regular pebble forest covers is that if $\boldsymbol{F}$ and $\boldsymbol{F}'$ both admit $k$-pebble forest cover of depth $d$ then so does their gluing product $\boldsymbol{F} \odot \boldsymbol{F}'$. Indeed, the labelled vertices lie on a path of the cover in the same order and the pebbling functions of the covers of $\boldsymbol{F}$ and $\boldsymbol{F}'$ agree on them. Therefore, the two pebble forest covers can be combined.

Subject to the next lemmas are relational structures $F$ over the signature $\{E, I\}$ comprising the binary relation symbols $E$ and $I$. A \emph{relational structure} over $\{E, I\}$ is a tuple $(U, R^E, R^I)$ where $U$ is a finite set, the \emph{universe}, and $R^E, R^I \subseteq U \times U$. For the purpose of this work, $R^E$ and $R^I$ are always symmetric relations. A \emph{homomorphism} between $\{E, I\}$-structures $(V, S^E, S^I)$ is a map $h \colon U \to S$ such that for all $(x,y) \in R^E$ it holds that $(h(x), h(y)) \in S^E$ and similarly for $R^I$ and $S^I$. See \cite{grohe_descriptive_2017} for background on relational structures.
A \emph{$k$-labelled $\{E, I\}$-structure} is a tuple  $(U, R^E, R^I, \vec{u})$ where $\vec{u} \in U^k$. The notion of a homomorphism extends to $k$-labelled structures. The homomorphism count function $\hom(-,-)$ can be defined analogously as for graphs.

The relation symbol $E$ will be understood as the binary edge relation of the graphs which are of our ultimate interest. $I$ is an additional binary relation. 
A graph $G$ can be turned into an $\{E, I\}$-structure by interpreting $I$ as equality, i.e.\@ $I^G \coloneqq \{(v,v) \mid v \in V(G)\}$. This extends to labelled graphs.
A (labelled) $\{E, I\}$-structure admits a (regular) $k$-pebble forest cover of depth $d$ if the underlying graph admits such a cover and this cover satisfies all axioms for the $I$-relation in place of the $E$-relation, mutatis mutandis. In the language of \cite{dawar_lovasz_2021}, this cover is a cover for the relational structure.

We recall the following notions from \cite{dvorak_recognizing_2010}:
A \emph{$k$-labelled quantum relational structure} is a $\mathbb{C}$-linear combination of $k$-labelled relational structures called its \emph{constituents}. 
For $t = \sum \alpha_i \boldsymbol{F}_i$ and a $k$-labelled $\boldsymbol{G}$, write $\hom(t, \boldsymbol{G})$ for the number $\sum \alpha_i \hom(\boldsymbol{F}_i, \boldsymbol{G})$.

The following \cref{claim:dvo2,lem:backward} are straightforward variations of \cite[Lemma 5]{dvorak_recognizing_2010} and \cite[Lemma 6]{dvorak_recognizing_2010} respectively.

\begin{lemma} \label{claim:dvo2}
	Let $n \geq r \geq 1$, $k \geq \ell \geq 1$, and $d \geq 0$.
	Let $t$ be an $\ell$-labelled quantum relational structure over the signature $\{E, I\}$ whose constituents admit regular $k$-pebble forest covers of depth $d$.
	Then there exists an $\ell$-labelled quantum relational structure $t^{r, n}$ over the signature $\{E, I\}$ whose constituents admit regular $k$-pebble forest covers of depth $d$ such that for all $\ell$-labelled relational structures $\boldsymbol{G}$ over $\{E, I\}$,
	\begin{itemize}
		\item if $\hom(t, \boldsymbol{G}) \in \{0, \dots, r-1 \}$ then $\hom(t^{r, n}, G) = 0$, and
		\item if $\hom(t, \boldsymbol{G}) \in \{r, \dots,  n\}$ then $\hom(t^{r, n}, G)  = 1$.
	\end{itemize}
\end{lemma}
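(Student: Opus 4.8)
The plan is to adapt Dvořák's interpolation argument, using the correspondence between the gluing product $\odot$ and multiplication of homomorphism counts. First I would record the two algebraic facts I need. For $\ell$-labelled $\{E,I\}$-structures, the gluing product $\boldsymbol{F} \odot \boldsymbol{F}'$ identifies the $i$-th labelled vertex of $\boldsymbol{F}$ with that of $\boldsymbol{F}'$ for each $i \in [\ell]$; extended bilinearly to $\ell$-labelled quantum relational structures it is commutative and associative, and for every $\ell$-labelled structure $\boldsymbol{G}$ one has $\hom(t_1 \odot t_2, \boldsymbol{G}) = \hom(t_1, \boldsymbol{G})\,\hom(t_2, \boldsymbol{G})$, because a homomorphism from the glued structure into $\boldsymbol{G}$ restricts independently to the two parts (the general quantum case following by bilinearity). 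Second, by the property of \emph{regular} $k$-pebble forest covers recorded above --- the labelled vertices lie on a common chain in the same order, and both covers use $p(\vec{u}_i) = i$ --- the gluing product of two structures each admitting a regular $k$-pebble forest cover of depth (at most) $d$ again admits one: the two unlabelled parts stay disjoint and hang below the shared chain, so the depth does not grow. Iterating, every constituent of any $\odot$-product of constituents of $t$ still admits a regular $k$-pebble forest cover of depth $d$.

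Next I would build the interpolating polynomial. By Lagrange interpolation through $0, 1, \dots, n$ there is a (unique) polynomial $P \in \mathbb{C}[x]$ of degree at most $n$ with $P(j) = 0$ for $0 \le j \le r-1$ and $P(j) = 1$ for $r \le j \le n$; write $P(x) = \sum_{j=0}^{n} c_j x^j$. It is essential here that quantum relational structures carry arbitrary complex coefficients, so the (generally non-integral) coefficients $c_j$ cause no difficulty --- no massaging of $P$ into an $\mathbb{N}$-combination is needed. Let $\boldsymbol{U}$ be the $\ell$-labelled $\{E,I\}$-structure with universe $[\ell]$, empty relations, and labels $(1, \dots, \ell)$; the chain $1 < \dots < \ell$ with $p(i) = i$ is a regular $k$-pebble forest cover of $\boldsymbol{U}$ with no unlabelled vertices (hence of depth $0 \le d$), and $\hom(\boldsymbol{U}, \boldsymbol{G}) = 1$ for every $\ell$-labelled $\boldsymbol{G}$. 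Setting $t^{\odot 0} := \boldsymbol{U}$ and $t^{\odot(i+1)} := t^{\odot i} \odot t$, I define
\[
t^{r,n} := \sum_{j=0}^{n} c_j\, t^{\odot j}.
\]

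Finally I would verify the conclusion. Multiplicativity of $\hom(-,\boldsymbol{G})$ gives $\hom(t^{\odot j}, \boldsymbol{G}) = \hom(t, \boldsymbol{G})^{j}$ (with the $j=0$ case handled by $\hom(\boldsymbol{U},\boldsymbol{G}) = 1$), so by linearity $\hom(t^{r,n}, \boldsymbol{G}) = P\big(\hom(t,\boldsymbol{G})\big)$ for every $\ell$-labelled $\boldsymbol{G}$. Therefore $\hom(t,\boldsymbol{G}) \in \{0,\dots,r-1\}$ forces $\hom(t^{r,n},\boldsymbol{G}) = 0$ and $\hom(t,\boldsymbol{G}) \in \{r,\dots,n\}$ forces $\hom(t^{r,n},\boldsymbol{G}) = 1$. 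Moreover each $t^{\odot j}$ is a $\mathbb{C}$-linear combination of $\odot$-products of constituents of $t$ and of $\boldsymbol{U}$, so its constituents admit regular $k$-pebble forest covers of depth $d$ by the closure property; hence so do the constituents of $t^{r,n}$.

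I do not expect a real obstacle here; the only delicate points are (i) that gluing products preserve regular $k$-pebble forest covers of bounded depth --- which is precisely the stated property of \emph{regular} covers and hinges on the normalisations $\vec{u}_1 \le \dots \le \vec{u}_\ell$ and $p(\vec{u}_i) = i$ --- and (ii) routine bookkeeping that the whole argument transfers verbatim from graphs to $\{E,I\}$-structures, with the extra relation $I$ handled exactly like $E$.
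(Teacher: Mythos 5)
Your proposal is correct and follows essentially the same route as the paper: the paper also takes a polynomial $p$ with $p(i)=0$ on $\{0,\dots,r-1\}$ and $p(i)=1$ on $\{r,\dots,n\}$, sets $t^{r,n}\coloneqq\sum \alpha_i t^{\odot i}$, and invokes multiplicativity $\hom(t^{\odot i},\boldsymbol{G})=\hom(t,\boldsymbol{G})^i$ together with the closure of regular $k$-pebble forest covers of depth $d$ under gluing (which the paper records just before the lemma). Your extra bookkeeping (Lagrange interpolation, the unit structure $\boldsymbol{U}$ for $j=0$ — whose coefficient is in fact $0$ since $P(0)=0$) only spells out details the paper leaves implicit.
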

\begin{proof}
	Let $p(X) = \sum \alpha_i X^i$ denote a polynomial such that $p(i) = 0$ for all $i \in \{0, \dots, r-1 \} $ and $p(i) = 1$ for all $i \in \{r, \dots,  n\}$. Let $t^{r, n} \coloneqq \sum \alpha_i t^{\circ i}$. Since $\hom(t^{\odot i}, \boldsymbol{G}) = \hom(t, \boldsymbol{G})^i$ for all $i \geq 0$, this quantum relational structure has the desired properties.
\end{proof}

\begin{lemma} \label{lem:backward}
	Let $n \geq 1$, $k \geq 1$, and $d \geq 0$.
	For every formula $\phi \in \mathsf{C}_k^d$ with $k$~free variables, 
	there exists a $k$-labelled quantum relational structure over the signature $\{E, I\}$ whose constituents admit regular $k$-pebble forest covers of depth $d$
	such that for all $n$-vertex $k$-labelled relational structures $\boldsymbol{G}$ over $\{E, I\}$ where $I$ is interpreted as equality,
	\begin{itemize}
		\item if $\boldsymbol{G} \models \phi$ then $\hom(t, \boldsymbol{G}) = 1$, and
		\item if $\boldsymbol{G} \not\models \phi$ then $\hom(t, \boldsymbol{G}) = 0$.
	\end{itemize}
\end{lemma}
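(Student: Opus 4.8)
The plan is to adapt the argument of Dvořák \cite[Lemma~6]{dvorak_recognizing_2010}, proceeding by induction on the quantifier depth $d$ (equivalently, on the structure of $\phi$) and carrying along the extra relation symbol $I$, which is interpreted as equality in the target so that a homomorphism from a constituent is forced to identify the endpoints of every $I$-edge. A preliminary normalisation lets us assume that $\phi$ mentions all $k$ variables freely -- padding with trivially true conjuncts -- so that every quantum relational structure occurring in the induction is $k$-labelled.

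For the base case $d = 0$, write $\phi$ in disjunctive normal form over the atoms $E(x_i, x_j)$, $I(x_i, x_j)$ and their negations. The atom $E(x_i, x_j)$ is represented by the $k$-labelled $\{E, I\}$-structure on universe $[k]$ with the single $E$-edge $\{i,j\}$ and label tuple $(1, \dots, k)$; the atom $I(x_i, x_j)$ by the analogous structure with a single $I$-edge; and the empty conjunction by the edgeless structure $\mathbf{1}$ on $[k]$. Each of these carries the regular $k$-pebble forest cover given by the chain $1 < 2 < \dots < k$ and the identity pebbling function, of depth $0$. Negation $\neg\psi$ is then represented by $\mathbf{1} - s$ (using $\hom(\mathbf{1}, \boldsymbol{G}) = 1$ for every $k$-labelled $\boldsymbol{G}$), conjunction $\psi \wedge \psi'$ by the gluing product $s \odot s'$ (using $\hom(s \odot s', \boldsymbol{G}) = \hom(s, \boldsymbol{G})\hom(s', \boldsymbol{G})$ and the fact, recalled before \cref{claim:dvo2}, that gluing preserves the existence of regular $k$-pebble forest covers of depth $d$), and disjunction by $s + s' - s \odot s'$. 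As all the homomorphism counts involved lie in $\{0, 1\}$, they equal the truth values of the corresponding formulas on $n$-vertex targets.

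For the inductive step it suffices to treat $\phi = \exists^{\geq c} x_i.\, \psi$ with $\psi \in \mathsf{C}_k^{d-1}$, since all counting quantifiers are obtained from this one together with Boolean connectives. By the inductive hypothesis there is a quantum relational structure $s$ representing $\psi$ whose constituents admit regular $k$-pebble forest covers of depth $d-1$. I would unlabel the vertex carrying label $i$ in every constituent; arranged so that this vertex is the maximal element of the labelled chain, unlabelling it makes it a new topmost unlabelled vertex, so the resulting quantum structure $s^{\flat}$ has constituents admitting regular $k$-pebble forest covers of depth $d$, and it satisfies $\hom(s^{\flat}, \boldsymbol{G}^{\flat}) = \sum_{w} \hom(s, \boldsymbol{G}[i/w])$, i.e.\ it counts the witnesses $w$ making $\psi$ true at $\boldsymbol{G}$ (where $\boldsymbol{G}^{\flat}$ is $\boldsymbol{G}$ with the $i$-th label removed). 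This count lies in $\{0, \dots, n\}$ on $n$-vertex targets, so by \cref{claim:dvo2} with threshold $r = c$ we obtain a quantum structure $(s^{\flat})^{c, n}$, again with regular $k$-pebble forest covers of depth $d$, whose homomorphism count into $\boldsymbol{G}^{\flat}$ is $1$ exactly when at least $c$ witnesses exist (the cases $c = 0$ and $c > n$ being immediate). Re-adjoining to each constituent a fresh isolated vertex bearing label $i$ in the appropriate place of the labelled chain then yields the desired $k$-labelled quantum relational structure $t$ with $\hom(t, \boldsymbol{G})$ equal to the truth value of $\phi$ on $\boldsymbol{G}$.

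The step I expect to require the most care is the quantifier case: the construction naturally wants to unlabel the maximal vertex of the labelled chain, but $\mathsf{C}_k$ may bind any variable, so one must first rewrite $\phi$ so that each counting quantifier binds a maximal free variable of its scope -- possible because an unused pebble is always available to play the ``moving'' role -- and then verify that the unlabel/threshold/relabel procedure restores a \emph{regular} $k$-pebble forest cover of depth exactly $d$ rather than a larger one. Everything else is a routine transcription of \cite[Lemma~6]{dvorak_recognizing_2010}; the one genuinely new point is the relation $I$, whose interpretation as equality is precisely what makes the polynomial-interpolation device of \cref{claim:dvo2} and the identity $\hom(s^{\flat}, \boldsymbol{G}^{\flat}) = \sum_{w} \hom(s, \boldsymbol{G}[i/w])$ valid.
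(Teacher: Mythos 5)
Your overall architecture (structural induction, atoms on the fully labelled structure $\boldsymbol{1}$, Boolean connectives via gluing and complementation, counting quantifiers via unlabelling one chain vertex and then thresholding with \cref{claim:dvo2}) is the same as the paper's. The gap is in the quantifier case, and it is exactly the point you flag as "requiring the most care": your plan is to rewrite $\phi$ so that every counting quantifier binds a \emph{maximal} free variable of its scope, so that the vertex losing its label is already the top of the labelled chain. This normalisation is not available here. The lemma is for the tight fragment $\mathsf{C}_k^d$ with $k$ free variables; in a subformula $\exists^{\geq c} x_i.\,\psi$ all $k$ variables may occur free in $\psi$, so there is no spare variable into which the bound occurrence of $x_i$ could be renamed without capture (already for $k=2$ and $\exists^{\geq 2}x_1.\,E(x_1,x_2)$ there is no equivalent formula binding $x_2$). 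The "unused pebble" you invoke exists only in the $\mathsf{C}_{k+1}$ setting of \cref{cor:backward}, not inside \cref{lem:backward} itself. Nor can you instead reorder the labelled chain so that label $i$ sits on top: regularity fixes the order $\vec{u}_1\leq\dots\leq\vec{u}_k$ with $p(\vec{u}_j)=j$ globally, and this uniform order is what makes the gluing steps (conjunction, and the powers $t^{\odot j}$ inside \cref{claim:dvo2}) legitimate; different subformulas quantify different variables, so no single alternative order serves them all.

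Consequently the proof must handle unlabelling an \emph{arbitrary} position $i$ of the chain, and that is the part your proposal leaves unproven: after removing label $i$ from $w$, the vertex $w$ sits \emph{below} the labelled vertices $\vec{u}_{i+1},\dots,\vec{u}_k$, violating the requirement of \cref{def:pfc-labelled} that unlabelled vertices succeed all labelled ones, so $w$ has to be moved up past them (and a fresh isolated vertex carrying label $i$ with pebble value $i$ inserted at its old position to keep the structure $k$-labelled and regular). One must then re-verify the pebbling condition for every $E$- or $I$-edge whose interval in the new order gains $w$ or the dummy vertex; this is the three-case analysis in the paper's proof (the cases $w\notin\{v,v'\}$, $w=v$, $w=v'$), and it is the technical core of the quantifier step. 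Your secondary deviation — unlabel, then threshold on $(k-1)$-labelled structures, then re-adjoin the label — is workable in principle (one still has to check that re-inserting the label-$i$ vertex with pebble $i$ into the glued powers keeps the covers valid), but without either the (impossible) normalisation or the paper's explicit verification, the claim that the constituents of $t$ admit regular $k$-pebble forest covers of depth $d$ is not established.
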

\begin{proof}
	By the induction on the structure of $\phi$. Let $\boldsymbol{1}$ denote the $k$-labelled relational structure over $\{E, I\}$ with universe $\{1, \dots, k\}$, without edges, and $k$ distinctly labelled vertices. It admits a regular $k$-pebble forest cover of depth zero. Distinguish cases:
	\begin{itemize}
		\item If $\phi = \mathsf{true}$ then $\boldsymbol{1}$ models $\phi$. If $\phi = \mathsf{false}$ then $0$, i.e.\@ the empty linear combination models $\phi$.
		\item If $\phi = (x_i = x_j)$ then the structure obtained from $\boldsymbol{1}$ by adding $I(i, j)$ models $\phi$,
		\item If $\phi = E(x_i, x_j)$ then the structure obtained from $\boldsymbol{1}$ by adding $E(i, j)$ models $\phi$,
		\item If $\phi = \phi_1 \land \phi_2$, let $t_1$ denote the quantum graph modelling $\phi_1$ and $t_2$ the quantum graph modelling~$\phi_2$. Their Schur product $t_1 \circ t_2$ models $\phi$. Since the pebbling function coincides with the label function on labelled vertices, this is a valid operation.
		\item If $\phi = \phi_1 \lor \phi_2$ and $t_1$ and $t_2$ are as above, then $\boldsymbol{1} - (\boldsymbol{1}-t_1) \circ (\boldsymbol{1}-t_2)$ models $\phi$.
		\item If $\phi = \neg \phi_1$ and $t_1$ is as above then $\boldsymbol{1} - t_1$ models $\phi$. 
		\item If $\phi = \exists^{\geq m}x_{i}.\ \psi$ then let $t$ denote the quantum graph modelling $\psi$.
		Let $t'$ denote the $k$-labelled quantum graph obtained from $t$ by applying the following operations to each of its constituents $\boldsymbol{F}$. Let $(\mathcal{F}, p)$ denote the regular $k$-pebble forest cover of depth $d$ of $\boldsymbol{F}$.
		\begin{enumerate}
			\item Drop the $i$-th label from the vertex $w$ carrying it. 
			Since $(\mathcal{F}, p)$ is regular, $w$ is unlabelled now. 
			\item Introduce a fresh vertex $w'$ carrying label $i$.
			\item Update $\mathcal{F}$ to $\mathcal{F}'$ by inserting $w'$ at the original position of $w$ and moving $w$ such that $w$ succeeds all labelled vertices and precedes all other unlabelled vertices.
			\item Update $p$ to $p'$ by $p'(x) \coloneqq p(x)$ for all $x \neq w'$ and $p'(w') \coloneqq p(w)$.
		\end{enumerate}
		It has to be verified that the resulting pebble forest covered labelled graph satisfies all axioms. Critical is only the validity of the pebbling function.
		Let $v \leq_{\mathcal{F}'} v'$ be such that $E(v,v')$ or $I(v,v')$. It is claimed that $p'(v) \neq p'(x)$ for all $x \in (v,v']_{\mathcal{F}'}$. 
		Wlog, $v \neq v'$.
		Distinguish cases:
		\begin{itemize}
			\item If $w \neq v$ and $w \neq v'$ then $(v, v']_{\mathcal{F}'} \setminus (v, v']_{\mathcal{F}} \subseteq \{w, w'\}$. If $w$ or $w'$ is contained in this difference then $v$ is labelled because it precedes the labelled vertex $w'$ or the least unlabelled vertex $w$. Hence, $p'(v) = p(v) \neq p(w) = p'(w) = p'(w')$. This implies that $p'(v) \neq p'(x)$ for all $x \in (v,v']_{\mathcal{F}'}$ because the analogous condition held for $\mathcal{F}$.
			\item If $w = v$ then $(v, v']_{\mathcal{F}'} \subseteq (v, v']_{\mathcal{F}}$ since $v'$ must, as an unlabelled vertex, succeed all labelled vertices. The claim follows readily.
			\item If $w = v'$ then $v$ is labelled and the claim follows since $p'$ is injective on labelled vertices and $v \neq w'$ because $w'$ neither appears in the relation $E$ nor in $I$.
		\end{itemize}
		Hence, the resulting object is a regular $k$-pebble forest cover of depth $d+1$. The $k$-labelled quantum relational structure $t'$ satisfies all desired properties.
		For $\boldsymbol{G} = (G, \vec{v})$, $\hom(t', \boldsymbol{G})$ counts the number of $v \in V(G)$ such that $G \models \psi(\vec{v}_1, \dots, \vec{v}_{i-1}, v, \vec{v}_{i+1}, \dots,\vec{v}_k)$. Apply \cref{claim:dvo2} with $r = m$. The resulting $k$-labelled quantum relational structure models $\phi$.
	\end{itemize} 
	In this construction, the depth of the structure only increases when handling quantifiers. This implies that $\boldsymbol{F}$ has indeed the desired properties.
\end{proof}

\begin{lemma}[{\cite[Proposition~23]{dawar_lovasz_2021_arxiv}}] \label{lem:dawar}
	Let $k \geq 1$ and $d \geq  0$.
	Let $\boldsymbol{F}$ be a $k$-labelled relational structure over $\{E, I\}$ admitting a $k$-pebble forest cover of depth $d$.
	Then the $k$-labelled graph $\boldsymbol{F}'$ obtained from $\boldsymbol{F}$ by identifying pairs of vertices satisfying $I$ admits a $k$-pebble forest cover of depth $d$.
\end{lemma}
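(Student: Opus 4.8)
The plan is to pass to the quotient and reconstruct a pebble forest cover by hand, the key leverage being that the pebbling condition in the given cover is imposed for the $I$-tuples as well. Write $\sim$ for the equivalence relation on $V(F)$ generated by $I$ and $q\colon V(F)\to V(F')$ for the quotient map, so that $\boldsymbol{F}'$ has vertex set $V(F)/{\sim}$, an edge $q(a)q(b)$ exactly when $ab\in E(F)$ and $q(a)\neq q(b)$, and $q(\vec u)$ as its tuple of labels. Fix the given $k$-pebble forest cover $(\mathcal{F},p)$ of depth $d$ with $\mathcal{F}=(V(F),\leq)$. The first step is the structural observation that every $\sim$-class $C$ is a chain in $\mathcal{F}$: any two $I$-adjacent vertices are $\leq$-comparable, and if $I(a,b)$, $I(b,c)$ with, say, $a\leq b$, then either $b\leq c$ (hence $a\leq c$) or $c\leq b$, in which case $a$ and $c$ both lie among the predecessors of $b$ and are thus comparable; the general case follows by induction along an $I$-path inside $C$. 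A class need not be \emph{convex} in $\mathcal{F}$, but a second observation — again obtained by applying the pebbling condition to an $I$-edge that spans $C$ — is that a vertex lying strictly between two elements of $C$ cannot be joined by an edge of $\boldsymbol{F}$ to a vertex ``on the other side'' of $C$; this is what guarantees that the reorganisation below does not make the endpoints of an edge of $\boldsymbol{F}'$ incomparable.

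The second step is to build $\mathcal{F}'$. For a class $C$ put $c^-:=\min C$ and $c^+:=\max C$, choose $c^+$ as the representative of $q(C)$, let $M:=\{\,c^+: C\text{ a class}\,\}$, and obtain $\mathcal{F}'$ from $\mathcal{F}$ by deleting each $C\setminus\{c^+\}$ and re-attaching every subtree that hung off a deleted vertex directly below the least vertex of $M$ still above it. One checks the cover property directly: an edge $q(a)q(b)$ of $\boldsymbol{F}'$ comes from some $ab\in E(F)$, say with $a\leq b$, and then both chosen representatives lie weakly below $b$ in $\mathcal{F}$ and below a common vertex in $\mathcal{F}'$, so they are comparable. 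The depth does not increase because every chain of $\mathcal{F}'$ lifts, by reinserting the deleted vertices, to a chain of $\mathcal{F}$. For the labelled statement (\cref{def:pfc-labelled}), note that the labelled vertices of $\boldsymbol{F}$ form a chain, each class meeting this chain contributes a single vertex to it, and their images again form a chain with all other vertices below them, while $p$ remains injective on the images of the labels.

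The main obstacle is the pebbling function: simply keeping $p$ on $M$ and setting $p'(q(C))$ equal to $p(c^+)$, or to $p(c^-)$, does not work, because a chain of $\mathcal{F}'$ through $q(C)$ may contain vertices that were interleaved with $C$, and all the original cover tells us about their colours is, via the $I$-edge spanning $C$, that they differ from $p(c^-)$ — which controls the clashes inherited from the shallow endpoint of $C$ but not those inherited from its deep endpoint, and vice versa. The technical heart of the proof is therefore to define $p'$ so that all inherited edge conditions hold simultaneously along every chain of $\mathcal{F}'$; concretely, one takes $p'(q(C))=p(c^+)$ and uses the slack $p(y)\neq p(c^-)$ available at each interleaved vertex $y$ on the chain $[c^-,c^+]$ to reassign $p'$ there, then verifies, case by case according to whether an edge of $\boldsymbol{F}'$ is inherited from $c^-$, from $c^+$, from an interior vertex of a class, or from a vertex unrelated to any class, that $p'$ is a valid $k$-pebbling function for $\mathcal{F}'$. (Equivalently one may contract a single $I$-edge at a time and iterate, but the bookkeeping for depth and colours still concentrates in exactly this point.) Combining the three steps yields a $k$-pebble forest cover of $\boldsymbol{F}'$ of depth $\leq d$, as required.
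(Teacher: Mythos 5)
There is a genuine gap, and it sits in the tree surgery, not in the pebbling recolouring you flag as the crux. By choosing the \emph{maximum} $c^+$ of each $I$-class as its representative you are forced to re-attach subtrees that hung off lower class members, and this breaks both of your structural verifications. For the cover property, your claim that for an edge $ab\in E(F)$ with $a\leq b$ ``both chosen representatives lie weakly below $b$'' is false: the class maximum of $a$ can be incomparable with $b$ (take $a$ with $I(a,c^+)$ and $E(a,b)$ where $c^+$ and $b$ sit on different branches above $a$). For the depth, the lifting argument fails: a chain of $\mathcal{F}'$ that runs through $c^+$ and then into a re-attached subtree does \emph{not} lift to a chain of $\mathcal{F}$, because the subtree's root was a child of a strictly smaller class member $x<c^+$ and is incomparable with $c^+$ in $\mathcal{F}$. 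Concretely, with $I(a,b)$, $a<z<b$, $z\notin C$, and a path of length $m$ hanging off $a$ on a branch avoiding $z,b$, your construction produces a chain of length $4+m$ where $\boldsymbol{F}$ had depth $\max(4,3+m)$, so the depth genuinely increases; since \cref{cor:backward} needs depth exactly $d$, this is fatal. (A further wrinkle: the ``slack'' $p(y)\neq p(c^-)$ you invoke at interleaved vertices comes from an $I$-edge between $c^-$ and $c^+$, which need not exist when the class is only generated by $I$.)

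The construction the paper relies on (following the cited Proposition~23) avoids all of this by identifying one $I$-pair $u<v$ at a time and keeping the \emph{lesser} vertex: the universe becomes $V(F)\setminus\{v\}$ and the forest order is simply restricted. Since every vertex above $v$ is automatically above $u$, no re-attachment is needed, the cover property follows from comparability of predecessors, and the depth can only decrease. All the genuine work then concentrates exactly where you located it---redefining the pebbling function on vertices above $v$, choosing $p'(w)\in\{p(u),p(v)\}$ by the rule of the cited reference---and the paper's own addition is only the check that $p'$ stays injective on labelled vertices. So your instinct about where the difficulty lies is right, but the quotient-to-the-maximum surgery does not yield a cover of depth $d$, and the recolouring itself is left as a sketch; choosing the minimum (equivalently, contracting edge by edge as in the reference) is what makes the structural part go through.
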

\begin{proof}
	For the sake of completeness, we briefly sketch a proof.
	On {\cite[p.~15]{dawar_lovasz_2021_arxiv}} a procedure is described to construct a $k$-pebble forest cover of depth $d$ for $\boldsymbol{F}' = (F', \vec{u}')$ from the corresponding object for $\boldsymbol{F} = (F, \vec{u})$. The construction iteratively identifies pairs of vertices $u \neq v$ which satisfy $I(u,v)$ in $\boldsymbol{F}$. Since in this case $u$ and $v$ are comparable in the forest order, one may suppose wlog that $u < v$. The construction operates as follows:
	\begin{enumerate}
		\item The universe of $F'$ is $V(F) \setminus \{v\}$.
		\item The forest order $\leq'$ of $F'$ is restriction of the order $\leq$ of $F$ to the new universe.
		\item In each relation of $F$, each occurring $v$ is replaced by $u$ to form a relation of $F'$.
		\item If $v$ carried labels in $\boldsymbol{F}$ then these labels are moved to $u$ in $\boldsymbol{F}'$. 
		
		Note that in this case $u$ carried at least one label in $\boldsymbol{F}$ since $u \leq v$.
		\item The pebbling function $p$ is updated to $p'$ such that $p'(w) = p(w)$ for all $w$ such that $w < v$ or $p(w) \not\in \{p(u), p(v)\}$.
		Otherwise, $p'(w) \in \{p(u), p(v)\}$ is set according to a rule described in {\cite[p.~15]{dawar_lovasz_2021_arxiv}} such that the $p'$ is a $k$-pebbling function for $F'$.
		For our purposes, it has to be verified additionally that $p'$ is injective on labelled vertices, cf.\@ \cref{def:pfc-labelled}.
		
		After the update, $p'$ is injective on labelled vertices since $p(x) = p'(x)$ for all labelled vertices $x$ in $\boldsymbol{F}'$.
		Indeed, if $p'(x) \neq p(x)$ for some labelled vertex $x$ in $\boldsymbol{F}'$ then $x \geq v$ and $p(x) \in \{p(u), p(v)\}$.
		The first statement implies $x \neq u$ while the second statement implies $x = u$ since $p$ is injective on labelled vertices and $x \neq v$ because $v$ was deleted. A contradiction.
	\end{enumerate}
	It is argued in \cite{dawar_lovasz_2021_arxiv} that the resulting object is a $k$-pebble forest cover of depth $d$ for the updated structure.
	As argued above, the operation also preserves the properties stipulated in \cref{def:pfc-labelled}.
\end{proof}

\begin{corollary} \label{cor:backward}
	Let $n \geq 1$, $k \geq 1$, and $d \geq 0$.
	For every formula $\phi \in \mathsf{C}_{k+1}^d$ with $k$~free variables, 
	there exists a $k$-labelled quantum graph $t$ whose constituents admit $(k+1)$-pebble forest covers of depth $d$ such that for all $k$-labelled $n$-vertex graphs $\boldsymbol{G}$
	\begin{itemize}
		\item if $\boldsymbol{G} \models \phi$ then $\hom(t, \boldsymbol{G}) = 1$, and
		\item if $\boldsymbol{G} \not\models \phi$ then $\hom(t, \boldsymbol{G}) = 0$.
	\end{itemize}
\end{corollary}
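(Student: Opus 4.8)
The plan is to mirror, on the ``from formulas to graphs'' side, the label-duplication device used for \cref{cor:forward}, and then to invoke \cref{lem:backward} and \cref{lem:dawar}. Given $\phi \in \mathsf{C}_{k+1}^d$ with free variables among $x_1, \dots, x_k$, I would first pass to the $\mathsf{C}_{k+1}^d$-formula $\psi(x_1, \dots, x_{k+1}) \coloneqq \phi \wedge (x_k = x_{k+1})$, which has $k+1$ free variables. Applying \cref{lem:backward} with $k+1$ in place of $k$ yields a $(k+1)$-labelled quantum relational structure $t' = \sum_i \alpha_i \boldsymbol{F}_i$ over $\{E, I\}$ whose constituents admit regular $(k+1)$-pebble forest covers of depth $d$ and for which $\hom(t', \boldsymbol{G}')$ equals $1$ or $0$ according to whether the $n$-vertex $(k+1)$-labelled $\{E,I\}$-structure $\boldsymbol{G}'$ (with $I$ read as equality) satisfies $\psi$. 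Because $\psi$ has $(x_k = x_{k+1})$ as a conjunct, the inductive construction in \cref{lem:backward} forces every constituent $\boldsymbol{F}_i$ to contain the relation $I$ between its vertices carrying the labels $k$ and $k+1$.

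Next I would apply \cref{lem:dawar} (again with $k+1$ in place of $k$) to each $\boldsymbol{F}_i$, obtaining a $(k+1)$-labelled graph $\boldsymbol{F}_i'$, namely $\boldsymbol{F}_i$ with all $I$-pairs identified, which still admits a $(k+1)$-pebble forest cover of depth $d$ in the sense of \cref{def:pfc-labelled}. By the previous paragraph, in $\boldsymbol{F}_i'$ the labels $k$ and $k+1$ sit on one and the same vertex; deleting the label $k+1$ therefore produces a $k$-labelled graph $\widetilde{\boldsymbol{F}}_i$ whose set of labelled vertices is unchanged, so that it still admits a $(k+1)$-pebble forest cover of depth $d$ (the depth, measured over the unlabelled vertices, is not affected). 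I would then set $t \coloneqq \sum_i \alpha_i \widetilde{\boldsymbol{F}}_i$, a $k$-labelled quantum graph all of whose constituents admit $(k+1)$-pebble forest covers of depth $d$.

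It remains to transfer the homomorphism count. For a $k$-labelled $n$-vertex graph $\boldsymbol{G} = (G, \vec{v})$, form the $(k+1)$-labelled $n$-vertex $\{E,I\}$-structure $\boldsymbol{G}' \coloneqq (G, \vec{v}\vec{v}_k)$ with $I$ interpreted as equality, exactly as in the proof of \cref{cor:forward}. Since $\boldsymbol{G}'$ reads $I$ as equality, any homomorphism $\boldsymbol{F}_i \to \boldsymbol{G}'$ collapses every $I$-edge, so $\hom(\boldsymbol{F}_i, \boldsymbol{G}') = \hom(\boldsymbol{F}_i', \boldsymbol{G}')$, and as both $\boldsymbol{F}_i'$ and $\boldsymbol{G}'$ carry the labels $k$ and $k+1$ on a single vertex this equals $\hom(\widetilde{\boldsymbol{F}}_i, \boldsymbol{G})$. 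Summing over $i$ gives $\hom(t, \boldsymbol{G}) = \hom(t', \boldsymbol{G}')$. Since $\boldsymbol{G}' \models (x_k = x_{k+1})$ always and $\boldsymbol{G}' \models \phi$ iff $\boldsymbol{G} \models \phi$ (as $\phi$ does not mention $x_{k+1}$), we conclude $\hom(t, \boldsymbol{G}) = 1$ if $\boldsymbol{G} \models \phi$ and $0$ otherwise, as required.

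The only genuinely delicate point — the part I expect to need the most care — is the claim that identifying $I$-pairs and then discarding the duplicated label preserves the $(k+1)$-pebble forest cover without increasing its depth. This is exactly why $\psi$ is built with the conjunct $x_k = x_{k+1}$ rather than with a fresh unused variable $x_{k+1}$: the conjunct guarantees that labels $k$ and $k+1$ are realised by the same vertex after the $I$-identification, so removing label $k+1$ leaves the set of labelled vertices — and hence the set of vertices over which depth is measured — untouched. Everything else is bookkeeping on top of \cref{lem:backward} and \cref{lem:dawar}.
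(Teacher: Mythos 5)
Your proposal is correct and follows essentially the same route as the paper's own proof: pass to $\phi \wedge (x_k = x_{k+1})$, apply \cref{lem:backward}, observe that every constituent carries the relation $I$ between the $k$-th and $(k+1)$-st labelled vertices, quotient via \cref{lem:dawar}, drop the duplicated label, and transfer counts through the structure $(G, \vec{v}\vec{v}_k)$ with $I$ read as equality. The delicate point you flag (preservation of the cover and its depth after dropping the coincident label) is handled the same way in the paper.
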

\begin{proof}
	Given $\phi$, define $\phi'(x_1, \dots, x_{k+1}) \coloneqq \phi(x_1, \dots, x_k) \land (x_k = x_{k+1})$.
	This is a $\mathsf{C}_{k+1}^d$-formula with $k+1$~free variables.
	\cref{lem:backward} guarantees the existence of a $(k+1)$-labelled quantum relational structure $t'$ over the signature $\{E, I\}$
	whose constituents admit regular $(k+1)$-pebble forest covers of depth $d$ such that for all $(k+1)$-labelled structures $\boldsymbol{G}$ over $\{E, I\}$ with $I$ interpreted as equality,
	\begin{itemize}
		\item if $\boldsymbol{G} \models \phi$ then $\hom(t, \boldsymbol{G}) = 1$, and
		\item if $\boldsymbol{G} \not\models \phi$ then $\hom(t, \boldsymbol{G}) = 0$.
	\end{itemize}
	Inspecting the proof of \cref{lem:backward} shows that in all constituents $\boldsymbol{F} = (F, \vec{u})$ of $t'$ are such that $F \models I(\vec{u}_k, \vec{u}_{k+1})$.
	By applying \cref{lem:dawar} to every constituent of $t'$, 
	a $(k+1)$-labelled quantum graph $t''$ is obtained whose constituents admit $(k+1)$-pebble forest covers of depth $d$. Furthermore, in each of these constituents the $k$-th and $(k+1)$-st label coincide. Write $t$ for the $k$-labelled quantum graph obtained from $t''$ by dropping the $(k+1)$-st label in every constituent.
	Let $\boldsymbol{G} = (G, \vec{v})$ be an arbitrary $k$-labelled graph and write $\boldsymbol{G}' = (G', \vec{v})$ for the $\{E, I\}$-structure obtained from $\boldsymbol{G}$ by interpreting $I$ as equality. Hence, if $\boldsymbol{G} \models \phi$ then  $(G, \vec{v}\vec{v}_{k+1}) \models \phi'$,
	\[
	\hom(t, \boldsymbol{G})
	= \sum_{v \in V(G)} \hom(t'', (G, \vec{v}v))
	= \sum_{v \in V(G)} \hom(t', (G', \vec{v}v))
	= \hom(t', (G', \vec{v}\vec{v}_k))
	= 1.
	\]
	If $\boldsymbol{G} \not\models \phi$ then $(G, \vec{v}\vec{v}_{k+1}) \not\models \phi'$ and the above equation evaluates to zero.
\end{proof}

\subsection{Conclusion}

\begin{proof}[Proof of \cref{thm:camb}]
	The equivalence of \cref{camb3,camb2} is standard and can be found in \cite[Proof of Theorem~5.2]{cai_optimal_1992}.
	\Cref{cor:forward} yields that \cref{camb2} implies \cref{camb1}. The backward direction follows from \cref{cor:backward}.
\end{proof} 

\begin{proof}[Proof of \cref{cor:camb}]
	The forward direction follows immediately from \cref{thm:camb} via the equivalence between \Cref{camb3,camb1} since the graphs in $\mathcal{PFC}_k^d$ are precisely those underlying labelled graphs in $\mathcal{LPFC}_k^d$.

	Conversely, suppose that $G$ and $H$ are distinguished after $d$ iterations of $k$-\WL. 
	Let $c$ be a colour of $k$-tuples which occurs in $G$ and $H$ differently often. For every other colour $c' \neq c$ there exists a formula $\phi_{c,c'} \in \mathsf{C}_{k+1}^d$ with $k$~free variables which is satisfied by $k$-tuples of colour $c$ but violated by $k$-tuples of colour $c'$. Hence, the number of tuples satisfying $\phi \coloneqq \bigwedge_{c' \neq c} \phi_{c,c'}$ is different in $G$ and $H$.
	Let $t$ be a $k$-labelled quantum graph whose constituents admit $(k+1)$-pebble forest covers of depth $d$ satisfying the assertions of \cref{cor:backward} for the formula $\phi$. Hence, the sum of all numbers $\hom(t,(G,\vec{u}))$ over $\vec{u} \in V(G)^k$ is not equal to the sum of all numbers $\hom(t,(H,\vec{v}))$ over $\vec{v} \in V(H)^k$. 	
	This implies the existence of a constituent $\boldsymbol{F} = (F, \vec{u})$ of $t$ such that $\hom(F,G) \neq \hom(F,H)$. Hence, $G$ and $H$ are homomorphism distinguishable over the desired graph class. 
\end{proof}

\section{Material Omitted in \Cref{sec:onetwowl}}

\subsection{Algebraic Characterisation of $(1,1)$-\WL Indistinguishability}
\label{ssec:app-algoneonewl}

Let $\mathcal{T}^+$ denote the class of graphs $T/B$ which are obtained from a forest $T$ by contracting all vertices of a set $\emptyset \neq B \subseteq V(T)$ into a single vertex and removing possible loops and multiedges. Clearly, every forest is in $\mathcal{T}^+$ taking $B$ to be any singleton. Moreover, all cycles are in $\mathcal{T}^+$ as they can be obtained by contracting the ends of paths. Let $\mathcal{T}^\circ$ denote the set of $2$-labelled graphs $\boldsymbol{T} = (T/B, u_1u_2)$ where $T/B \in \mathcal{T}^+$, $u_1, u_2 \in V(T/B)$, and $u_1 \in B$.

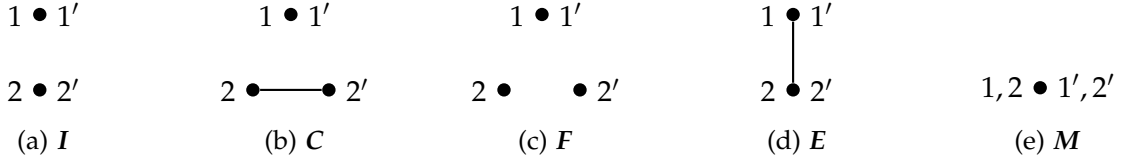
\begin{figure}[t]
\centering
\captionsetup[subfigure]{justification=centering}
\tikzset{
vertex/.style = {fill,circle,inner sep=0pt,minimum size=5pt},
edge/.style = {-,thick},
lbl/.style={color=lightgray}
}
\begin{subfigure}[t]{0.19 \textwidth}
\centering
\begin{tikzpicture}
\node[vertex] (a) [label = {left:$1\vphantom{1'}$}, label={right:$1'\vphantom{1'}$}] {} ;
\node[vertex] (b) [label = {left:$2\vphantom{2'}$}, label={right:$2'\vphantom{2'}$}] [below of= a] {}; 
\end{tikzpicture}
\caption{$\boldsymbol{I}$}
\end{subfigure}
\begin{subfigure}[t]{0.19 \textwidth}
\centering
\begin{tikzpicture}
\node[vertex] (a) [label = {left:$1\vphantom{1'}$}, label={right:$1'\vphantom{1'}$}] {} ;
\node[] (vemp) [below of = a] {};
\node[vertex] (b) [label={left:$2\vphantom{2'}$}] [below of=a, xshift=-.5cm] {}; 
\node[vertex] (c) [label={right:$2'\vphantom{2'}$}] [below of=a, xshift=.5cm] {}; 
\draw[edge] (b) -- (c) {}; 
\end{tikzpicture}
\caption{$\boldsymbol{C}$}
\end{subfigure}
\begin{subfigure}[t]{0.19 \textwidth}
 \centering 
\begin{tikzpicture}
\node[vertex] (a) [label = {left:$1\vphantom{1'}$}, label={right:$1'\vphantom{1'}$}] {} ;
\node[] (vemp) [below of = a] {};
\node[vertex] (b) [label={left:$2\vphantom{2'}$}] [below of=a, xshift=-.5cm] {}; 
\node[vertex] (c) [label={right:$2'\vphantom{2'}$}] [below of=a, xshift=.5cm] {}; 
\end{tikzpicture}
\caption{$\boldsymbol{F}$}
\end{subfigure}
\begin{subfigure}[t]{0.19 \textwidth}
\centering
\begin{tikzpicture}
\node[vertex] (a) [label = {left:$1\vphantom{1'}$}, label={right:$1'\vphantom{1'}$}] {} ;
\node[vertex] (b) [label = {left:$2\vphantom{2'}$}, label={right:$2'\vphantom{2'}$}] [below of= a] {}; 
\draw[edge] (a) -- (b) {}; 
\end{tikzpicture}
\caption{$\boldsymbol{E}$}
\end{subfigure}
\begin{subfigure}[t]{0.19 \textwidth}
 \centering
\begin{tikzpicture}
\node[vertex] (a) [label = {left:$1\vphantom{1'}, 2$}, label={right:$1'\vphantom{1'}, 2'$}] {} ;
\end{tikzpicture}
\caption{$\boldsymbol{M}$}
\end{subfigure}
\caption{The $(2,2)$-bilabelled graphs featured in \cref{thm:chhom}}
\label{fig:chhom}
\end{figure}

The set $\mathcal{T}^\circ$ is closed under parallel composition. Moreover, it is closed under series composition with the following $(2,2)$-bilabelled graphs, which are depicted in \cref{fig:chhom}:
\begin{itemize}
\item the \emph{identity graph} $\boldsymbol{I} = (I, (1,2), (1,2))$ with $V(I) = \{1,2\}$ and $E(I) = \emptyset$,
\item the \emph{connect graph} $\boldsymbol{C} = (C, (1,2), (1,2'))$ with $V(C) = \{1,2,2'\}$ and $E(C) = \{22'\}$,
\item the \emph{forget graph} $\boldsymbol{F} = (F, (1,2), (1,2'))$ with $V(F) = \{1,2,2'\}$ and $E(F) = \emptyset$,
\item the \emph{edge graph} $\boldsymbol{E} = (E, (1,2), (1,2))$ with $V(E) = \{1,2\}$ and $E(E) = \{12\}$,
\item the \emph{merge graph} $\boldsymbol{M} = (M, (1,1), (1,1))$ with $V(M) = \{1\}$ and $E(M) = \emptyset$.
\end{itemize}
Observe that all these graphs are such that under series composition the first labelled vertex is fixed.

\begin{theorem}\label{thm:chhom}
Let $G$ and $H$ be graphs. Then the following are equivalent:
\begin{enumerate}
\item $G$ and $H$ are $(1,1)$-\WL indistinguishable.\label{ch1}
\item There exists a bijection $\pi \colon V(G) \to V(H)$ such that $G_v$ and $H_{\pi(v)}$ model the same sentences of $\mathsf{C}_2$, i.e.\@ the two-variable fragment of first order logic with counting quantifiers.\label{ch2}
\item $G$ and $H$ are homomorphism indistinguishable over $\mathcal{T}^+$.\label{ch4}
\item There exists a doubly-stochastic matrix $X \in \mathbb{Q}^{V(H)^2 \times V(G)^2}$ such that $X\boldsymbol{B}_G = \boldsymbol{B}_H X$ for all $\boldsymbol{B} \in \{ \boldsymbol{I},\boldsymbol{C},\boldsymbol{G},\boldsymbol{E}, \boldsymbol{M}\}$.\label{ch5}
\end{enumerate}
\end{theorem}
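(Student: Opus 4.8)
The plan is to prove the cycle of implications $\ref{ch1}\Rightarrow\ref{ch4}\Rightarrow\ref{ch5}\Rightarrow\ref{ch2}\Rightarrow\ref{ch1}$, following the established recipe for characterisations of homomorphism indistinguishability in terms of matrix equations, specialised to the case $k=1$, $d=\infty$ (stable colourings). The class $\mathcal{T}^\circ$ of bilabelled graphs plays the role of $\mathcal{WL}_k^d$: it is closed under the relevant operations (parallel composition and series composition with the five generators $\boldsymbol{I},\boldsymbol{C},\boldsymbol{F},\boldsymbol{E},\boldsymbol{M}$) and its underlying unlabelled graphs are exactly $\mathcal{T}^+$. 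The key structural fact needed is a \emph{finite generation} statement: every $\boldsymbol{T}\in\mathcal{T}^\circ$ can be obtained from the single-vertex labelled graph by series composition with the five generators and by parallel composition. This is essentially the observation that a graph in $\mathcal{T}^+$ is a forest with one distinguished contracted blob $B$; one builds it up by growing the forest edge-by-edge from the blob-vertex (which must stay the first label), using $\boldsymbol{C}$ to attach a fresh pendant edge, $\boldsymbol{F}$ to move the second label without adding an edge, $\boldsymbol{E}$ to add an edge between the two current labels, and $\boldsymbol{M}$ to collapse the second label back onto the blob (implementing contraction into $B$); parallel composition glues independently grown branches at the blob.

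\textbf{First}, for $\ref{ch1}\Rightarrow\ref{ch4}$ I would use the algebraic dictionary between combinatorial operations on $\mathcal{T}^\circ$ and matrix operations on homomorphism matrices $\boldsymbol{T}_G\in\mathbb{C}^{V(G)^2\times V(G)^2}$, exactly as in \cref{prop:ops}: parallel composition corresponds to Schur product, series composition to matrix product, unlabelling to sum-of-entries, and reversal to transpose. Since $(1,1)$-\WL indistinguishability means there is a colour-preserving bijection $\pi$ with $\mathcal{C}_G^v(w)=\mathcal{C}_H^{\pi(v)}(\sigma_v(w))$, and since homomorphism counts into $G$ with a prescribed image of one vertex are functions of the $1$-\WL colour of that vertex in $G_v$ (this is the $k=1$ case of \cref{thm:colentry}/\cref{thm:camb}, noting $\mathcal{PFC}_1^\infty = \mathcal{T}^+$ as $1$-pebble forest covers of a graph correspond to its being a forest and the labelled variant with the blob corresponds to the single extra pebble), one gets $\hom(F,G)=\hom(F,H)$ for every $F\in\mathcal{T}^+$ by decomposing $F=\operatorname{soe}\boldsymbol{T}$ and matching colours.

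\textbf{Second}, $\ref{ch4}\Rightarrow\ref{ch5}$ is the Specht--Wiegmann/sum-of-entries argument of \cref{thm:soe}. Let $\mathsf{A}=(\boldsymbol{I}_G,\boldsymbol{C}_G,\boldsymbol{F}_G,\boldsymbol{E}_G,\boldsymbol{M}_G)$ and $\mathsf{B}$ the analogue for $H$. By finite generation plus closure under reversal (each of $\boldsymbol{I},\boldsymbol{E},\boldsymbol{M}$ is self-reverse and $\boldsymbol{C}^*,\boldsymbol{F}^*$ are again series-composable within the class since the first label is always fixed), every word $w_{\mathsf{A}}$ is the homomorphism matrix of some $\boldsymbol{T}\in\mathcal{T}^\circ$, so $\operatorname{soe}(w_{\mathsf{A}})=\hom(\operatorname{soe}\boldsymbol{T},G)=\hom(\operatorname{soe}\boldsymbol{T},H)=\operatorname{soe}(w_{\mathsf{B}})$ by $\ref{ch4}$ (after a preliminary check that $G$ and $H$ have the same number of vertices, which follows from $\hom(K_1,-)$, and the same number of edges, to handle parallel components / padding with isolated vertices as in the proof of \cref{thm:main-intro}). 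Then \cref{thm:soe} yields a pseudo-stochastic $X$; here one additionally checks $X$ can be taken \emph{doubly stochastic} and rational, either by the standard trick that the solution set is a rational polytope with a pseudo-stochastic point whose barycentre with its transpose-conjugate analogue is doubly stochastic, or by invoking that $\boldsymbol{I}$ being the identity matrix forces $X\mathbf{1}=\mathbf{1}$ and $\mathbf{1}^TX=\mathbf{1}^T$ already, and rationality follows from the system being defined over $\mathbb{Q}$. The implication $\ref{ch5}\Rightarrow\ref{ch4}$ runs the same Specht--Wiegmann equivalence backwards, and $\ref{ch1}\Leftrightarrow\ref{ch2}$ is immediate from the classical fact that $1$-\WL indistinguishability of coloured graphs equals $\mathsf{C}_2$-equivalence, applied to each pair $G_v,H_{\pi(v)}$.

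\textbf{The main obstacle} I anticipate is the finite generation proof: one must verify carefully that the constraint ``the first label is fixed/lies in $B$'' is genuinely preserved by all operations, that contraction (the passage from $T$ to $T/B$) is faithfully simulated by the merge graph $\boldsymbol{M}$ together with parallel composition at the blob, and that loops/multiedges created by contraction are correctly discarded — this matches the fact that homomorphism counts are insensitive to them. A secondary subtlety is the exact correspondence $\mathcal{PFC}_1^\infty=\mathcal{T}^+$ and its labelled refinement, i.e.\ checking that $1$-pebble forest covers with the blob-structure coincide with the $T/B$ description; once this is pinned down the rest is the now-routine machinery of \cref{thm:colentry,thm:soe,prop:ops}.
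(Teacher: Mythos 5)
Your argument only establishes \cref{ch1} $\Rightarrow$ \cref{ch4} $\Leftrightarrow$ \cref{ch5} and \cref{ch1} $\Leftrightarrow$ \cref{ch2}; no step takes you from \cref{ch4} or \cref{ch5} back to \cref{ch1} or \cref{ch2}. Your announced cycle needs \cref{ch5} $\Rightarrow$ \cref{ch2}, but in the body you replace it by ``\cref{ch5} $\Rightarrow$ \cref{ch4} runs Specht--Wiegmann backwards'', which leaves the equivalence open. This converse is not automatic: the paper obtains \cref{ch4} $\Rightarrow$ \cref{ch2} by adapting Dvo\v{r}\'ak's Lemma~6, translating every $\mathsf{C}_2$-sentence over the vocabulary $\{E,B\}$ of vertex-individualised graphs into (quantum combinations of) homomorphism counts from graphs $T/B \in \mathcal{T}^+$, the unary predicate $B$ being realised by the contracted blob. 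Some argument of this kind (from counts or matrix equations back to logic/colours) is indispensable, and your proposal contains none. Incidentally, the identification $\mathcal{PFC}_1^\infty = \mathcal{T}^+$ you lean on in the forward direction is false as stated ($\mathcal{T}^+$ contains cycles, which have no $2$-pebble forest cover), though the forward direction can be repaired.

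The second genuine problem is the strength of \cref{ch5}: the statement asks for a \emph{doubly-stochastic} rational matrix, whereas \cref{thm:soe} only ever produces a \emph{pseudo-stochastic} one (row and column sums one, entries possibly negative or complex). Your two patches do not close this gap: pseudo-stochasticity already means $X\boldsymbol{1}=\boldsymbol{1}$ and $\boldsymbol{1}^TX=\boldsymbol{1}^T$, so ``$\boldsymbol{I}$ forces the sum conditions'' adds nothing, and averaging $X$ with a conjugate-transpose analogue does not make its entries nonnegative. Nonnegativity is exactly why the paper does not use \cref{thm:soe} here but instead invokes the stronger Theorem~56 of \cite{grohe_homomorphism_arxiv}, whose hypotheses require the span $\mathbb{C}\mathcal{T}^\circ_G$ of homomorphism vectors to be closed under Schur products and under the action of the generators; this is precisely what the paper's closure claim (generation of $\mathcal{T}^\circ$ from $\boldsymbol{1}$ under the five generators \emph{and gluing products}) supplies. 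So your finite-generation plan is on the right track combinatorially, but to reach the doubly-stochastic conclusion you must feed the Schur-product structure into the representation-theoretic step rather than rely on the sum-of-entries Specht--Wiegmann theorem used for \cref{thm:main-intro}.
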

\begin{proof}
The equivalence of \cref{ch1,ch2} follows readily from \cite{cai_optimal_1992}. 

That \cref{ch2} implies \cref{ch4} follows by adapting \cite[Lemma~4]{dvorak_recognizing_2010}: Intuitively, the assertion $\hom(T, G) = m$ for $T$ a tree and $m \in \mathbb{N}$ can be translated to a $\mathsf{C}_2$-sentence with variables corresponding to the vertices of $T$. To encode $\hom(T/B, G) = m$ for $T/B \in \mathcal{T}^+$, one has to replace expressions $Qx.\ \phi(x, y)$ for $Q$ any quantifier, $x, y$ variables, and $x$ representing a vertex in $B$ by $Qx.\ B(x) \land \phi(x,y)$ where $B$ is a unary predicate encoding the colour of $v$ in $G_v$. This yields \cref{ch4}.

The converse follows by adapting \cite[Lemma 6]{dvorak_recognizing_2010}: A sentence $\phi \in \mathsf{C}_2$ over the relational vocabulary of (uncoloured) graphs can be translated into a homomorphism constraint $\hom(T, G) = m$ for $T$ a tree and $m \in \{0,1\}$. If instead $\phi$ is over the relational vocabulary $\{E, B\}$ of vertex-individualised graphs then this translation yields a tree $T$ and a set $B \subseteq V(T)$ comprising the vertices corresponding to variables $x$ prescribed to satisfy $B(x)$. Since there is a unique vertex in $G$ modelling $B(x)$, the result is a homomorphism constraint $\hom(T/B, G) = m$ for $T/B \in \mathcal{T}^+$. This implies \cref{ch2}.

The equivalence of \cref{ch4,ch5} requires similar arguments as \cref{thm:main-intro}.
For two $\ell$-labelled graphs $\boldsymbol{F} = (F, \vec{u})$ and $\boldsymbol{F}' = (F', \vec{u'})$, define the \emph{gluing product} $\boldsymbol{F} \odot \boldsymbol{F}'$ as the $\ell$-labelled graph obtained by taking the disjoint union of $F$ and  $F'$ and identifying vertices $\vec{u}$ and $\vec{u}'$ element-wise, cf.\@ \cite{mancinska_quantum_2019}. Write $\boldsymbol{1}$ for the $2$-labelled $2$-vertex graph without any edges and labels on distinct vertices.

\begin{claim}\label{claim1}
The set $\mathcal{T}^\circ$ is the closure of $\{\boldsymbol{1}\}$ under multiplication with $\boldsymbol{I}$, $\boldsymbol{C}$, $\boldsymbol{G}$, $\boldsymbol{E}$, and $\boldsymbol{M}$ as well as taking gluing products.
\end{claim}
\begin{claimproof}
It is easy to see that $\mathcal{T}^\circ$ is closed under these operations and $\boldsymbol{1} \in \mathcal{T}^\circ$. The converse inclusion is shown by induction on the number of vertices of $\boldsymbol{T} \in \mathcal{T}^\circ$. If $\boldsymbol{T}$ has just a single vertices then $\boldsymbol{T} = \boldsymbol{M}\boldsymbol{1}$. If $\boldsymbol{T}$ consists of two vertices then it is one of $\boldsymbol{1}$, $\boldsymbol{E}\boldsymbol{1}$, $\boldsymbol{M}\boldsymbol{F}\boldsymbol{1}$, or $\boldsymbol{M}\boldsymbol{F}\boldsymbol{E}\boldsymbol{1}$.

Now let $\boldsymbol{T} = (T/B, u_1u_2)$ have more than two vertex. It may be assumed that $u_2 \not\in B$. Indeed, let otherwise $u'_2 \in V(T) \setminus B$ be any other vertex. Let $\boldsymbol{T}' \coloneqq (T/B, u_1u'_2)$. Then $\boldsymbol{T} = \boldsymbol{M}\boldsymbol{F}\boldsymbol{T}'$ is such that its second labelled vertex does not lie in $B$.  Distinguish cases:
\begin{itemize}
\item $u_2$ is isolated in $T/B$. Let $T' \coloneqq T \setminus u_2$ and $\boldsymbol{T}' \coloneqq (T'/B, u_1u'_2)$ for any other vertex $u'_2 \in V(T')$. Then, $\boldsymbol{T} = \boldsymbol{F}\boldsymbol{T}'$.
\item $u_2$ is of degree one  in $T/B$. If $u_2$ is connected to $u_1$ in $T/B$ then let $T' \coloneqq T \setminus u_2$ and $\boldsymbol{T}' \coloneqq (T'/B, u_1u'_2)$ for any other vertex $u'_2 \in V(T')$. Then, $\boldsymbol{T} = \boldsymbol{E} \boldsymbol{F} \boldsymbol{T}'$. If $u_2$ is connected to a vertex other than $u_1$, let $T' \coloneqq T \setminus u_2$ and $\boldsymbol{T}' \coloneqq (T'/B, u_1u'_2)$ for $u'_2 \in V(T')$ the vertex $u_2$ is connected to. Then $\boldsymbol{T} = \boldsymbol{C}\boldsymbol{T}'$.
\item $u_2$ is of degree greater than one  in $T/B$. Write $T^1, \dots, T^r$ for the connected components of $T \setminus u_2$. Let $v_i \in V(T^i)$ for $i \in [r]$ be the vertex such that $u_2v_i \in E(T)$. Observe that $V(T) = V(T^1) \sqcup \dots \sqcup V(T^r) \sqcup \{u_2\}$ and hence without loss of generality $\abs{V(T^i)} < \abs{V(T)}-1$ for all $i \in [r]$. Define $B^i$ to be the disjoint union of $B \cap V(T^i)$ and a fresh isolated vertex $v'_i$. Let $\boldsymbol{T}^i \coloneqq ((T^i \sqcup \{v'_i\})/B^i, v'_iv_i)$. Then if none of $v_i$ are in $B$ it holds that $\boldsymbol{T} = (\boldsymbol{C}\boldsymbol{T}^1) \odot \dots \odot (\boldsymbol{C}\boldsymbol{T}^r)$, otherwise $\boldsymbol{T} = \boldsymbol{E}((\boldsymbol{C}\boldsymbol{T}^1) \odot \dots \odot (\boldsymbol{C}\boldsymbol{T}^r))$.\qedhere
\end{itemize}
\end{claimproof}

Write $\mathbb{C}\mathcal{T}^\circ_G \leq \mathbb{C}^{V(G)^2}$ for the space spanned by the homomorphism vectors $\boldsymbol{T}_G$, $\boldsymbol{T} \in \mathcal{T}^\circ$. By \cref{claim1}, $\mathbb{C}\mathcal{T}^\circ_G$ is closed under Schur products and invariant under the action of $\boldsymbol{I}_G$, $\boldsymbol{C}_G$, $\boldsymbol{E}_G$, $\boldsymbol{F}_G$ and $\boldsymbol{M}_G$. By \cite[Theorem~56]{grohe_homomorphism_arxiv}, \cref{ch4} is equivalent to the existence of a doubly-stochastic map $X \colon \mathbb{C}\mathcal{T}^\circ_G \to \mathbb{C}\mathcal{T}^\circ_H$ such that $X \boldsymbol{B}_G = \boldsymbol{B}_HX$ for all $\boldsymbol{B} \in \{ \boldsymbol{I},\boldsymbol{C},\boldsymbol{G},\boldsymbol{E}, \boldsymbol{M}\}$. This is the content of \cref{ch5}.
\end{proof}

\section{A Comonadic Strategy for Homomorphism Indistinguishability}
\label{app:comonadic}

We propose to view our \cref{thm:main-intro} as the instantiation of a novel strategy for proving characterisations of homomorphism indistinguishability in terms of matrix equations for graph classes which so far have resisted the strategy outlined in \cref{recipe}. This novel strategy is inspired by the comonadic framework developed in \cite{abramsky_pebbling_2017,abramsky_relating_2021,dawar_lovasz_2021,abramsky_discrete_2022}, etc. It is in large parts generic, in the sense that it does not require prior knowledge of the graph class in questions. This is achieved by leveraging the properties of a comonad characterising the graph class.

\subsection{Preliminaries}
We first sketch the framework of comonads on the category of graphs before commenting on how it can be used to derive matrix equations. See \cite{dawar_lovasz_2021,lane_categories_1971} for further details. Let $\Graph$ denote the category of finite graphs, i.e.\@ with objects being finite graphs and morphisms being graph homomorphisms. A \emph{comonad} on $\Graph$ is a tuple $(\C, \epsilon, \delta)$ where $\C \colon \Graph \to \Graph$ is a functor and $\epsilon \colon \C \to 1_{\Graph}$ and $\delta \colon \C \to \C\C$ are natural transformations satisfying the following diagrams for all objects $A$ of $\Graph$:
\begin{equation*} \label{eq:comonad}
\begin{tikzcd}
\C A \arrow[r, "\delta_A"] \arrow[d, "\delta_A"] & \C \C A \arrow[d, "\C \delta_A"] \\
\C \C A \arrow[r, "\delta_{\C A}"]               & \C \C \C A                      
\end{tikzcd}
\quad\quad
\begin{tikzcd}
\C A \arrow[r, "\delta_A"] \arrow[d, "\delta_A"] \arrow[rd, "\id_{\C A}"] & \C \C A \arrow[d, "\C \epsilon_A"] \\
\C\C A \arrow[r, "\epsilon_{\C A}"]                                              & \C A                              
\end{tikzcd}
\end{equation*}

\begin{example}[\cite{abramsky_pebbling_2017,dawar_lovasz_2021}]
The \emph{pebbling comonad} maps a graph $A$ to the graph $\mathfrak{P}_{k,d}A$ with vertex set being the set of sequences $[(p_1, a_1), \dots, (p_\ell, a_\ell)]$ for $p_i \in [k]$, $a_i \in V(A)$, and $i \in [\ell]$, $\ell \leq d$. There is an edge between $[(p_1, a_1), \dots, (p_\ell, a_\ell)]$ and $[(p'_1, a'_1), \dots, (p'_{\ell'}, a'_{\ell'})]$ if and only if one sequence is and initial segment of the other, and if wlog the first sequence is contained in the second then $p_\ell \not\in \{p'_{\ell+1}, \dots, p'_{\ell'}\}$, and $a_\ell a'_{\ell'} \in E(A)$.
The homomorphisms $\epsilon_A \colon \mathfrak{P}_{k,d}A \to A$ sends $[(p_1, a_1), \dots, (p_\ell, a_\ell)]$ to $a_\ell$. It is readily verified that this is a morphisms in $\Graph$. The homomorphisms $\delta_A \colon \mathfrak{P}_{k,d}A \to \mathfrak{P}_{k,d}\mathfrak{P}_{k,d}A$ maps
\[
[(p_1, a_1), \dots, (p_\ell, a_\ell)] \mapsto
[(p_1, s_1), \dots, (p_\ell, s_\ell)]
\]
where $s_j = [(p_1, a_1), \dots, (p_j, a_j)]$ for all $j \leq \ell$. The triple $(\mathfrak{P}_{k,d}, \epsilon, \delta)$ specifies the pebbling comonad. 
\end{example}

A \emph{coalgebra} for a comonad $(\C, \epsilon, \delta)$ is a pair $(A, \alpha)$ where $A \in \obj \Graph$ and $\alpha \colon A \to \C A$ is a morphism such that the following diagrams commute:
\begin{equation*} \label{eq:coalgebra}
\begin{tikzcd}
A \arrow[r, "\alpha"] \arrow[d, "\alpha"] & \C A \arrow[d, "\C \alpha"] \\
\C A \arrow[r, "\delta_A"]               & \C \C A                   
\end{tikzcd}
\quad\quad
\begin{tikzcd}
A \arrow[r, "\alpha"] \arrow[rd, "\id_A"] & \C A \arrow[d, "\epsilon_A"] \\
                                          & A                           
\end{tikzcd}
\end{equation*}
The significance of coalgebras for this paper stems from the following result:
\begin{theorem}[{\cite[Section IV.B]{dawar_lovasz_2021}}]\label{thm:dawar-coalg}
Let $A \in \obj \Graph$. Then the following are equivalent:
\begin{enumerate}
\item $A$ admits a $k$-pebble forest cover of depth $\leq d$,
\item $A$ admits a $\mathfrak{P}_{k,d}$-coalgebra $A \to \mathfrak{P}_{k,d}A$.
\end{enumerate}
\end{theorem}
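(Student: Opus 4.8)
The plan is to read the forest cover off the coalgebra structure, and conversely, in both directions by merely unwinding the definitions of $\mathfrak{P}_{k,d}$, $\epsilon$, and $\delta$; no substantial new idea seems to be needed beyond a careful dictionary between the combinatorial and the categorical formulations.

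First I would treat the implication from a coalgebra to a pebble forest cover. Given a coalgebra $\alpha \colon A \to \mathfrak{P}_{k,d}A$, write $\alpha(a) = [(p_1^a, a_1^a), \dots, (p_{\ell_a}^a, a_{\ell_a}^a)]$ for each $a \in V(A)$. The counit law $\epsilon_A \circ \alpha = \id_A$ forces $a_{\ell_a}^a = a$, and unwinding the comultiplication law $\delta_A \circ \alpha = \mathfrak{P}_{k,d}\alpha \circ \alpha$ using the explicit formulas for $\delta_A$ and $\mathfrak{P}_{k,d}\alpha$ shows that the length-$j$ prefix of $\alpha(a)$ equals $\alpha(a_j^a)$ for every $j \leq \ell_a$. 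Consequently $a \mapsto \alpha(a)$ is injective and its image is prefix-closed, so one may define a poset on $V(A)$ by setting $a \leq b$ iff $\alpha(a)$ is a prefix of $\alpha(b)$, together with a pebbling function $p(a) \coloneqq p_{\ell_a}^a$. The predecessors of $b$ are then exactly the vertices corresponding to the prefixes of $\alpha(b)$, so this is a forest of depth at most $d$ since sequences have length at most $d$. Finally, $\alpha$ being a homomorphism means $uv \in E(A)$ implies $\alpha(u)\alpha(v) \in E(\mathfrak{P}_{k,d}A)$; by the definition of this edge relation, one of $\alpha(u), \alpha(v)$ is an initial segment of the other -- giving $u \leq v$ or $v \leq u$, hence the forest-cover property -- and, writing the shorter one as the length-$i$ prefix of the longer, the index $p_i$ does not recur among $p_{i+1}, \dots$, which is exactly the requirement that $p$ be a $k$-pebbling function.

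For the converse I would run the same dictionary backwards. Given a $k$-pebble forest cover $(\mathcal{F}, p)$ of depth at most $d$, enumerate for each $a \in V(A)$ its chain of predecessors $a_1 < \dots < a_m = a$ (with $m \leq d$) and set $\alpha(a) \coloneqq [(p(a_1), a_1), \dots, (p(a_m), a_m)]$. The counit law is immediate; the comultiplication law holds because the length-$j$ prefix of $\alpha(a)$ is $\alpha(a_j)$, the predecessor chain of $a_j$ being the corresponding prefix of that of $a$; and $\alpha$ is a homomorphism precisely by the forest-cover property together with the defining property of the pebbling function, read in the direction opposite to the one above.

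I do not expect a genuine obstacle here. The one point demanding care is the bookkeeping that matches the edge relation of $\mathfrak{P}_{k,d}A$ -- ``one sequence is an initial segment of the other, and the last pebble of the shorter one is not reused in the longer'' -- to the two clauses of a pebble forest cover, and unwinding the coalgebra identities with the correct prefixes, since it is easy to conflate $\delta_A$ with $\mathfrak{P}_{k,d}\alpha$. I would also double-check the ``depth'' convention (length of a maximal chain versus number of vertices on it) so that the bound $d$ lines up on both sides.
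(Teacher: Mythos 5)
Your argument is correct: the paper itself gives no proof of this statement (it is quoted from Dawar--Jakl--Reggio, going back to Abramsky--Dawar--Wang), and your dictionary between coalgebra laws and the forest-cover/pebbling axioms is exactly the standard argument in those references, so there is no divergence in approach. The one convention you flag resolves in your favour: throughout the paper (e.g.\ in \cref{def:wlkd}\ref{b5}) the length of a chain is its number of elements, so chains of at most $d$ vertices correspond precisely to the nonempty sequences of length at most $d$ in $\mathfrak{P}_{k,d}A$, and your construction matches the depth bound exactly.
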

In fact, there is a bijective correspondence between these objects. We therefore think of coalgebras as graph decompositions.

The \emph{Eilenberg--Moore category} $\EM(\C)$ of a comonad $\C$ is the category whose objects are $\C$-coalgebras and whose morphisms are $\C$-coalgebra morphisms. A \emph{morphism $h \colon (A, \alpha) \to (B, \beta)$ of $\C$-coalgebras} is a morphisms $h \colon A \to B$ such that the following diagram commutes:
\begin{equation*} \label{eq:colalgmor}
\begin{tikzcd}
A \arrow[d, "h"] \arrow[r, "\alpha"] & \C A \arrow[d, "\C h"] \\
B \arrow[r, "\beta"]                 & \C B                  
\end{tikzcd}
\end{equation*}
Informally, the Eilenberg--Moore category is the category of decomposed graphs. The \emph{forgetful functor} $U^\C \colon \EM(\C) \to \Graph$ which sends $(A, \alpha)$ to $A$ associates a decomposed graph to its underlying graph. Thus, by \cref{thm:dawar-coalg}, a graph is in the image of $U^{\mathfrak{P}_{k,d}}$ if and only if it is admits a $k$-pebble forest cover of depth $\leq d$.

\begin{fact}[cf.\@ e.g.\@ \cite{lane_categories_1971}] \label{fact:emlimit}
Let $\C$ be a comonad over a category $\mathcal{A}$. Then $\EM(\C)$ has all colimits that $\mathcal{A}$ has. The forgetful functor $U^\C \colon \EM(\C) \to \mathcal{A}$ creates them.
\end{fact}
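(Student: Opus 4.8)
The plan is to dualise the classical fact that the Eilenberg--Moore category of a monad creates limits: here we must show that the Eilenberg--Moore category of a comonad creates colimits. Fix a small index category $J$ and a diagram $D \colon J \to \EM(\C)$ with $D(j) = (A_j, \alpha_j)$, and suppose the underlying diagram $U^\C D \colon J \to \mathcal{A}$ admits a colimit $(C, (\iota_j)_{j})$ in $\mathcal{A}$. The first step is to equip $C$ with a coalgebra structure. The arrows $\C\iota_j \circ \alpha_j \colon A_j \to \C C$ form a cocone under $U^\C D$: for a morphism $f \colon j \to j'$ of $J$ one uses that $D(f)$ is a coalgebra morphism, i.e.\@ $\C(U^\C D f)\circ \alpha_j = \alpha_{j'} \circ U^\C D f$, together with the cocone identity $\iota_{j'} \circ U^\C D f = \iota_j$, to check compatibility. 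Hence the universal property of $C$ yields a unique $\gamma \colon C \to \C C$ with $\gamma \circ \iota_j = \C\iota_j \circ \alpha_j$ for all $j \in \obj J$.

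Next I would verify that $(C,\gamma)$ is a $\C$-coalgebra. For the counit law, precomposing $\epsilon_C \circ \gamma$ with $\iota_j$ and invoking naturality of $\epsilon$ and the counit axiom for $(A_j,\alpha_j)$ gives $\iota_j \circ \epsilon_{A_j}\circ\alpha_j = \iota_j$, so $\epsilon_C \circ \gamma = \id_C$ because the colimit cocone $(\iota_j)$ is jointly epimorphic. For coassociativity, precomposing both $\delta_C \circ \gamma$ and $\C\gamma \circ \gamma$ with $\iota_j$ and using naturality of $\delta$ and coassociativity of $\alpha_j$ shows both equal $\C\C\iota_j \circ \C\alpha_j \circ \alpha_j$; joint epimorphicity again forces $\delta_C\circ\gamma = \C\gamma\circ\gamma$. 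The defining relation $\gamma \circ \iota_j = \C\iota_j\circ\alpha_j$ says exactly that each $\iota_j \colon (A_j,\alpha_j) \to (C,\gamma)$ is a morphism of $\EM(\C)$, so we obtain a cocone under $D$ in $\EM(\C)$.

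It remains to check universality and to read off the creation statement. Given any cocone $(\kappa_j \colon (A_j,\alpha_j) \to (D',\gamma'))$ in $\EM(\C)$, the underlying arrows induce a unique $u \colon C \to D'$ in $\mathcal{A}$ with $u \circ \iota_j = \kappa_j$; precomposing $\gamma'\circ u$ and $\C u \circ \gamma$ with $\iota_j$ and using that each $\kappa_j$ is a coalgebra morphism shows both equal $\C\kappa_j\circ\alpha_j$, so once more by joint epimorphicity $u$ is a coalgebra morphism, and its uniqueness as such is inherited from $\mathcal{A}$. Thus $(C,\gamma)$ with the $\iota_j$ is a colimit of $D$ in $\EM(\C)$ lying over the chosen colimit in $\mathcal{A}$, and the lifting coalgebra structure $\gamma$ was forced at every stage; this is precisely the assertion that $U^\C$ creates colimits. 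I do not expect a genuine obstacle: the proof is entirely formal, and the only care required is bookkeeping of which naturality square or coalgebra axiom is used where, the sole structural input being that colimit cocones are jointly epimorphic.
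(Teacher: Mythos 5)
Your proof is correct. The paper gives no proof of this statement — it is recorded as a Fact with a pointer to Mac Lane — and your argument is exactly the standard one it invokes (the dual of the fact that the forgetful functor from Eilenberg--Moore algebras creates limits): you lift the colimit by the induced structure map $\gamma$, verify the coalgebra axioms and the universal property by precomposing with the cocone and using that colimit cocones are jointly epimorphic, and observe that $\gamma$ is forced, which is precisely creation.
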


From the perspective of homomorphism counting, the characterisation of the adjunction underlying the comonad $\C$ in terms of homomorphism functor is of utmost importance. Let $F^\C \colon \Graph \to \EM(\C)$ denote the \emph{free functor} mapping $A$ to $(\C A, \delta_A)$, i.e.\@ $F^\C \vdash U^\C$.

\begin{fact}[{\cite[Corollary~5.4.23]{perrone_notes_2021} or e.g.\@ \cite{dawar_lovasz_2021}}] \label{fact:adjunction}
Let $(A, \alpha) \in \obj \EM(\C)$ and $G \in \obj \Graph$. Then
\[
\hom_\Graph(U^\C(A, \alpha), G) \cong \hom_{\EM(\C)}((A, \alpha), F^\C G). 
\]
Denote by $\widehat{\ell} \colon (A, \alpha) \to F^\C G$ the morphism corresponding to $\ell \colon U^\C(A, \alpha) \to G$ under this bijection.
\end{fact}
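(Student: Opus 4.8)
The statement to prove is Fact~\ref{fact:adjunction}, which asserts the natural bijection $\hom_\Graph(U^\C(A,\alpha), G) \cong \hom_{\EM(\C)}((A,\alpha), F^\C G)$, i.e.\@ that $F^\C$ is left adjoint to $U^\C$. This is a standard piece of categorical algebra, so let me sketch how I would prove it cleanly.

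\bigskip

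The plan is to exhibit the unit and counit of the adjunction explicitly and verify the triangle identities, or equivalently to directly construct mutually inverse maps between the two hom-sets and check naturality. First I would recall the setup: for a graph $G$, the free coalgebra is $F^\C G = (\C G, \delta_G)$, and the forgetful functor sends $(A,\alpha)$ to its underlying graph $A$. Given a coalgebra morphism $g \colon (A,\alpha) \to (\C G, \delta_G)$, I would define its transpose $\ell \colon A \to G$ by $\ell \coloneqq \epsilon_G \circ g$, using the counit $\epsilon$ of the comonad; this is just a graph homomorphism. Conversely, given a graph homomorphism $\ell \colon A \to G$, I would define $\widehat{\ell} \colon A \to \C G$ by $\widehat{\ell} \coloneqq \C\ell \circ \alpha$, using the coalgebra structure map $\alpha$ and functoriality of $\C$. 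The bulk of the proof is then three routine diagram chases:

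\begin{enumerate}
\item \emph{$\widehat{\ell}$ is a coalgebra morphism $(A,\alpha) \to (\C G, \delta_G)$}: one must check $\delta_G \circ \widehat{\ell} = \C\widehat{\ell} \circ \alpha$. Expanding both sides using $\widehat{\ell} = \C\ell \circ \alpha$, the left side is $\delta_G \circ \C\ell \circ \alpha$ and the right side is $\C(\C\ell) \circ \C\alpha \circ \alpha$. Naturality of $\delta$ applied to $\ell$ gives $\delta_G \circ \C\ell = \C\C\ell \circ \delta_A$, and the coalgebra axiom gives $\delta_A \circ \alpha = \C\alpha \circ \alpha$; combining these yields equality.
\item \emph{The two constructions are mutually inverse}: Starting from $\ell$, forming $\widehat{\ell} = \C\ell \circ \alpha$ and transposing back gives $\epsilon_G \circ \C\ell \circ \alpha = \ell \circ \epsilon_A \circ \alpha$ by naturality of $\epsilon$, and then $\epsilon_A \circ \alpha = \id_A$ by the counit coalgebra axiom, so we recover $\ell$. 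Starting from a coalgebra morphism $g$, forming $\ell = \epsilon_G \circ g$ and then $\widehat{\ell} = \C\ell \circ \alpha = \C\epsilon_G \circ \C g \circ \alpha$; since $g$ is a coalgebra morphism, $\C g \circ \alpha = \delta_G \circ g$, and then $\C\epsilon_G \circ \delta_G = \id_{\C G}$ by one of the comonad counit laws, so we recover $g$.
\item \emph{Naturality of the bijection} in both $(A,\alpha)$ and $G$: this follows formally from the fact that everything was built from the natural transformations $\epsilon$, $\delta$ and the functor $\C$, so precomposing with a coalgebra morphism or postcomposing with a graph homomorphism commutes with the transpose operations.
\end{enumerate}

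\bigskip

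The only mild subtlety, which I would flag rather than treat as an obstacle, is that the category in play is $\Graph$ (finite graphs with homomorphisms) rather than $\mathsf{Set}$ or $\mathsf{Struct}$; but since the comonad $\C = \mathfrak{P}_{k,d}$ and all the structure morphisms $\epsilon$, $\delta$ are genuinely morphisms in $\Graph$ (as recalled in the example preceding the statement), the abstract argument goes through verbatim. In fact the cleanest presentation simply cites the general theorem that for any comonad the co-Eilenberg--Moore adjunction $F^\C \dashv U^\C$ exists with unit $\alpha$ (the coalgebra structure itself) and counit $\epsilon$, so one may alternatively just invoke \cite[Corollary~5.4.23]{perrone_notes_2021} or the standard reference \cite{lane_categories_1971} and be done. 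For a self-contained account, the three diagram chases above suffice, and none of them is genuinely hard — each is a single application of a naturality square followed by a single coalgebra or comonad identity. The correspondence $\ell \mapsto \widehat{\ell}$ is precisely the one named in the statement, so defining $\widehat{\ell} \coloneqq \C\ell \circ \alpha$ also discharges the final clause of the Fact.
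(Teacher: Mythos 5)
Your argument is correct and coincides with the standard proof of the co-Eilenberg--Moore adjunction, which the paper itself does not prove but simply cites (\cite{perrone_notes_2021}, \cite{lane_categories_1971}): the transposes $\widehat{\ell} \coloneqq \C\ell \circ \alpha$ and $g \mapsto \epsilon_G \circ g$, together with your three diagram chases (coalgebra-morphism check via naturality of $\delta$ plus the coassociativity axiom, mutual inverseness via naturality of $\epsilon$ plus the counit laws, and formal naturality), are exactly the cited argument, and the finiteness caveat about $\Graph$ is handled as you say. The only blemish is terminological: the displayed bijection exhibits $U^\C$ as \emph{left} adjoint to the cofree functor $F^\C$ (i.e.\ $U^\C \dashv F^\C$, with unit $\alpha$ and counit $\epsilon$), not $F^\C$ as left adjoint to $U^\C$, though the bijection you actually construct is the correct one.
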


In other words, counting homomorphisms in $\Graph$ from $A$ to $G$ where $A$ admits a $\C$-coalgebra amounts to counting homomorphisms in $\EM(\C)$ from a $\C$-coalgebra of $A$ to the free $\C$-coalgebra $(\C G, \delta_G)$ of $G$.

\subsection{Bilabelled Objects and their Coalgebraic and Augmented Homomorphism Representation}
\label{sec:cat-step1}
We develop a categorical language for bilabelled objects. The following definitions resemble those in~\cite{lovasz_semidefinite_2009}.
Fix a category $\mathcal{C}$, which has all finite pushouts. An example for such a category is $\Graph$ but also $\EM(\C)$ for a comonad $\C$ on $\Graph$, cf.\@ \cref{fact:emlimit}.

\begin{definition}[Categories of Bilabelled Objects]\label{def:cat-bil}
For $L \in \obj \mathcal{C}$, define $\Lab_{\mathcal{C}}(L, K)$ to be the category with objects being pairs of $\mathcal{C}$-morphisms $L \to A \ot K$ and morphisms being $\mathcal{C}$-morphisms~$h$ such that the following diagram commutes:
\[
\begin{tikzcd}
L \arrow[r] \arrow[rd] & A \arrow[d, "h"] & K \arrow[l] \arrow[ld] \\
                       & B                &                       
\end{tikzcd}
\]
For a family of objects $\mathcal{L} \subseteq \obj \mathcal{L}$,
write $\Lab_\mathcal{C}(\mathcal{L})$ for the union of all $\Lab_\mathcal{C}(L, K)$ for $L, K \in \mathcal{L}$.
We use boldface letters $\boldsymbol{A} = (L \to A \ot K)$ to indicate bilabelled objects.
\end{definition}

An instance of a category of bilabelled objects can be constructed from a comonad $\C$ on $\Graph$ and its Eilenberg--Moore category $\EM(\C)$ as follows.
Fix a graph $L \in \obj \Graph$. Write $\mathcal{L}$ for the set of all $\C$-coalgebras $L \to \C L$ of $L$. 
Let $\Lambda(\C, L) \coloneqq \obj \Lab_{\EM(\C)}(\mathcal{L}) \sqcup \{\bot\}$ where $\bot$ is a fresh symbol. In other words, $\Lambda(\C, L)$ comprises all bilabelled objects of the form $(L, \lambda) \to (A, \alpha) \ot (L, \lambda')$ and $\bot$. The set $\Lambda(\C, L)$ is the family of bilabelled graphs of our interest.

\begin{example}
With $L$ the edge-less graph with $d$ vertices, $\Lambda(\mathfrak{P}_{k+1,d}, L)$ resembles $\mathcal{WL}_k^d$ as introduced in \cref{def:wlkd}. 
Differences arise since \cref{def:wlkd} includes only certain pebble forest covers arising in \cref{thm:camb}.
\end{example}

We proceed to define a suitable homomorphism representation. First, we generalise the homomorphism representation of \cite{mancinska_quantum_2019,grohe_homomorphism_2021} which does not encode any information about the labels.
To avoid confusion with homomorphism counts, write $\mor(L, G)$ for the set of morphisms $L \to G$ in $\Graph$.
Note that since we consider finite graphs only, this set is finite.
Recall the notation $\widehat{\ell}$, $\widehat{k}$, etc.\@ from \cref{fact:adjunction}.

\begin{definition}\label{def:em-hom-rep}
Let $G$ be a graph. The \emph{comonadic homomorphism representation} is the map
$
\Lambda(\C, L) \to \mathbb{C}^{\mor(L, G) \times \mor(K,G)}
$
mapping $\boldsymbol{A} = \left((L, \lambda) \to (A, \alpha) \ot (K, \kappa) \right)$ to the matrix $\boldsymbol{A}_G \in \mathbb{C}^{\mor(L, G) \times \mor(K, G)} $ with entry $(\ell, k)$ being the number of homomorphisms $A \to G$ such that the diagram
\begin{equation*}\label{eq:liftrep}
\begin{tikzcd}
(L,\lambda) \arrow[r] \arrow[rd, "\hat{\ell}"] & (A, \alpha) \arrow[d, dashed] & (K, \kappa) \arrow[l] \arrow[ld, "\hat{k}"'] \\
                               & F^\C G                  &                             
\end{tikzcd}
\end{equation*}
commutes. Furthermore, $\bot$ is mapped to the zero matrix.
\end{definition}

It remains to define a representation which also incorporates information about the structure of the labels, i.e.\@ their coalgebras.

\begin{definition} \label{def:cat-aug-hom-rep}
Let $G$ be a graph. The \emph{augmented homomorphism representation} associates with every element of $\Lambda(\C, L)$ an endomorphism of the $\mathbb{C}$-vector space
\[
\mathbb{C}^{\mor(L, G)} \otimes \bigoplus_{\substack{\C\text{-coalgebras} \\ \lambda \colon L \to \C L}} \mathbb{C}.
\]
It maps $\boldsymbol{A} = \left((L, \lambda) \to (A, \alpha) \ot (L, \kappa) \right)$ to $\widehat{\boldsymbol{A}}_G \coloneqq \boldsymbol{A}_G \otimes e_\lambda e_\kappa^T$ where $\boldsymbol{A}_G$ is defined in \cref{def:em-hom-rep} and $e_\lambda, e_\kappa \in \bigoplus_{\lambda \colon L \to \C L} \mathbb{C}$ are the $\lambda$-th and $\kappa$-th standard basis vectors respectively.
Furthermore, $\bot$ is mapped to the zero matrix.
\end{definition}

Note that we do not succeed to parallel the notation of \cref{sec:genadjmat} entirely. In \cref{sec:genadjmat}, $\boldsymbol{A}$ and $\widehat{\boldsymbol{A}}$ indicated different objects with different representations $\boldsymbol{A}_G$ and $\widehat{\boldsymbol{A}}_G$. Here, the underlying objects are the same, only the representations are different. We write $\boldsymbol{A} = ((L,\lambda) \to (A,\alpha) \ot (K,\kappa))$ to indicate bilabelled coalgebras, $\boldsymbol{A}_G$ to indicate the comonadic homomorphism representation, and $\widehat{\boldsymbol{A}}_G$ to indicate the augmented homomorphism representation.

\subsection{Operations}
\label{sec:cat-step2}
The familiar operations on bilabelled objects can also be defined in categorical terms.

\begin{definition}[Operations on Bilabelled Objects] \label{def:cat-ops}
\begin{enumerate}
\item Define the \emph{series composition} of $L \to A \leftarrow K$ and $K \to B \leftarrow M$, in symbols $(L \to A \leftarrow K) \cdot (K \to B \leftarrow M)$, as the unique morphisms $L \to C \leftarrow M$ induced by the pushout diagram in \cref{eq:seriescomp} where $C$ is the pushout.
\begin{equation}\label{eq:seriescomp}
\begin{tikzcd}
L \arrow[rd] &                      & K \arrow[ld] \arrow[rd] &                      & M \arrow[ld] \\
             & A \arrow[rd, dashed] &                         & B \arrow[ld, dashed] &              \\
             &                      & C                       &                      &             
\end{tikzcd}
\end{equation}
\item  Define the \emph{reverse} of $L \to A \ot K$, in symbols $(L \to A \ot K)^*$, as $K \to A \ot L$.
\end{enumerate}
\end{definition}

The operations in \cref{def:cat-ops} are compatible with the comonadic homomorphism representation.

\begin{lemma}\label{lem:cat-series-comp}
Let $\boldsymbol{A} = \left((L, \lambda) \to (A, \alpha) \ot (K, \kappa) \right)$ and $\boldsymbol{B} = \left((K, \kappa) \to (B, \beta) \ot (M, \mu) \right)$. 
Let $G$ be a graph.
Then $(\boldsymbol{A} \cdot \boldsymbol{B})_G = \boldsymbol{A}_G \cdot \boldsymbol{B}_G$.
\end{lemma}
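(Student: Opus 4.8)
The plan is to prove \cref{lem:cat-series-comp} by unwinding the definitions of series composition (\cref{def:cat-ops}) and the comonadic homomorphism representation (\cref{def:em-hom-rep}), and then invoking the universal property of the pushout together with the adjunction $F^\C \vdash U^\C$ from \cref{fact:adjunction}.

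First I would fix homomorphisms $\ell \colon U^\C(L,\lambda) \to G$ and $m \colon U^\C(M,\mu) \to G$, so that we must compute the $(\ell, m)$-entry of both sides. By \cref{def:em-hom-rep}, the entry $(\boldsymbol{A}_G \cdot \boldsymbol{B}_G)(\ell, m)$ equals $\sum_{k} \boldsymbol{A}_G(\ell, k)\,\boldsymbol{B}_G(k, m)$, where $k$ ranges over $\mor(K, G)$. Now $\boldsymbol{A}_G(\ell, k)$ counts morphisms $(A,\alpha) \to F^\C G$ making the left triangle of the diagram in \cref{def:em-hom-rep} commute against $\widehat{\ell}$ and $\widehat{k}$, and similarly $\boldsymbol{B}_G(k, m)$ counts morphisms $(B,\beta) \to F^\C G$ commuting against $\widehat{k}$ and $\widehat{m}$. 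The key step is then the bijection: a pair of such morphisms $f \colon (A,\alpha) \to F^\C G$ and $g \colon (B,\beta) \to F^\C G$ that agree after precomposition with the two maps out of $(K,\kappa)$ — that is, both equal $\widehat{k}$ when restricted along $(K,\kappa) \to (A,\alpha)$ and $(K,\kappa)\to(B,\beta)$ — corresponds by the universal property of the pushout $(C,\gamma)$ (which exists in $\EM(\C)$ by \cref{fact:emlimit}, since $\Graph$ has pushouts) to a unique morphism $(C,\gamma) \to F^\C G$. Summing over $k$ then exactly enumerates all morphisms $(C,\gamma)\to F^\C G$ compatible with $\widehat{\ell}$ and $\widehat{m}$, since every such morphism restricts to \emph{some} $\widehat{k}$ on the $(K,\kappa)$-cospan apex, and $k \mapsto \widehat{k}$ is the bijection of \cref{fact:adjunction}. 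This is precisely $(\boldsymbol{A}\cdot\boldsymbol{B})_G(\ell,m)$ by \cref{def:em-hom-rep} and \cref{def:cat-ops}. The degenerate case where $\boldsymbol{A}$ or $\boldsymbol{B}$ is $\bot$ is handled separately: both sides are the zero matrix by definition.

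The main obstacle I anticipate is bookkeeping the adjunction correspondence carefully: one must check that the condition ``$f$ makes the triangle commute against $\widehat{\ell}$ and $\widehat{k}$'' in $\EM(\C)$ matches up, under \cref{fact:adjunction}, with ``the underlying $\Graph$-morphism $A \to G$ restricts to $\ell$ and $k$'', and that the pushout in $\EM(\C)$ is computed on underlying graphs as the pushout in $\Graph$ (which is exactly what \cref{fact:emlimit} guarantees, since $U^\C$ creates colimits). Once these compatibilities are in place, the proof is a routine diagram chase. I would present it as: (i) dispatch the $\bot$ cases; (ii) fix $\ell, m$ and expand both entries; (iii) invoke \cref{fact:adjunction} to replace $\mor(K,G)$-indexed sums of lifts by a single set of lifts out of the pushual apex; (iv) conclude by the pushout universal property in $\EM(\C)$, using \cref{fact:emlimit} to identify it with the pushout defining $\boldsymbol{A}\cdot\boldsymbol{B}$.
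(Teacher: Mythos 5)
Your proposal is correct and follows essentially the same route as the paper's proof: both fix $\ell \in \mor(L,G)$ and $m \in \mor(M,G)$ and use the universal property of the pushout $(C,\gamma)$ in $\EM(\C)$ together with the adjunction of \cref{fact:adjunction} to identify morphisms $(C,\gamma) \to F^\C G$ compatible with $\widehat{\ell}$ and $\widehat{m}$ with triples $(k,a,b)$ indexed by $k \in \mor(K,G)$, which is exactly the matrix-product sum. Your extra remarks on \cref{fact:emlimit} and the $\bot$ case are harmless elaborations of what the paper leaves implicit.
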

\begin{proof}
Write $(C, \gamma)$ for the pushout of $a_2 \colon (K, \kappa) \to (A, \alpha)$ and $b_1 \colon (K, \kappa) \to (B, \beta)$.
Let $\ell \in \mor(L, G)$ and $m \in \mor(M, G)$.
Consider the following diagram. 
\[
\begin{tikzcd}
{(L, \lambda)} \arrow[rd, "a_1"] \arrow[rrddd, "\widehat{\ell}", bend right] &                          & {(K, \kappa)} \arrow[ld, "a_2"'] \arrow[rd, "b_1"] &                         & {(M, \mu)} \arrow[ld, "b_2"'] \arrow[llddd, "\widehat{m}"', bend left] \\
                                                                             & {(A, \alpha)} \arrow[rd] &                                                    & {(B, \beta)} \arrow[ld] &                                                                        \\
                                                                             &                          & {(C, \gamma)} \arrow[d, dashed]                    &                         &                                                                        \\
                                                                             &                          & F^\C G                                             &                         &                                                                       
\end{tikzcd}
\]
By the definition of the pushout and \cref{fact:adjunction}, the set of morphisms $(C, \gamma) \to F^\C G$ making the above diagram commute is in bijection with the set of triples $(k, a, b)$ where $k \in \mor(K, G)$, $a \colon (A, \alpha) \to F^\C G$, $b \colon (B, \beta) \to F^\C G$, such that $a a_2 = \widehat{k} = bb_1$, $\widehat{\ell} = a a_1$, and $\widehat{m} = b b_2$. This implies the claim.
\end{proof}

\Cref{lem:cat-series-comp} has immediate consequences for the augmented homomorphism representation.

\begin{corollary}
Let $\boldsymbol{A} = \left((L, \lambda) \to (A, \alpha) \ot (K, \kappa) \right)$ and $\boldsymbol{B} = \left((K, \kappa) \to (B, \beta) \ot (M, \mu) \right)$. 
Let $G$ be a graph.
Then $(\widehat{\boldsymbol{A} \cdot \boldsymbol{B}})_G = \widehat{\boldsymbol{A}}_G \cdot \widehat{\boldsymbol{B}}_G$.
\end{corollary}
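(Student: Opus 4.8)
The plan is to deduce this immediately from the previous \cref{lem:cat-series-comp} by unwinding \cref{def:cat-aug-hom-rep} and applying the mixed-product property of the Kronecker product; the whole argument is bookkeeping of labels. Recall from \cref{def:cat-aug-hom-rep} that the augmented homomorphism representation $\widehat{\boldsymbol{A}}_G$ of $\boldsymbol{A} = \bigl((L,\lambda) \to (A,\alpha) \ot (K,\kappa)\bigr)$ is exactly the comonadic homomorphism representation $\boldsymbol{A}_G$ of \cref{def:em-hom-rep} tensored with the rank-one matrix $e_\lambda e_\kappa^T$ recording the coalgebras carried by the two legs of $\boldsymbol{A}$. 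The idea is that the matrix factor multiplies correctly by \cref{lem:cat-series-comp}, while the label factor multiplies correctly because $e_\kappa^T e_\kappa = 1$.

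Concretely, the first step is to identify the labels of the series composition. By \cref{def:cat-ops} and the pushout square \cref{eq:seriescomp}, for $\boldsymbol{A} = \bigl((L,\lambda) \to (A,\alpha) \ot (K,\kappa)\bigr)$ and $\boldsymbol{B} = \bigl((K,\kappa) \to (B,\beta) \ot (M,\mu)\bigr)$ we have $\boldsymbol{A}\cdot\boldsymbol{B} = \bigl((L,\lambda) \to (C,\gamma) \ot (M,\mu)\bigr)$, where $(C,\gamma)$ is the pushout in $\EM(\C)$ of the span $(A,\alpha) \ot (K,\kappa) \to (B,\beta)$ — which exists by \cref{fact:emlimit} — and whose outer legs therefore carry $\lambda$ and $\mu$. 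Hence $(\widehat{\boldsymbol{A}\cdot\boldsymbol{B}})_G = (\boldsymbol{A}\cdot\boldsymbol{B})_G \otimes e_\lambda e_\mu^T$. The second step is the computation
\[
\widehat{\boldsymbol{A}}_G \cdot \widehat{\boldsymbol{B}}_G
= \bigl(\boldsymbol{A}_G \otimes e_\lambda e_\kappa^T\bigr)\bigl(\boldsymbol{B}_G \otimes e_\kappa e_\mu^T\bigr)
= (\boldsymbol{A}_G \boldsymbol{B}_G) \otimes e_\lambda e_\mu^T,
\]
using $(P \otimes Q)(R \otimes S) = (PR) \otimes (QS)$ and $e_\kappa^T e_\kappa = 1$. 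The third step is to invoke \cref{lem:cat-series-comp} to replace $\boldsymbol{A}_G \boldsymbol{B}_G$ by $(\boldsymbol{A}\cdot\boldsymbol{B})_G$, after which the two displays coincide and the corollary follows.

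There is essentially no obstacle: the statement is a formal consequence of \cref{lem:cat-series-comp}. The only two points worth spelling out are (i) that the pushout forming $\boldsymbol{A}\cdot\boldsymbol{B}$ is taken in $\EM(\C)$ and hence the result lies again in $\Lambda(\C, L)$ (this is \cref{fact:emlimit} together with the universal property providing the induced outer morphisms), and (ii) that the middle tensor factor really collapses to the scalar $1$ rather than to a Kronecker delta between distinct coalgebras — which is precisely what the hypothesis that $\boldsymbol{A}$ and $\boldsymbol{B}$ share the label $(K,\kappa)$ guarantees. This is in contrast with the non-categorical situation of \cref{prop:ops} in \cref{sec:genadjmat}, where the middle pebbling functions may disagree and send the composition to $\bot$; here composability is built into the hypothesis.
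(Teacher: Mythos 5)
Your proof is correct and matches the paper's intent: the paper states this corollary without proof as an immediate consequence of \cref{lem:cat-series-comp}, and your argument (mixed-product property of the Kronecker product, $e_\kappa^T e_\kappa = 1$, then the lemma) is exactly the routine verification being left implicit.
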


With series composition and reversal, $\Lambda(\C, L)$ has the structure of an \emph{involution semigroup}, i.e.\@ series composition is an associative binary operation and reversal induces an involution map ${}^* \colon \Lambda(\C, L) \to \Lambda(\C, L)$ such that $(g \cdot h)^* = h^*  \cdot g^*$ and $(g^*)^* = g$ for all $g, h \in \Lambda(\C, L)$.
The series composition of $(L, \lambda_1) \to (A, \alpha) \ot (L, \lambda_2)$ and $(L, \lambda_3) \to (A, \alpha) \ot (L, \lambda_4)$ is defined via \cref{def:cat-ops} if $\lambda_2 = \lambda_3$. Otherwise, the result is defined to be $\bot$. Also the series composition of any element with $\bot$ on either side is $\bot$. The involution operation is defined as the reversal of \cref{def:cat-ops}. Again, $(\bot)^* \coloneqq \bot$.

As before, the unlabelling operation permits us to pass from a bilabelled coalgebra to the underlying graph.

\begin{definition}
The \emph{unlabelling} of $\boldsymbol{A} = \left((L, \lambda) \to (A, \alpha) \ot (K, \kappa) \right)$, in symbols $\soe \boldsymbol{A}$, is~$A$.
\end{definition}

Unlabelling corresponds to sum-of-entries under the comonadic and the augmented homomorphism representation.

\begin{lemma}
For every graph $G$ and every $\boldsymbol{A} = \left((L, \lambda) \to (A, \alpha) \ot (K, \kappa) \right)$, \[ \hom_\Graph(\soe \boldsymbol{A}, G) = \soe \boldsymbol{A}_G = \soe \widehat{\boldsymbol{A}}_G. \]
\end{lemma}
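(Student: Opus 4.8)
The plan is to unwind both equalities through the definitions of Sections~\ref{sec:cat-step1} and~\ref{sec:cat-step2}, reducing everything to the adjunction isomorphism of \cref{fact:adjunction}. The right-hand equality $\soe \boldsymbol{A}_G = \soe \widehat{\boldsymbol{A}}_G$ is immediate: by \cref{def:cat-aug-hom-rep} we have $\widehat{\boldsymbol{A}}_G = \boldsymbol{A}_G \otimes e_\lambda e_\kappa^T$, and since $e_\lambda$ and $e_\kappa$ are standard basis vectors, $\soe(e_\lambda e_\kappa^T) = 1$; hence $\soe \widehat{\boldsymbol{A}}_G = \soe \boldsymbol{A}_G \cdot \soe(e_\lambda e_\kappa^T) = \soe \boldsymbol{A}_G$. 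This mirrors the computation already carried out in the proof of \cref{prop:ops} in the non-categorical setting. If $\boldsymbol{A} = \bot$, both sides are trivially zero.

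The substance is the left-hand equality $\hom_{\Graph}(\soe \boldsymbol{A}, G) = \soe \boldsymbol{A}_G$. Writing $\boldsymbol{A} = \left((L,\lambda) \to (A,\alpha) \ot (K,\kappa)\right)$, the sum $\soe \boldsymbol{A}_G$ is, by \cref{def:em-hom-rep}, the total number of triples $(\ell, k, f)$ with $\ell \in \mor(L,G)$, $k \in \mor(K,G)$, and $f \colon A \to G$ a homomorphism such that $f$ agrees with $\ell$ on the image of $L \to A$ and with $k$ on the image of $K \to A$ --- more precisely, such that the lifted morphism $\widehat{f} \colon (A,\alpha) \to F^\C G$ (obtained from $f$ via \cref{fact:adjunction}) makes the relevant triangle commute with $\widehat{\ell}$ and $\widehat{k}$. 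I would argue that summing over all $\ell$ and $k$ simply removes the boundary constraints: every homomorphism $f \colon A \to G$ restricts to \emph{some} $\ell$ on $L$ and \emph{some} $k$ on $K$, and these restrictions are unique once $f$ is fixed. Hence the map $(\ell, k, f) \mapsto f$ is a bijection from the indexing set of $\soe \boldsymbol{A}_G$ onto $\mor(A, G) = \hom_{\Graph}(A, G) = \hom_{\Graph}(\soe \boldsymbol{A}, G)$, using that $\soe \boldsymbol{A} = A$ by definition.

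The one point that needs care --- and which I expect to be the main obstacle --- is checking that the commutativity condition in \cref{def:em-hom-rep}, phrased in $\EM(\C)$ via the lifts $\widehat{\ell}$, $\widehat{k}$, $\widehat{f}$, is genuinely equivalent to the plain condition ``$f$ extends $\ell$ and $k$'' in $\Graph$. This is exactly where \cref{fact:adjunction} does the work: the bijection $\hom_{\Graph}(U^\C(A,\alpha), G) \cong \hom_{\EM(\C)}((A,\alpha), F^\C G)$ is natural, so precomposition with the coalgebra morphisms $(L,\lambda) \to (A,\alpha)$ and $(K,\kappa) \to (A,\alpha)$ corresponds on the $\Graph$ side to precomposition with the underlying maps $L \to A$ and $K \to A$. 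Thus $\widehat{f}$ restricts to $\widehat{\ell}$ iff $f$ restricts to $\ell$, and likewise for $k$; and conversely any pair $(\ell,k)$ compatible with a given $f$ is determined by $f$. Once this naturality bookkeeping is spelled out, the bijection is transparent and the identity follows. I would also note the degenerate case $\boldsymbol{A} = \bot$ separately, where $\soe \boldsymbol{A}$ is undefined but the statement is vacuous or both sides are set to zero by convention.
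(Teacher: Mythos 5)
Your proposal is correct and follows essentially the same route as the paper: the second equality via the observation that the Kronecker factor $e_\lambda e_\kappa^T$ contributes sum-of-entries one, and the first by using \cref{fact:adjunction} to identify $\soe \boldsymbol{A}_G$ with the count of morphisms $(A,\alpha) \to F^\C G$, i.e.\@ with $\hom_\Graph(A,G)$. The naturality bookkeeping you flag (each $f \colon A \to G$ determining a unique compatible pair $(\ell,k)$) is precisely the implicit content of the paper's appeal to \cref{fact:adjunction} and \cref{eq:liftrep}, so there is no gap.
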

\begin{proof}
By \cref{fact:adjunction}, $\hom_\Graph(A, G) = \hom_{\EM(\C)}((A, \alpha), F^\C G)$. By \cref{eq:liftrep}, this is the sum-of-entries of $\boldsymbol{A}_G$. Recall from \cref{def:cat-aug-hom-rep}, that only one entry in the second factor of the Kronecker product $\widehat{\boldsymbol{A}}_G$ is non-zero. This entry equals one and hence $\soe \boldsymbol{A}_G = \soe \widehat{\boldsymbol{A}}_G$.
\end{proof}

\subsection{Finite Generation}
\label{sec:cat-step3}

For the case of the pebbling comonad, finite generation was proven in \cref{prop:fingen} after devising a suitable set of generators. Proving finite generation for the involution semigroup of bilabelled objects over arbitrary comonads on $\Graph$, remains a problem for further investigations. 

\begin{definition}
A subset $\Gamma \subseteq \Lambda(\C, L)$ is a \emph{set of generators} for $\Lambda(\C, L)$ if for every $\widehat{\boldsymbol{A}} \in \Lambda(\C, L)$ there exist finitely many $\widehat{\boldsymbol{B}}^1, \dots, \widehat{\boldsymbol{B}}^r \in \Gamma$ such that $\widehat{\boldsymbol{A}} = \widehat{\boldsymbol{B}}^1 \cdot \dots \cdot \widehat{\boldsymbol{B}}^r$.
\end{definition}

\subsection{Matrix Equations}
\label{sec:cat-setp4}

It remains to deduce matrix equations characterising homomorphism indistinguishability over the set of graphs underlying $\Lambda(\C, L)$.
Note that finite generation is not among the assumptions of the following theorem. It is only needed to obtain a finite system of matrix equations.

\begin{theorem} \label{thm:coalgebraic-main}
Let $\C$ be a comonad on $\Graph$. Let $L \in \operatorname{im} U^\C$ be a graph such that $\Lambda(\C, L)$ admits a set of generators $\Gamma$, which is closed under reversal. Let $G$ and $H$ be graphs. Then the following are equivalent:
\begin{enumerate}
\item $G$ and $H$ are homomorphism indistinguishable over the graphs $\soe \boldsymbol{A}$ for $ \boldsymbol{A} \in \Lambda(\C, L)$.
\item There exists a pseudo-stochastic matrix $X$ such that $X \widehat{\boldsymbol{A}}_G = \widehat{\boldsymbol{A}}_H X$ for all $\widehat{\boldsymbol{A}} \in \Gamma$.
\end{enumerate}
\end{theorem}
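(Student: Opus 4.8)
\textbf{Proof proposal for \cref{thm:coalgebraic-main}.}

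The plan is to mirror the proof of \cref{thm:main-intro} given in \cref{sec:genadjmat}, replacing each concrete construction by its categorical counterpart and invoking the abstract lemmas of this appendix in place of \cref{lem:wlkd-closure,prop:ops,prop:fingen}. The representation-theoretic engine remains the sum-of-entries version of the Specht--Wiegmann theorem, \cref{thm:soe}.

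First I would establish the forward direction. Assume $G$ and $H$ are homomorphism indistinguishable over the graphs $\soe\boldsymbol{A}$, $\boldsymbol{A}\in\Lambda(\C,L)$. Enumerate $\Gamma = \{\widehat{\boldsymbol{B}}^1,\dots,\widehat{\boldsymbol{B}}^m\}$ and form the matrix sequences $\mathsf{A} \coloneqq (\widehat{\boldsymbol{B}}^1_G,\dots,\widehat{\boldsymbol{B}}^m_G)$ and $\mathsf{B} \coloneqq (\widehat{\boldsymbol{B}}^1_H,\dots,\widehat{\boldsymbol{B}}^m_H)$ via the augmented homomorphism representation. By \cref{thm:soe} it suffices to check $\soe(w_{\mathsf{A}}) = \soe(w_{\mathsf{B}})$ for every word $w\in\Gamma_{2m}$. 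Since $\Gamma$ is closed under reversal and reversal corresponds to taking adjoints (the involution-semigroup structure established above, together with the fact that $\widehat{(\cdot)}^*$ agrees with matrix adjoint on the augmented representation), each $w_{\mathsf{A}}$ is a product $\widehat{\boldsymbol{B}}^{i_1}_G\cdots\widehat{\boldsymbol{B}}^{i_r}_G$ of generators, and likewise for $H$; using the Kronecker-product formula for series composition established in this appendix, this product equals $(\widehat{\boldsymbol{B}}^{i_1}\cdots\widehat{\boldsymbol{B}}^{i_r})_G$. If the series composition evaluates to $\bot$ then both sides are $0$; otherwise the product lies in $\Lambda(\C,L)$, its unlabelling is some graph $F = \soe(\widehat{\boldsymbol{B}}^{i_1}\cdots\widehat{\boldsymbol{B}}^{i_r})$, and by the unlabelling-equals-sum-of-entries lemma we get $\soe(w_{\mathsf{A}}) = \hom_\Graph(F,G) = \hom_\Graph(F,H) = \soe(w_{\mathsf{B}})$, the middle equality being the hypothesis. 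So the pseudo-stochastic $X$ from \cref{thm:soe} witnesses the second assertion.

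For the converse, suppose $X$ is a pseudo-stochastic matrix with $X\widehat{\boldsymbol{A}}_G = \widehat{\boldsymbol{A}}_H X$ for all $\widehat{\boldsymbol{A}}\in\Gamma$. One first extracts $\abs{V(G)} = \abs{V(H)}$ from the identity-like generator: since $L\in\operatorname{im}U^\C$ admits a $\C$-coalgebra $\lambda$, the bilabelled object $(L,\lambda)\xrightarrow{\id}(L,\lambda)\xleftarrow{\id}(L,\lambda)$ lies in $\Gamma$ (or is generated by it), its augmented representation has all-ones as sum-of-entries equal to $\abs{\mor(L,G)}$, and $\boldsymbol{1}^TX\widehat{\boldsymbol{I}}_G\boldsymbol{1} = \soe\widehat{\boldsymbol{I}}_H$ gives $\abs{\mor(L,G)} = \abs{\mor(L,H)}$; taking $L$ to have an isolated vertex, or arguing via the number-of-vertices as in the proof of \cref{thm:main-intro}, forces $\abs{V(G)} = \abs{V(H)}$. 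Now let $F = \soe\boldsymbol{A}$ for some $\boldsymbol{A}\in\Lambda(\C,L)$; after padding with isolated vertices (legitimate since $G$, $H$ have the same order and $\hom$ is multiplicative over disjoint unions), we may assume $\boldsymbol{A}$ itself realises $F$. Applying finite generation, $\boldsymbol{A} = \widehat{\boldsymbol{B}}^{i_1}\cdots\widehat{\boldsymbol{B}}^{i_r}$ with all factors in $\Gamma$, so $\hom_\Graph(F,G) = \soe(\widehat{\boldsymbol{B}}^{i_1}_G\cdots\widehat{\boldsymbol{B}}^{i_r}_G)$, and the intertwining equations together with \cref{thm:soe} (or a direct computation $\boldsymbol{1}^T\widehat{\boldsymbol{B}}^{i_1}_H\cdots\widehat{\boldsymbol{B}}^{i_r}_H\boldsymbol{1} = \boldsymbol{1}^TX\widehat{\boldsymbol{B}}^{i_1}_G\cdots\widehat{\boldsymbol{B}}^{i_r}_G\boldsymbol{1}$, pushing $X$ through using pseudo-stochasticity of $X$ and $X^T$) yield $\hom_\Graph(F,G) = \hom_\Graph(F,H)$.

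The main obstacle here is \emph{not} in the logical flow, which is a faithful transcription of \cref{sec:genadjmat}, but in the hidden compatibility conditions: one must verify that series composition on $\Lambda(\C,L)$ is well-defined as an involution semigroup (that the pushout in $\EM(\C)$ exists and behaves as claimed --- guaranteed by \cref{fact:emlimit}), and crucially that the augmented homomorphism representation is multiplicative on the second Kronecker factor, i.e.\ that $e_\lambda e_\kappa^T \cdot e_{\kappa'} e_{\mu}^T$ vanishes unless $\kappa = \kappa'$, matching the $\bot$-convention. This is the analogue of the delta-function bookkeeping in the proof of \cref{prop:ops}, and it is where the choice to record the coalgebras $\lambda$ on the labels (rather than just the labelled vertices) does its real work. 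Everything else --- closure of $\Gamma$ under reversal, the unlabelling lemma, the padding argument --- is already in place above, so the proof is essentially an assembly of the stated facts.
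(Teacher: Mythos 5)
Your proposal is correct and takes essentially the same route as the paper, whose proof of \cref{thm:coalgebraic-main} is exactly this: the forward direction via \cref{thm:soe} together with $\soe\widehat{\boldsymbol{A}}_G = \soe\boldsymbol{A}_G$ and closure of $\Lambda(\C,L)$ under series composition and reversal, and the backward direction by pushing the pseudo-stochastic $X$ through a product of generators as in the proof of \cref{thm:main-intro}. One caveat: the vertex-counting and padding detour you import from \cref{thm:main-intro} is unnecessary here, since assertion (1) already ranges over exactly the unlabellings $\soe\boldsymbol{A}$ of elements of $\Lambda(\C,L)$, and its justification is not actually available in this abstract setting ($L$ is a fixed given graph, so one cannot "take $L$ to have an isolated vertex", and $\abs{\mor(L,G)}=\abs{\mor(L,H)}$ alone does not force $\abs{V(G)}=\abs{V(H)}$); deleting that step leaves a complete argument, and your implicit finite enumeration of $\Gamma$ matches the level of rigour of the paper's own appeal to \cref{thm:soe}.
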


\begin{proof}
The backward direction is easy to see, cf.\@ proof of \cref{thm:main-intro}.
The forward direction follows from \cref{thm:soe} observing that $\soe \widehat{\boldsymbol{A}}_G = \soe \boldsymbol{A}_G$ for all $\boldsymbol{A}$ and $G$.
\end{proof}

\subsection{The Strategy}

We conclude by summarising the comonadic strategy to derive homomorphism indistinguishability characterisations in terms of matrix equations.
This approach is exemplified with less notational overhead in \cref{sec:genadjmat} for the case of the pebbling comonad $\mathfrak{P}_{k,d}$.

\begin{enumerate} \label{recipe2}
\item Define a family of labelled graphs and their homomorphism matrices

Fix a graph $L$ and define $\Lambda(\C, L)$ as in \cref{sec:cat-step1} as the family of bilabelled graphs of interest. They are of the form $(L, \lambda) \to (A, \alpha) \ot (L, \lambda')$ where $\lambda$ and $\lambda'$ are coalgebras of $L$ and $A$ is a graph of interest. The representation in terms of matrices is defined in \cref{def:cat-aug-hom-rep}. It encodes the homomorphism counts and information about the labelling, i.e.\@ the coalgebras $\lambda$ and $\lambda'$.

For the pebbling comonad, \cref{ssec:Akd} features the set $\mathcal{WL}_k^d$. It comprises tuples $(\boldsymbol{F}, p_\pin, p_\pout)$ where $\boldsymbol{F}$ is a bilabelled graph possessing a certain pebble forest cover. This resembles the above definition. The functions $p_\pin$ and $p_\pout$ play the role of $\lambda$ and $\lambda'$. The object $L$ is the graph with $d$ isolated vertices placed on a path in its forest cover.

\item Define suitable combinatorial and algebraic operations

Series composition and reversal are the relevant combinatorial operations, cf.\@ \cref{sec:cat-step2}. They correspond under the augmented homomorphism representation to matrix product and transposition. 
In the coalgebraic case, closure under these operations follows essentially from basic properties of the Eilenberg--Moore category $\EM(\C)$. Remembering the coalgebras $\lambda$ and $\lambda'$ also on the algebraic side ensures that no homomorphism matrices of graphs outside the class are produced.

For the pebbling comonad, the relevant lemmata were established in \cref{sec:wlkd-step2}.

\item Prove finite generation

Finite generation as defined in \cref{sec:cat-step3} is necessary to obtain a finite system of matrix equations in the subsequent step. This has been proven in \cref{ssec:fingen} for the pebbling comonad. Proving finite generation for arbitrary comonads on $\Graph$ remains an open problem. 

\item Recover a matrix equation using algebraic and representation-theoretic techniques.

The desired matrix equations can be obtained using the representation-theoretic framework of~\cite{grohe_homomorphism_2021}, cf.\@ \cref{thm:soe}. This resulted in \cref{thm:coalgebraic-main} in the general case and \cref{thm:main-intro} in the case of the pebbling comonad.

\end{enumerate}

\subsection{Outlook}

We propose to apply this strategy to other comonads and hence to other graph classes in the future.
In \cite{abramsky_discrete_2022}, it was recently shown that for every graph class $\mathcal{F}$ which is closed under isomorphism, finite coproducts, and summands, cf.\@ ibidem, there exists a comonad $\C$ such that $F \in \mathcal{F}$ if and only if $F$ admits a $\C$-coalgebra.
Thus, it may be argued that all graph classes of interest can be characterised using a comonad.
However, the comonads constructed in \cite{abramsky_discrete_2022} are only suitable for counting homomorphisms and fail to represent the graph decompositions, which were a conceptual ingredient for proving finite generation for \cref{thm:main-intro}.

For the future, we propose to eradicate issues with finiteness in \cref{thm:coalgebraic-main} to derive a general theorem characterising homomorphism indistinguishability over graph classes characterised by a comonad in terms of matrix equations.

\end{document}